\tikzset{main node/.style={circle,fill=white,draw,minimum size=1.2cm,inner sep=0pt},}
\tikzset{text/.style={none,fill=white,draw,minimum size=1cm,inner sep=0pt},}
\newtheorem{theorem}{Theorem}[section]
\newtheorem{corollary}[theorem]{Corollary}
\newtheorem{proposition}[theorem]{Proposition}
\newtheorem{lemma}[theorem]{Lemma}
\newtheorem{observation}[theorem]{Observation}
\theoremstyle{definition}
\newtheorem{definition}[theorem]{Definition}
\DeclarePairedDelimiter\floor{\lfloor}{\rfloor} 
\DeclareMathOperator*{\argmax}{arg\,max}
\newcommand{\egalOPT}{\textup{OPT-egal}}
\newcommand{\egalSW}{\textup{SW-egal}}
\newcommand{\utilOPT}{\textup{OPT-util}}
\newcommand{\utilSW}{\textup{SW-util}}
\newcommand{\egalPrice}{\textup{Egal-PoC}}
\newcommand{\utilPrice}{\textup{Util-PoC}}
\newcommand{\alloc}{\mathcal{M}}
\newcommand{\connectedAllocsOf}[1]{C(#1)}
\newcommand{\utilityProfile}{\mathcal{U}}
\title{\bf Welfare Loss in Connected Resource Allocation}
\author[1]{Xiaohui Bei}
\author[2]{Alexander Lam}
\author[3]{Xinhang Lu}
\author[4]{Warut Suksompong}
\affil[1]{Nanyang Technological University, Singapore}
\affil[2]{Hong Kong Polytechnic University, Hong Kong}
\affil[3]{University of New South Wales, Australia}
\affil[4]{National University of Singapore, Singapore}
\date{\vspace{-10mm}}
\begin{document}

\maketitle

\begin{abstract}
We study the allocation of indivisible items that form an undirected graph and investigate the worst-case welfare loss when requiring that each agent must receive a connected subgraph.
Our focus is on both egalitarian and utilitarian welfare.
Specifically, we introduce the concept of \emph{egalitarian (resp., utilitarian) price of connectivity}, which captures the worst-case ratio between the optimal egalitarian (resp., utilitarian) welfare among all allocations and that among connected allocations.
We provide tight or asymptotically tight bounds on the price of connectivity for several large classes of graphs in the case of two agents---including graphs with vertex connectivity $1$ or $2$ and complete bipartite graphs---as well as for paths, stars, and cycles in the general case where the number of agents can be arbitrary.
\end{abstract}

\section{Introduction}

The division of resources among interested agents is a common problem in everyday life, and has been extensively studied in the fields of mathematics, economics, and computer science \citep{BramsTa96,RobertsonWe98,Moulin03}.
Several applications of this problem involve allocating \emph{indivisible items}, such as distributing artwork among museums or assigning volunteers to community service clubs.
Taking a \emph{utilitarian} standpoint, one may assign each item to an agent with the highest utility for it, so as to maximize the sum of the agents' utilities.
However, this can lead to inequitable allocations where certain agents are left with little or no utility.
For example, suppose that Alice has the same value for all items, while the other agent Bob has slightly higher value than Alice for every item except the last one, which he values much less than her.
Then, the utilitarian-optimal allocation assigns only the last item to Alice, which is arguably unfair to her.
On the other hand, a fairer allocation may be achieved by an \emph{egalitarian} approach, which maximizes the smallest utility across all agents.

In practice, there are often constraints on the feasible allocations.\footnote{We refer to the survey by \citet{Suksompong21} for an overview of constraints in resource allocation.}
One of the most important types of constraints is \emph{connectivity}, which was initially considered in the context of indivisible items by \citet{BouveretCeEl17}, and subsequently studied by a number of authors (see \Cref{sec:related-work} for further discussion of related work).
Connectivity constraints arise naturally, for example, when allocating offices in a university building to research groups---a connected set of offices facilitates communication among members of the same group---as well as when dividing retail units in a shopping mall among retailers.
Formally, the items correspond to the vertices of a connected (undirected) graph, and the bundle of items assigned to each agent must form a connected subgraph of this graph.
As one may expect, imposing connectivity requirements can lead to a loss in social welfare when compared to the optimal unconstrained welfare.
In this paper, we therefore address the following research question:
\begin{quote}
\itshape
When allocating indivisible items corresponding to vertices of a graph, what is the worst-case loss of social welfare resulting from connectivity constraints?
\end{quote}

\begin{figure}[t]
\centering
\begin{tikzpicture}
\node[main node] (1) at (-2,1) {};
\node[main node] (2) at (2,1)  {};
\node[main node] (3) at (-2,-1) {};
\node[main node] (4) at (2,-1) {};
\node[draw=none] (1a) at (-2,1.2) {$1$: $0.5$};
\node[draw=none] (1b) at (-2,0.8) {$2$: $0.2$};
\node[draw=none] (2a) at (2,1.2) {$1$: $0.0$};
\node[draw=none] (2b) at (2,0.8) {$2$: $0.4$};
\node[draw=none] (3a) at (-2,-0.8) {$1$: $0.4$};
\node[draw=none] (3b) at (-2,-1.2) {$2$: $0.0$};
\node[draw=none] (4a) at (2,-0.8) {$1$: $0.1$};
\node[draw=none] (4b) at (2,-1.2) {$2$: $0.4$};
\draw[blue] (0,1) ellipse (3cm and 1cm);
\draw[red, thick, dashed] (-2.7,1.7)--(2.7,1.7)--(2.7,-1.7)--(1.5,-1.7)--(-2.7,0.5)--cycle;
\path[draw,thick]
(1) edge node {} (2)
(2) edge node {} (4)
(3) edge node {} (2)
(3) edge node {} (4);
\end{tikzpicture}
\caption{An instance with $n=2$ agents and $m=4$ items.
The numbers in each vertex indicate the agents' utilities for the item.
The items assigned to agent~$2$ in the egalitarian-optimal (resp., utilitarian-optimal) connected allocation are those in the blue ellipse (resp., red dashed shape).}
\label{fig:example}
\end{figure}

We focus on two well-studied notions of social welfare: \emph{egalitarian welfare}, defined as the minimum among the agents' utilities, and \emph{utilitarian welfare}, defined as the sum of the agents' utilities.
For example, consider the instance in \Cref{fig:example}.
In this instance, the optimal (unconstrained) egalitarian and utilitarian welfare is $0.8$ and $1.7$, respectively.
Indeed, the optimal utilitarian welfare is attained by simply giving each item to the agent who values it higher.
In order to achieve an egalitarian welfare of at least $0.8$, each agent must receive the two items that she values at least $0.4$, leading to the same allocation.
However, the optimal egalitarian welfare subject to connectivity is $0.5$, with agent~$2$'s bundle corresponding to the blue ellipse.
Indeed, to obtain a higher egalitarian welfare, agent~$1$ would need the top left item together with at least one bottom item, but then she would also need the top right item due to the connectivity requirement, leaving agent~$2$ with utility at most $0.4$.
The optimal utilitarian welfare subject to connectivity is $1.4$, with agent~$2$'s bundle corresponding to the red dashed shape---to see this, note that the allocation that yields the optimal (unconstrained) utilitarian welfare is not connected, and assigning any item to an agent with lower value leads to a loss in utilitarian welfare of at least $0.3$, so the remaining utilitarian welfare is at most $1.7 - 0.3 = 1.4$.
Hence, due to the connectivity constraint, the optimal egalitarian welfare decreases by a factor of $0.8/0.5 = 1.6$, and the optimal utilitarian welfare decreases by a factor of $1.7/1.4 \approx 1.21$.

To quantify the potential welfare loss, we introduce the concept of \emph{egalitarian (resp., utilitarian) price of connectivity (PoC)} of a graph, defined as the worst-case ratio between the optimal egalitarian (resp., utilitarian) welfare of an arbitrary allocation and the optimal egalitarian (resp., utilitarian) welfare of a connected allocation, across all possible utility profiles of the agents.
By definition, a complete graph has a PoC of exactly~$1$ with respect to both welfare measures, and intuitively, the less connected a graph is, the higher its PoC becomes.
Determining the PoC of a graph could help central decision-makers assess whether the benefits of having connected allocations outweigh the possible loss in welfare.
Moreover, from a mathematical perspective, the PoC is an inherently interesting property of each graph.

\subsection{Our Results}

\renewcommand{\arraystretch}{1.4}

\begin{table}[t]
\centering
\begin{tabular}{@{}c|l|l@{}}
\hline
& \multicolumn{1}{c|}{Egalitarian PoC} & \multicolumn{1}{c}{Utilitarian PoC} \\
\hline \hline
\multirow{2}{*}{
\begin{tabular}{@{}l@{}}
$K_m$ with a non-empty \\ matching removed
\end{tabular}} & $\rule{0pt}{4.5ex}\begin{cases}
2 & \text{for}~L_3~\text{or}~L_5 \\
\frac{m-2}{m-3} & \text{otherwise}
\end{cases}$ & $\begin{cases}
\frac{4}{3} & \text{for}~L_3~\text{or}~L_5 \\
\frac{2m - 4}{2m - 5} & \text{otherwise}
\end{cases}$ \hfill (\Cref{thm:util:complete}) \\
& \hfill (\Cref{thm:egal:complete}) & \hfill  \\

\hline

$K_{|V_1|, |V_2|}$ ($2\le |V_1|\le |V_2|$) & $\rule{0pt}{4.5ex}\frac{|V_1|}{|V_1| - 1}$  \hfill (\Cref{thm:egal:bipartite})
& $\rule{0pt}{5.5ex}\begin{cases}
\frac{4}{3} & \text{if}~|V_1| = 2 \\
\frac{2}{2 - \frac{1}{|V_1|} - \frac{1}{|V_2|}}  & \text{otherwise}
\end{cases}
\rule[-4.5ex]{0pt}{0pt}$ (\Cref{thm:util:2-agent:complete-bipartite}) \\

\hline

Cycles & $2$ \hfill (\Cref{thm:egal:2-agent:2connected}) & $\rule{0pt}{3ex}\frac{2k}{k + 1}\rule[-1.5ex]{0pt}{0pt}$, where $k = \floor*{\frac{m}{2}}$ \hfill (\Cref{thm:util:2-agent:cycle}) \\

\hline

Trees & max.~degree \hfill (\Cref{thm:egal:2-agent-1connected}) & $\rule{0pt}{3ex}\frac{2 (m-1)}{m}\rule[-1.5ex]{0pt}{0pt}$ \hfill (\Cref{thm:util:2-agent:tree}) \\
\hline
\end{tabular}
\caption{Summary of our PoC results for $n = 2$ agents, where $m$ denotes the number of items, $K_r$ denotes the complete graph with~$r$ vertices, $K_{r,s}$ denotes the complete bipartite graph with $r$ and $s$ vertices on the two sides, $L_3$ denotes the graph $K_3$ with one edge removed, and $L_5$ denotes the graph $K_5$ with two disjoint edges removed.
\Cref{thm:egal:2-agent:2connected,thm:egal:2-agent-1connected} also apply to the larger classes of graphs with connectivity $2$ and $1$, respectively.}
\label{table:2-agent}
\end{table}

\begin{table}[t]
\centering
\begin{tabular}{@{}c|ll|ll@{}}
\hline
& \multicolumn{2}{c|}{Egalitarian PoC} & \multicolumn{2}{c}{Utilitarian PoC} \\
\hline \hline
Stars & $m-n+1$ & (\Cref{thm:egal:n-agent-star}) & $\Omega(n)$\ & (\Cref{thm:utilstarn}) \\
\hline
Paths & $\rule{0pt}{4.5ex}
\begin{cases}
m-n+1 & \text{if } n\leq m < 2n-1 \\
n & \text{if } 2n-1\leq m
\end{cases}
\rule[-3.5ex]{0pt}{0pt}$ & (\Cref{thm:egal:n-agent-path}) & $\Omega(n)$ & (\Cref{thm: pathuswlb}) \\
\hline
Cycles & $\rule{0pt}{6ex}
\begin{cases}
m-n+1 & \text{if } n\leq m < 2n-2 \\
n-1 & \text{if } 2n-2\leq m<n^2\\
n & \text{if } n^2\leq m
\end{cases}
\rule[-5ex]{0pt}{0pt}$ & (\Cref{thm:egal:n-agent-cycle}) & $\Omega(n)$ & (\Cref{thm:uswanyncycles}) \\
\hline
Any graph & $\leq m-n+1$ & (\Cref{lem:egal:n-agent-any-upper}) & $\leq n$ & (\Cref{prop:uswupperbound}) \\
\hline
\end{tabular}
\caption{Summary of our PoC results for any number of agents, where $n$ and $m$ denote the number of agents and items, respectively.}
\label{table:n-agent}
\end{table}

We derive tight or asymptotically tight bounds on the PoC for various classes of graphs.
Denote by $n$ and $m$ the number of agents and items, respectively.

In \Cref{sec:egal}, we investigate the egalitarian PoC, starting from the case of two agents and proceeding from dense graph classes to sparser ones.
For complete graphs with a non-empty matching removed, we show that with the exception of the $5$-item case, the PoC is $\frac{m-2}{m-3}$ regardless of how many edges are included in the matching.
For complete bipartite graphs, the PoC depends only on the number of vertices on the smaller side.
We then address graphs with vertex connectivity~$1$ or $2$, where the vertex connectivity of a graph refers to the smallest number~$k$ for which there exist $k$~vertices whose removal makes the graph disconnected.
For graphs with connectivity~$2$ (including cycles), the PoC is always~$2$, whereas for graphs with connectivity~$1$ (including trees), the PoC depends on the maximum number of components when a vertex is removed from the graph and, if this number is~$2$, on whether the ``block decomposition'' of the graph is a path.
In addition, we determine the PoC for trees in the case of three agents, as well as for stars, paths, and cycles when the number of agents can be arbitrary.

In \Cref{sec:util}, we turn our attention to the utilitarian PoC, again starting with two agents.
For complete graphs with a non-empty matching removed, we present a similar finding as for the egalitarian case: the PoC is independent of the number of edges in the matching.
On the other hand, we uncover differences between the two variants of PoC for the classes of complete bipartite graphs, trees, and cycles.
In particular, the utilitarian PoC for complete bipartite graphs depends on the sizes of both sides (unless one side has size~$2$, in which case the PoC is constant), while the utilitarian prices for trees and cycles depend only on the number of vertices in the graph.
Moreover, for any number of agents, we show that the utilitarian PoC is always at most $n$, and it is $\Omega(n)$ for stars, paths, and cycles when the number of vertices is large compared to~$n$.

Our results for two agents (resp., any number of agents) are summarized in Table~\ref{table:2-agent} (resp., Table~\ref{table:n-agent}). 

\subsection{Related Work}
\label{sec:related-work}

Our work is closely related to the literature of \emph{fair division}, which addresses fairness in resource allocation scenarios \citep{BramsTa96,RobertsonWe98,Moulin03}.
In that literature, the \emph{price of fairness} was introduced to quantify the worst-case welfare loss due to fairness considerations \citep{BertsimasFaTr11,CaragiannisKaKa12}, and commonly studied with respect to both utilitarian and egalitarian welfare \citep{AumannDo15,Suksompong19,BarmanBhSh20,BeiLuMa21,CelineDzKo23,LiLiLu24}.
By contrast, our work does not focus on fairness itself, but rather explores the trade-off between social welfare and connectivity constraints.
In a similar spirit, \citet{LamLiSun25} studied the loss in utilitarian and egalitarian welfare due to cardinality constraints.

Many researchers have investigated connectivity in resource allocation \citep{BouveretCeEl17,GoldbergHoSu20,GrecoSc20,LoncTr20,BiloCaFl22,CaragiannisMiSh22,Igarashi23,Lonc23,SunLi23,YuenSu24}.
Like us, several of them considered classes of graphs such as paths, stars, trees, and cycles.
Nevertheless, these authors have mostly focused on the fairness perspective, with the goal of satisfying fairness notions such as \emph{envy-freeness}.
An exception is the work by \citet{IgarashiPe19}, who examined the problem of finding a connected allocation that is \emph{Pareto-optimal}, meaning that no other connected allocation makes at least one of the agents better off and none of them worse off.
Pareto optimality can be viewed as a qualitative measure of economic efficiency, differing from the quantitative measures that we study.

The term \emph{price of connectivity (PoC)} was first used in the work of \citet{BeiIgLu22} to capture the price in terms of a fairness notion called the \emph{maximin share}.
Specifically, these authors defined the PoC as the worst-case ratio between the maximin share taken over all possible partitions of the items (into~$n$ parts) and that over all \emph{connected} partitions.
When translated to our setting, their maximin share PoC corresponds to our egalitarian PoC when agents have identical valuations.
Thus, their PoC can be lower than our PoC---for example, their PoC for graphs with connectivity~$2$ is $4/3$, whereas our egalitarian PoC for this graph class is $2$.
Another notable difference between the two versions of PoC is that for graphs with connectivity~$1$, the egalitarian PoC sometimes depends on the block decomposition, while the maximin share PoC does not.
We remark that \citet{BeiIgLu22} did not study any notions related to the utilitarian PoC.

\section{Preliminaries}

For any positive integer~$t$, let $[t] \coloneqq \{1, 2, \dots, t\}$.
Denote by~$N = [n]$ the set of~$n$ agents and~$M$ the set of~$m$ indivisible items.
There is a bijection between the items in~$M$ and the~$m$ vertices of a connected undirected graph~$G = (V,E)$; we will refer to items and vertices interchangeably.
Each agent~$i \in N$ has a non-negative utility~$u_i(\{g\})$ for each item~$g \in M$; for convenience, we sometimes write $u_i(g)$ instead of $u_i(\{g\})$.
We assume that utilities are \emph{additive}, i.e., $u_i(M') = \sum_{g \in M'} u_i(g)$ for all~$i \in N$ and~$M' \subseteq M$, as well as \emph{normalized}, that is, $u_i(M) = 1$ for all~$i \in N$;\footnote{The additivity assumption is often made in the fair division literature, and the normalization assumption is typical in research on the price of fairness \citep{CaragiannisKaKa12,AumannDo15,BeiLuMa21,CelineDzKo23}.}
we sometimes write $u_i(G)$ instead of $u_i(M)$.
Denote by~$\utilityProfile = (u_1, u_2, \dots, u_n)$ the \emph{utility profile} of the agents.
We refer to a setting with the set of agents~$N$, the set of items~$M$ and their underlying graph~$G$, and the utility profile~$\utilityProfile$ as an \emph{instance}, denoted by $I = \langle N, G, \utilityProfile \rangle$.

A \emph{bundle} is a (possibly empty) subset of items; it is called \emph{connected} if the items in the bundle form a connected subgraph of~$G$.
An \emph{allocation}~$\alloc = (M_1, M_2, \dots, M_n)$ is a partition of the items in~$M$ into $n$ bundles such that agent~$i \in N$ receives bundle~$M_i$.
Moreover, an allocation or a partition is \emph{connected} if all of its bundles are connected.
Denote by~$\connectedAllocsOf{I}$ the set of all connected allocations for instance~$I$.

In this paper, we will quantify the worst-case egalitarian and utilitarian welfare loss incurred when imposing the requirement that each agent must receive a connected bundle.
Given an instance~$I$ and an allocation~$\alloc = (M_1, M_2, \dots, M_n)$ of the instance,
\begin{itemize}
\item the \emph{egalitarian welfare} of~$\alloc$, denoted by~$\egalSW(\alloc)$, is the minimum among the agents' utilities, i.e., $\egalSW(\alloc) \coloneqq \min_{i \in N} u_i(M_i)$;

\item the \emph{utilitarian welfare} of~$\alloc$, denoted by~$\utilSW(\alloc)$, is the sum of the agents' utilities, i.e., $\utilSW(\alloc) \coloneqq \sum_{i \in N} u_i(M_i)$.
\end{itemize}

The optimal egalitarian welfare of instance~$I$, denoted by~$\egalOPT(I)$, is the maximum egalitarian welfare over all possible allocations in $I$.
The optimal utilitarian welfare and $\utilOPT(I)$ are defined analogously.

We now proceed to define the central concept of the paper---the \emph{price of connectivity (PoC)}---which captures the largest multiplicative gap between the optimal welfare among all allocations and that among all \emph{connected} allocations.

\begin{definition}[\egalPrice]
Given a graph~$G$ and a number of agents~$n$, the \emph{egalitarian price of connectivity (egalitarian PoC) of~$G$ for $n$ agents} is defined as\footnote{We interpret $\frac{0}{0}$ in this context to be equal to~$1$.
Note that $\egalOPT(I) = 0$ if and only if $\max_{\alloc \in \connectedAllocsOf{I}} \egalSW(\alloc) = 0$, because~$G$ is assumed to be connected, and both conditions are equivalent to the condition that there does not exist an allocation where each agent attains positive utility.
}
\[
\egalPrice(G, n) = \sup_{I = \langle N, G, \utilityProfile \rangle} \frac{\egalOPT(I)}{\max_{\alloc \in \connectedAllocsOf{I}} \egalSW(\alloc)},
\]
where the supremum is taken over all possible instances with the graph $G$ and $n$ agents.
For an instance~$I$, we call the ratio $\frac{\egalOPT(I)}{\max_{\alloc \in \connectedAllocsOf{I}} \egalSW(\alloc)}$ its \emph{egalitarian welfare ratio}.
\end{definition}

The utilitarian price of connectivity is defined in a similar manner.

\begin{definition}[\utilPrice]
Given a graph~$G$ and a number of agents~$n$, the \emph{utilitarian price of connectivity (utilitarian PoC) of~$G$ for $n$ agents} is defined as\footnote{Similarly to egalitarian welfare, $\utilOPT(I) = 0$ if and only if $\max_{\alloc \in \connectedAllocsOf{I}} \utilSW(\alloc) = 0$, as both conditions are equivalent to the condition that every agent has zero utility for every item.
}
\[
\utilPrice(G, n) = \sup_{I = \langle N, G, \utilityProfile \rangle} \frac{\utilOPT(I)}{\max_{\alloc \in \connectedAllocsOf{I}} \utilSW(\alloc)},
\]
where the supremum is taken over all possible instances with the graph $G$ and $n$ agents.
For an instance~$I$, we call the ratio $\frac{\utilOPT(I)}{\max_{\alloc \in \connectedAllocsOf{I}} \utilSW(\alloc)}$ its \emph{utilitarian welfare ratio}.
\end{definition}

Recall that a \emph{tree} is a connected graph without cycles.
A \emph{rooted tree} is a tree in which a specific vertex has been designated as the \emph{root vertex}. 
Given a rooted tree $T$ with root vertex $v$, a \emph{top-level subtree} $T'$ refers to a rooted tree with a root vertex $v'$ adjacent to $v$ such that a vertex belongs to $T'$ if and only if the path from it to $v'$ does not contain $v$.

\section{Egalitarian Price of Connectivity}
\label{sec:egal}

In this section, we investigate the loss of egalitarian welfare due to connectivity constraints.
We remark that given any graph~$G$, if $m < n$, then the optimal egalitarian welfare is always~$0$ and thus $\egalPrice(G, n) = 1$.
We therefore assume that $m \geq n$ for the rest of this section.

We begin by establishing a general upper bound on the egalitarian PoC for any graph~$G$.
The intuition behind the bound is that for any instance~$I$, by giving every agent her most preferred item from an egalitarian-optimal allocation, each agent receives utility at least~$\frac{\egalOPT(I)}{m-n+1}$.
This is because each bundle in an egalitarian-optimal allocation contains at most~$m - n + 1$ items.

\begin{proposition}
\label{lem:egal:n-agent-any-upper}
For any graph~$G$, it holds that $\egalPrice(G, n) \leq m - n + 1$.
\end{proposition}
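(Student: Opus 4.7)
The plan is to construct, starting from an egalitarian-optimal allocation, a connected allocation whose egalitarian welfare is at least $\egalOPT(I)/(m-n+1)$. The idea, following the intuition sketched in the text, is to designate one favorite item per agent from the optimal bundles and then extend this designation to a full connected partition, exploiting non-negativity of utilities.

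First, I would fix an egalitarian-optimal allocation $\mathcal{M}^* = (M_1^*, \ldots, M_n^*)$. If $\egalOPT(I) = 0$, the welfare ratio equals $1$ by convention and the bound is immediate, so assume $\egalOPT(I) > 0$. Then every $M_i^*$ must be non-empty, which forces $|M_i^*| \leq m - n + 1$ for each $i$. For each agent~$i$, pick $g_i \in \argmax_{g \in M_i^*} u_i(g)$; by additivity, $u_i(g_i) \geq u_i(M_i^*)/|M_i^*| \geq \egalOPT(I)/(m-n+1)$. Note that the $g_i$'s are distinct since the $M_i^*$'s partition $M$.

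Next, I would produce a connected allocation $\mathcal{M}' = (M_1', \ldots, M_n')$ with $g_i \in M_i'$ for every $i$. Granting this, the non-negativity of utilities gives $u_i(M_i') \geq u_i(g_i) \geq \egalOPT(I)/(m-n+1)$ for every $i$, so $\egalSW(\mathcal{M}') \geq \egalOPT(I)/(m-n+1)$, which yields the claimed bound on $\egalPrice(G,n)$.

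The main technical step is thus the combinatorial claim that, for any connected graph $G$ with $n$ distinct designated vertices $g_1, \ldots, g_n$, the vertex set can be partitioned into connected parts $V_1, \ldots, V_n$ with $g_i \in V_i$. I would reduce to a spanning tree $T$ of $G$ (any subtree of $T$ is still connected in $G$) and prove the claim for trees by induction on $|V(T)|$. Pick any leaf $\ell$ of $T$: if $\ell$ is not designated, remove it, apply the induction hypothesis to $T - \ell$ with the same designated set, and then append $\ell$ to whichever part contains its unique neighbor in $T$; if $\ell$ is designated, say $\ell = g_i$, recurse on $T - \ell$ with designated set $\{g_1, \ldots, g_n\} \setminus \{g_i\}$ and then declare $V_i = \{\ell\}$. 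The primary obstacle is making sure this spanning-tree/leaf-peeling argument truly produces connected parts; but this is routine since each branch strictly shrinks $|V(T)|$ and preserves connectivity of the updated parts.
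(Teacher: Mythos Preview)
Your proposal is correct and follows essentially the same approach as the paper: pick each agent's favorite item from her bundle in an egalitarian-optimal allocation, then extend to a connected allocation. The paper simply asserts that such an extension ``is possible since $G$ is connected,'' whereas you supply an explicit spanning-tree leaf-peeling argument for this step; otherwise the two proofs coincide.
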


\begin{proof}
Given any instance~$I$, let~$\alloc = (M_1, M_2, \dots, M_n)$ be an egalitarian-optimal allocation in the instance.
If $\egalOPT(I) = \min_{i \in N} u_i(M_i) = 0$, then $\egalPrice(G, n) = 1$.
We thus assume that $\min_{i \in N} u_i(M_i) > 0$, which implies that  $1 \leq |M_i| \leq m - n + 1$ for all~$i \in N$.
By giving each agent~$i \in N$ her most preferred item in~$M_i$, such a (partial) allocation gives the agent a utility of at least
\[
\frac{u_i(M_i)}{|M_i|} \geq \frac{\egalOPT(I)}{|M_i|} \geq \frac{\egalOPT(I)}{m-n+1}.
\]
We then extend the partial allocation to a complete, connected allocation in an arbitrary way; this is possible since $G$ is connected.
It follows that $\egalPrice(G, n) \leq m - n + 1$.
\end{proof}

We next introduce a reduction which will be used throughout this section to simplify our proofs for the egalitarian PoC upper bounds across various graphs.
Given a graph~$G$, denote by~$\mathcal{I}_G$ the set of instances such that each vertex in~$G$ is positively valued by at most one agent.
Note that, unlike in the rest of the paper, we do \emph{not} assume that the agents' utilities are normalized for instances in~$\mathcal{I}_G$.
This is crucial because the instance $\widehat{I}$ constructed in the proof of the lemma will not be normalized.

\begin{lemma}
\label{prop:structured-instance}
Let~$\beta \in (0, 1]$, and fix any graph~$G$.
Suppose that for every instance~$\widehat{I} \in \mathcal{I}_G$, there exists a connected allocation~$\widehat{\alloc}$ such that $\egalSW(\widehat{\alloc}) \geq \beta \cdot \egalOPT(\widehat{I})$.
Then, $\egalPrice(G, n) \leq \frac{1}{\beta}.$
\end{lemma}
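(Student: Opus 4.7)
The plan is to perform a reduction from an arbitrary instance $I = \langle N, G, \utilityProfile \rangle$ to a ``structured'' instance $\widehat{I} \in \mathcal{I}_G$ by suppressing utilities for items outside some fixed egalitarian-optimal allocation of $I$, and then transfer the connected allocation guaranteed by the hypothesis back to $I$ without losing any egalitarian welfare.

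First I would fix an egalitarian-optimal allocation $\alloc^* = (M_1^*, \dots, M_n^*)$ of $I$ and set $\tau \coloneqq \egalOPT(I)$. The case $\tau = 0$ is immediate from the $\tfrac{0}{0} = 1$ convention used in the definition of $\egalPrice$, so I would assume $\tau > 0$. The key construction is the modified utility profile $\widehat{\utilityProfile} = (\widehat{u}_1, \dots, \widehat{u}_n)$ defined by $\widehat{u}_i(g) \coloneqq u_i(g)$ if $g \in M_i^*$ and $\widehat{u}_i(g) \coloneqq 0$ otherwise, yielding $\widehat{I} = \langle N, G, \widehat{\utilityProfile} \rangle$. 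Since each item $g$ lies in exactly one bundle of $\alloc^*$, only one agent ever values $g$ positively in $\widehat{I}$, so indeed $\widehat{I} \in \mathcal{I}_G$. This is precisely the step that exploits the fact that $\mathcal{I}_G$ drops the normalization assumption, because typically $\widehat{u}_i(M) = u_i(M_i^*) < 1$.

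Next I would establish two short observations that carry the entire argument. (i) Evaluating $\alloc^*$ inside $\widehat{I}$ gives $\widehat{u}_i(M_i^*) = u_i(M_i^*) \geq \tau$ for every $i \in N$, hence $\egalOPT(\widehat{I}) \geq \tau$. (ii) By construction and additivity, $\widehat{u}_i(S) \leq u_i(S)$ for every bundle $S$. Applying the hypothesis to $\widehat{I}$ then produces a connected allocation $\widehat{\alloc} = (\widehat{M}_1, \dots, \widehat{M}_n)$ with $\widehat{u}_i(\widehat{M}_i) \geq \beta \cdot \egalOPT(\widehat{I}) \geq \beta \tau$ for every~$i$. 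Viewing $\widehat{\alloc}$ inside $I$, connectivity is a purely graph-theoretic property and is therefore preserved, and observation~(ii) upgrades each bundle's utility to $u_i(\widehat{M}_i) \geq \widehat{u}_i(\widehat{M}_i) \geq \beta\tau$. Consequently $\max_{\alloc \in \connectedAllocsOf{I}} \egalSW(\alloc) \geq \beta\tau = \beta \cdot \egalOPT(I)$, so the egalitarian welfare ratio of $I$ is at most $1/\beta$; since $I$ was arbitrary, taking the supremum yields $\egalPrice(G,n) \leq 1/\beta$.

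I do not anticipate a real obstacle here, as the reduction rests on a direct dominance-of-valuations argument. The only subtle point is conceptual: one must apply the hypothesis to the unnormalized profile $\widehat{\utilityProfile}$ rather than attempt to rescale it back to a normalized profile, since rescaling would shrink $\egalOPT(\widehat{I})$ and break the chain of inequalities linking $\egalOPT(I)$ to the welfare of $\widehat{\alloc}$ in $I$.
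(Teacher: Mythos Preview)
Your proof is correct and follows essentially the same approach as the paper: fix an egalitarian-optimal allocation of $I$, zero out each agent's utilities outside her bundle to obtain $\widehat{I}\in\mathcal{I}_G$, apply the hypothesis, and transfer the resulting connected allocation back to $I$ using the pointwise inequality $\widehat{u}_i\le u_i$. The only cosmetic difference is that the paper asserts $\egalOPT(I)=\egalOPT(\widehat{I})$ whereas you establish (and use) only the inequality $\egalOPT(\widehat{I})\ge\egalOPT(I)$, which is all that is needed.
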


\begin{proof}
It suffices to show that given any instance $I = \langle N, G, \utilityProfile \rangle$, there always exists a connected allocation with egalitarian welfare at least $\beta \cdot \egalOPT(I)$.

Denote by~$(M_1, M_2, \dots, M_n)$ an egalitarian-optimal allocation in instance~$I$.
Construct an instance~$\widehat{I}$ with the same graph~$G$ such that for each~$i \in N$, agent~$i$ values all vertices in $M_i$ the same as in $I$ but values all other vertices at~$0$.
It is clear that $\widehat{I} \in \mathcal{I}_G$.
As assumed in the lemma statement, there exists a connected allocation~$\widehat{\alloc}$ in $\widehat{I}$ such that $\egalSW(\widehat{\alloc}) \geq \beta \cdot \egalOPT(\widehat{I})$.

Observe that $I$ and $\widehat{I}$ have the same optimal egalitarian welfare, i.e., $\egalOPT(I) = \egalOPT(\widehat{I})$.
Moreover, the egalitarian welfare of $\widehat{\alloc}$ in $I$ is at least that in $\widehat{I}$.
Therefore, in instance~$I$, the connected allocation~$\widehat{\alloc}$ gives an egalitarian welfare of at least $\beta \cdot \egalOPT(\widehat{I}) = \beta \cdot \egalOPT(I)$, as desired.
\end{proof}

\Cref{prop:structured-instance} implies that when proving upper bounds on the egalitarian PoC, it suffices to focus only on the (not necessarily normalized) instances where each vertex is valued by at most one agent.
Specifically, if the egalitarian PoC across all such instances is at most $1/\beta$ for some $\beta\in(0,1]$, then so is the actual egalitarian PoC (across all normalized instances where each vertex can be valued by multiple agents).
In the remainder of this section, we first address the cases of two or three agents,\footnote{We remark that several applications of resource allocation involve a small number of agents. 
For example, see Section~1.1.1 in the work by \citet{PlautRo20} for a discussion on the importance of the two-agent case.} before proceeding to the general case.

\subsection{Two Agents}
\label{sec:egal:2agent}

For two agents, we begin by considering classes of dense graphs.
While a complete graph has a PoC of $1$ by definition, we find that as soon as we remove an edge from it, the egalitarian PoC becomes strictly larger than $1$, i.e., there exists an instance such that the optimal egalitarian welfare overall cannot be attained by a connected allocation.
On the other hand, with the exception of the $5$-item case, the egalitarian PoC does not increase further when additional disjoint edges are removed from the graph.
The reason why the graph $L_5$ (\Cref{fig:L_5}) constitutes an exceptional case is that, unlike other graphs in this class with at least $5$~vertices, there are two pairs of vertices in $L_5$---i.e., $(v_{1,1},v_{1,2})$ and $(v_{2,1},v_{2,2})$---such that there do not exist two disjoint paths with one path connecting each pair.\footnote{In other words, $L_5$ is not ``$2$-linked''---see the discussion before \Cref{lem:egal:2linked}.}
Recall that $K_m$ denotes the complete graph with~$m$ vertices.

\begin{theorem}
\label{thm:egal:complete}
Let $G$ be a complete graph with a non-empty matching removed.
Then, $\egalPrice(G, 2) = 2$ if $G$ is
\begin{itemize}
\item $L_3$, which is $K_3$ with an edge removed, or
\item $L_5$, which is $K_5$ with two disjoint edges removed,
\end{itemize}
and $\egalPrice(G, 2) = \frac{m-2}{m-3}$ otherwise.
\end{theorem}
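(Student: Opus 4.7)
The plan is to establish the theorem by treating the lower and upper bounds separately, exhibiting explicit worst-case profiles for the former and running a case analysis driven by the structure of disconnected subsets of $K_m - M$ for the latter.

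For the lower bounds I would give three constructions. In the generic regime ($m \geq 4$, $G \neq L_5$), fix any removed edge $(a,b)$; let agent~$1$ value $a$ and $b$ at $1/2$ each and let agent~$2$ value each of the remaining $m-2$ vertices uniformly at $1/(m-2)$. The unconstrained optimum of~$1$ is attained by $(\{a,b\}, V \setminus \{a,b\})$, but $a$ and $b$ are non-adjacent, so any connected allocation either cedes one of $a,b$ to agent~$2$ (leaving agent~$1$ with at most $1/2$) or gives agent~$1$ a bundle $\{a,b,v\}$ with $V \setminus \{a,b,v\}$ connected; the latter exists precisely because $G \neq L_5$, and pins agent~$2$'s utility at $(m-3)/(m-2)$, yielding the claimed ratio. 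For $L_3 = P_3$ the analogous construction (endpoints valued by one agent, middle vertex by the other) yields ratio~$2$; for $L_5$, the profile in which $u_1$ is supported uniformly on one matched pair and $u_2$ on the other forces every connected bipartition to break one of the two pairs, again yielding ratio~$2$.

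For the upper bound I would first invoke \Cref{prop:structured-instance} to restrict to instances where $V = V_1 \cup V_2 \cup V_0$ is a partition with agent~$i$ valuing only $V_i$ positively; set $\alpha_i = u_i(V_i)$ and assume both are positive. The key structural fact is that in $K_m - M$, a subset $S \subseteq V$ is disconnected if and only if $S$ is itself a removed matched pair $\{a_i,b_i\}$: indeed, if $|S| \geq 3$ and $(x,y) \in M$ with $x,y \in S$, then any third $z \in S$ is matched to at most one of $x,y$ by the matching property, and in fact to neither (since $x,y$ are already each other's match), so $x$--$z$--$y$ is a path in $G[S]$. Hence a bipartition $(A,B)$ is connected iff neither $A$ nor $B$ is a removed matched pair. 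The natural candidate allocation $(V_1, V\setminus V_1)$ then gives both agents full utility whenever both sides are connected, so I case-split on whether $V_1$ and/or $V\setminus V_1$ is a matched pair. If neither, ratio~$1$. If exactly one is, say $V_1 = \{a_i,b_i\}$, I transfer a single vertex $v^* \in V \setminus V_1$ into $A$, chosen to minimise $u_2(v^*)$ subject to $V \setminus V_1 \setminus \{v^*\}$ not being a matched pair; outside $L_5$ such a $v^*$ is always available with $u_2(v^*) \leq \alpha_2/(m-2)$, so agent~$2$ retains $\frac{m-3}{m-2}\alpha_2$, and the symmetric case is handled analogously. If both sides are matched pairs, then $m = 4$, so $(m-2)/(m-3) = 2$, and transferring the cheaper of the two $V_2$-vertices into $A$ costs at most $\alpha_2/2$, meeting the bound.

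The hard part will be the combined constraint in the one-matched-pair subcase: the transferred vertex $v^*$ must simultaneously be cheap for agent~$2$ and leave a non-matched-pair complement. This is essentially always feasible, but in $L_5$ with $V_1 = \{a_1,b_1\}$, $V_2 = \{a_2,b_2\}$, $V_0 = \{w\}$, the free option $v^* = w$ (worth $0$ to agent~$2$) leaves behind the matched pair $\{a_2,b_2\}$, while every paid option $v^* \in \{a_2,b_2\}$ costs $\alpha_2/2$ rather than the $\alpha_2/(m-2) = \alpha_2/3$ required by the bound; the resulting shortfall is exactly what pins the PoC of $L_5$ at~$2$ instead of $3/2$, with an analogous obstruction in $L_3$. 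Verifying that no such obstruction arises for any other graph in the family---in particular, checking every possible partition $(V_1, V_2, V_0)$ when $V\setminus V_1$ contains a matched pair---is the crux of the proof, and accounts for the delicate dependence on which special graphs are exceptional.
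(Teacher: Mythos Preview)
Your proposal is correct and follows essentially the same approach as the paper's proof. Both arguments invoke \Cref{prop:structured-instance}, use the same lower-bound constructions, and hinge on the same structural observation (you phrase it as ``a subset is disconnected iff it is a removed matched pair,'' the paper as ``any subset of at least three vertices is connected''); the one-vertex transfer you describe to repair the natural allocation $(V_1, V\setminus V_1)$ is exactly the move the paper makes in its final case, and your identification of $L_3$ and $L_5$ as the only graphs where the cheap transfer is blocked matches the paper's treatment of those exceptions.
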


\begin{proof}
We first handle the case $m\le 4$.
When $m = 3$, the egalitarian PoC of~$2$ for~$L_3$ follows from \Cref{thm:egal:n-agent-path} for paths.
When $m = 4$, the egalitarian PoC of~$2 = \frac{4-2}{4-3}$ for~$K_4$ with two disjoint edges removed follows from \Cref{thm:egal:n-agent-cycle} for cycles.
Since the PoC cannot increase when an edge is added to a graph, this implies that for $K_4$ with a single edge removed, the egalitarian PoC is at most~$2$.
We now provide an instance showing that the egalitarian PoC for this graph is at least~$2$, and therefore exactly~$2$.
Consider an instance~$I$ where agent~$1$ values the two vertices without an edge between them at~$1/2$ each, and agent~$2$ values the two remaining vertices at~$1/2$ each.
We have $\egalOPT(I) = 1$ and $\max_{\alloc \in \connectedAllocsOf{I}} \egalSW(\alloc) = 1/2$, so the egalitarian PoC is at least~$2$.

For the remainder of the proof, let $m \geq 5$.
We establish the lower bound and the upper bound in turn.

\begin{figure}[t]
\centering
\begin{tikzpicture}
\draw (2,4) -- (0.1,2.62) -- (3.9,2.62) -- (0.82,0.38) -- (3.18,0.38) -- (2,4) -- (0.82,0.38);
\draw (3.18,0.38) -- (0.1,2.62);
\draw (2,4) -- (3.9,2.62);
\draw[fill=white] (2,4) node (v5) {$v_5$} circle [radius = 0.35];
\draw[fill=white] (0.1,2.62) node[label=left:{$1$: $0.5$}] (v11) {$v_{1, 1}$} circle [radius = 0.35];
\draw[fill=white] (0.82,0.38) node[label=left:{$1$: $0.5$}] (v12) {$v_{1, 2}$} circle [radius = 0.35];
\draw[fill=white] (3.9,2.62) node[label=right:{$2$: $0.5$}] (v21) {$v_{2, 1}$} circle [radius = 0.35];
\draw[fill=white] (3.18,0.38) node[label=right:{$2$: $0.5$}] (v22) {$v_{2, 2}$} circle [radius = 0.35];
\end{tikzpicture}
\caption{The graph~$L_5$ and a utility profile showing that its egalitarian PoC is at least~$2$.
The number at each vertex indicates the corresponding agent's utility for the item.
All utilities not indicated are zero.}
\label{fig:L_5}
\end{figure}

\paragraph{Lower Bound (for $m \geq 5$):}
First, consider $L_5$, and denote the missing edges by $\{v_{1,1}, v_{1,2}\}$ and $\{v_{2,1}, v_{2,2}\}$ (see \Cref{fig:L_5}).
For~$i \in [2]$, let agent~$i$ value vertices $v_{i,1}$ and $v_{i,2}$ at~$1/2$ each, and let the final vertex~$v_5$ be non-valued.
Observe that either agent can obtain both of her valued vertices by taking~$v_5$, but if one agent does so, then the other agent can only obtain one of her valued vertices.
Hence, the optimal egalitarian welfare under connectivity is at most~$1/2$, and the egalitarian PoC is at least~$2$.

Consider now any graph that is not~$L_5$.
Let~$v_1$ and~$v_2$ be two vertices without an edge between them, let agent~$1$ value each of these vertices at $1/2$, and let agent~$2$ value each of the remaining $m-2$ vertices at $\frac{1}{m-2}$.
An egalitarian-optimal connected allocation assigns to agent~$1$ both of her valued vertices along with one of agent~$2$'s valued vertices, and assigns to agent~$2$ all but one of her valued vertices.
This leads to an egalitarian welfare of $\frac{m-3}{m-2}$, so the egalitarian PoC is at least $\frac{m-2}{m-3}$.

\paragraph{Upper Bound (for $m \geq 5$):}
To prove the upper bound, we show that among all possible instances, those described in the lower bound examples maximize the egalitarian welfare ratio $\frac{\egalOPT(I)}{\max_{\alloc \in \connectedAllocsOf{I}} \egalSW(\alloc)}$.
By \Cref{prop:structured-instance}, it suffices to consider instances~$I$ in which each vertex is positively valued by at most one agent; note that $I$ is not necessarily normalized.
Denote the missing edges by $\{v_{1,1}, v_{1,2}\}$, $\{v_{2,1}, v_{2,2}\}$, $\dots$, $\{v_{k,1}, v_{k,2}\}$.
If there exists $i \in [k]$ such that neither agent positively values both $v_{i,1}$ and $v_{i,2}$ at the same time, then both agents can simultaneously receive all of their valued vertices in a connected allocation, as $v_{i,1}$ is connected to every vertex except $v_{i,2}$, and similarly $v_{i,2}$ is connected to every vertex except $v_{i,1}$.
For such an instance~$I$, it holds that $\max_{\alloc \in \connectedAllocsOf{I}} \egalSW(\alloc) = \egalOPT(I)$.
Therefore, to maximize the egalitarian welfare ratio, we may assume that for each $i\in [k]$, both $v_{i,1}$ and $v_{i,2}$ are positively valued by the same agent; without loss of generality, let agent~$1$ positively value $v_{1, 1}$ and $v_{1, 2}$.
Moreover, we may assume that each agent positively values at least two vertices, since otherwise both agents can simultaneously receive all of their valued vertices in a connected allocation.

We first consider the special case where the graph is~$L_5$.
From the observations above, we may assume that agent~$1$ positively values $v_{1,1}$, $v_{1,2}$, and $v_5$, while agent~$2$ positively values $v_{2,1}$ and $v_{2,2}$.
Suppose without loss of generality that $u_1(v_{1,1}) \le u_1(v_{1,2})$.
Then, agent~$1$ can obtain utility at least $u_1(G)/2$ by taking $v_{1,2}$ and $v_5$, while agent~$2$ can receive both of her valued items by taking $v_{1,1}$, $v_{2,1}$, and $v_{2,2}$; note that the resulting allocation is connected.
Hence, the egalitarian PoC is at most $2$.

Next, we consider the general case where the graph is not~$L_5$.
Note that any subset of at least three vertices form a connected subgraph of the graph.
From the observations above, we may assume that agent~$1$ positively values $v_{1,1}$ and $v_{1,2}$, and agent~$2$ positively values at least two vertices.
We claim that if some vertex is non-valued by both agents, then both agents can receive all of their valued vertices in a connected allocation.
To see this, suppose that some vertex is non-valued by both agents; this vertex must be different from $v_{1,1}$ and $v_{1,2}$.
If the graph is~$K_5$ with a single edge removed, then agent~$2$'s valued vertices must form a connected subgraph, so agent~$1$ can take the non-valued vertex to obtain all of her valued vertices.
If $m \geq 6$ and agent~$1$ values some vertex other than $v_{1,1}$ and $v_{1,2}$, then agent~$1$'s valued vertices form a connected subgraph, and agent~$2$ can take the non-valued vertex to obtain all of her valued vertices.
If $m \geq 6$ and agent~$1$ only values $v_{1,1}$ and $v_{1,2}$, then either agent~$2$ values at least three vertices, or there are at least two non-valued vertices.
In either case, both agents can receive all of their valued vertices in a connected allocation, with agent~$1$ taking a non-valued vertex along with all of her valued vertices, and agent~$2$ taking the other non-valued vertex if one exists, or otherwise all of her valued vertices.

It remains to consider the case where every vertex is positively valued by some agent.
If both agents value at least three vertices each, then each agent's valued vertices form a connected subgraph.
Hence, assume without loss of generality that agent~$1$ values only two vertices, $v_{1,1}$ and $v_{1,2}$, and agent~$2$ values all other vertices.
Consider the allocation~$\alloc$ that assigns $v_{1,1}$, $v_{1,2}$, and agent~$2$'s least-valued vertex to agent~$1$, and the remaining vertices to agent~$2$; note that $\alloc$ is connected.
In $\alloc$, agent~$1$ receives utility $u_1(G)$, while agent~$2$ receives utility at least $\frac{m-3}{m-2}\cdot u_2(G)$.
Since $\egalOPT(I) \le \min\{u_1(G), u_2(G)\}$, the egalitarian welfare of~$\alloc$ is at least $\frac{m-3}{m-2}\cdot \egalOPT(I)$.
It follows that the egalitarian PoC is at most $\frac{m-2}{m-3}$, as desired.
\end{proof}

We remark that a complete graph with a matching removed can be viewed as a \emph{complete $k$-partite graph} $K_{n_1, n_2, \dots, n_k}$ where $n_1, \dots ,n_k \in \{1, 2\}$---the vertices are partitioned into~$k$ independent sets of size~$1$ or~$2$, and there is an edge between every pair of vertices from different independent sets~\citep[p.~41]{ChartrandZh20}.
This class forms a subclass of \emph{Tur\'{a}n graphs}~\citep[p.~108]{Bollobas98}, but is more general than \emph{hyperoctahedral graphs}~\citep[p.~17]{Biggs93}.

We now address another class of dense graphs: complete bipartite graphs.
For this class, we find that the egalitarian PoC depends only on the number of vertices on the smaller side of the graph (unless that number is~$1$).

\begin{theorem}
\label{thm:egal:bipartite}
Let $G$ be a complete bipartite graph with $x$ vertices on one side and at least $x$ vertices on the other side.
Then,
\[
\egalPrice(G, 2) = \begin{cases}
m - 1 & \text{ if } x = 1; \\
\frac{x}{x-1} & \text{ otherwise}.
\end{cases}
\]
\end{theorem}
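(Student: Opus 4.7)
The plan is to split into the cases $x = 1$ and $x \geq 2$. When $x = 1$, the graph $G$ is the star $K_{1,m-1}$, and the bound $\egalPrice(G, 2) = m-1$ is immediate from \Cref{thm:egal:n-agent-star} with $n = 2$. For the rest of the argument I focus on $x \geq 2$ and establish matching lower and upper bounds of $\frac{x}{x-1}$.

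For the \emph{lower bound}, I would build the instance in which agent~$1$ values every vertex of $V_1$ uniformly at $1/x$ (and every vertex of $V_2$ at~$0$), while agent~$2$ values every vertex of $V_2$ uniformly at $1/|V_2|$ (and every vertex of $V_1$ at~$0$). The unconstrained optimum egalitarian welfare is~$1$, attained by the (disconnected) allocation giving $V_i$ to agent~$i$. A set in $K_{x,|V_2|}$ with $x,|V_2|\geq 2$ is connected iff it is a singleton or it intersects both sides, so in any connected allocation, either one bundle is a singleton (yielding egalitarian welfare at most $\max(1/x,1/|V_2|) = 1/x$) or both bundles touch both sides, in which case the best choice $M_i = (V_i\setminus\{v_i\})\cup\{v_{3-i}\}$ yields egalitarian welfare $\min\bigl(\tfrac{x-1}{x},\tfrac{|V_2|-1}{|V_2|}\bigr) = \tfrac{x-1}{x}$, since $x \leq |V_2|$. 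Either way, the optimal connected egalitarian welfare is $(x-1)/x$, so the ratio is $\frac{x}{x-1}$.

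For the \emph{upper bound}, I invoke \Cref{prop:structured-instance} to restrict to instances where each vertex is positively valued by at most one agent, and let $T_1,T_2,T_0$ denote the vertex sets valued by agent~$1$, agent~$2$, and no one, respectively, so that $\egalOPT = \min(u_1(T_1),u_2(T_2))$. I exhibit a connected allocation with egalitarian welfare at least $\frac{x-1}{x}\egalOPT$, organized by how $T_1$ and $T_2$ distribute across the bipartition. If both $T_1$ and $T_2$ intersect both sides, any partition with $M_i \supseteq T_i$ is connected and attains full welfare. If some $T_i$ is contained in one side but $T_0$ has a vertex on the opposite side, attach that non-valued vertex to $M_i$, again achieving full welfare. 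The only cases in which the allocation must sacrifice a positively-valued vertex are those where the missing side of some $T_i$ is entirely occupied by the other agent's valued set; an averaging step applied to the smallest-value vertex (of size at most $u_i(G)/x$, since the relevant side has at least $x$ vertices) gives the desired $\frac{x-1}{x}$ guarantee. The extreme boundary case is $T_1 = V_1$, $T_2 = V_2$, $T_0 = \emptyset$ (the lower-bound instance), which I handle with the ``swap'' allocation $M_1 = (V_1\setminus\{v\})\cup\{w\}$, $M_2 = \{v\}\cup(V_2\setminus\{w\})$ for $v \in V_1$ minimizing $u_1$ and $w \in V_2$ minimizing $u_2$; both bundles span both sides and hence are connected, and averaging gives $u_1(M_1) \geq \frac{x-1}{x}u_1(G)$ and $u_2(M_2) \geq \frac{|V_2|-1}{|V_2|}u_2(G) \geq \frac{x-1}{x}u_2(G)$.

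The main obstacle is the case analysis for the upper bound, specifically identifying that the only obstruction to attaining full egalitarian welfare arises when one agent's valued vertices exhaust an entire side of the bipartition so that $T_0$ cannot supply a connector. Once this is isolated, the ``smallest-value averaging'' idea is the universal tool: we always lose at most one vertex per bundle, and its value is bounded by $1/x$ of the agent's total, matching the lower-bound ratio exactly.
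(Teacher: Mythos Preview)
Your proposal is correct and follows essentially the same approach as the paper: the $x=1$ case is deferred to the star result, the lower-bound instance is identical, and the upper bound invokes \Cref{prop:structured-instance} followed by a case analysis on how the agents' valued sets $T_1,T_2$ sit relative to the bipartition, with the ``swap the least-valued vertex'' averaging step as the key tool. The paper organizes the upper-bound cases slightly differently (it assumes agent~$1$'s valued vertices lie in~$V_1$ and then splits on whether agent~$2$ values anything in~$V_1$), but the underlying arguments coincide.
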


\begin{proof}
The case $x = 1$ follows from \Cref{thm:egal:n-agent-star} for stars, so we assume that $x \geq 2$.
Denote the two vertex sets by~$V_1$ and~$V_2$.
For the lower bound, consider an instance~$I$ where agent~$1$ values each vertex in~$V_1$ at~$1/|V_1|$ and agent~$2$ values each vertex in~$V_2$ at~$1/|V_2|$.
We have $\egalOPT(I) = 1$ and $\max_{\alloc \in \connectedAllocsOf{I}} \egalSW(\alloc) = \frac{x-1}{x}$, so the egalitarian PoC is at least $\frac{x}{x-1}$.

It remains to prove the upper bound.
By \Cref{prop:structured-instance}, it suffices to consider instances in which each vertex is positively valued by at most one agent.
Since~$G$ is a complete bipartite graph, an agent that receives a vertex on one side can also receive any subset of vertices on the other side in a connected allocation.
If both agents have valued vertices on both sides, we can obtain the optimal egalitarian welfare overall via a connected allocation.
Hence, assume without loss of generality that all of agent~$1$'s valued vertices are in $V_1$.

If agent~$2$ values some vertex in $V_1$, we let agent~$2$ take all of her valued vertices in $V_1$ along with all vertices in $V_2$ except her least valued one (possibly of value~$0$), and let agent~$1$ take the remaining vertices; note that the resulting allocation is connected.
Agent~$1$ receives all of her valued vertices, while agent~$2$ receives utility at least $\frac{|V_2|-1}{|V_2|}\cdot u_2(G) \ge \frac{x-1}{x}\cdot u_2(G)$.
Finally, suppose that agent~$2$ does not value any vertex in $V_1$, so all of her valued vertices are in $V_2$.
In this case, we let agent~$1$ take all vertices in $V_1$ except her least valued one (possibly of value~$0$), and give this vertex to agent~$2$.
Similarly, we let agent~$2$ take all vertices in $V_2$ except her least valued one, and give this vertex to agent~$1$.
Note that the resulting allocation is connected, and each agent $i\in [2]$ receives utility at least $\frac{|V_i|-1}{|V_i|}\cdot u_i(G) \ge \frac{x-1}{x}\cdot u_i(G)$.
It follows that the egalitarian PoC is at most $\frac{x}{x-1}$.
\end{proof}

Next, we present results for graphs classified by \emph{vertex connectivity}, sometimes referred to simply as \emph{connectivity}.
A graph is said to have connectivity~$k$ if there exist~$k$ vertices whose removal results in the graph being disconnected and $k$ is the smallest number with this property.

We first consider graphs with connectivity~$2$; we will show that all such graphs have an egalitarian PoC of~$2$.
To this end, we will use two lemmas, one for the upper bound and the other for the lower bound.
A \emph{bipolar ordering} of a graph is a one-to-one assignment of the integers $1,\dots,m$ to its vertices such that the vertex with each number $i\in \{2,3,\dots,m-1\}$ is adjacent to some vertex with a higher number and some vertex with a lower number.

\begin{lemma}
\label{lem:egal:bipolar}
Suppose that a graph~$G$ admits a bipolar ordering.
Then, $\egalPrice(G, 2) \leq 2$.
\end{lemma}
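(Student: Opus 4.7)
The plan is to invoke \Cref{prop:structured-instance} and reduce to proving the following: for every instance $\widehat{I}\in \mathcal{I}_G$, there is a connected allocation $\widehat{\alloc}$ with $\egalSW(\widehat{\alloc}) \ge \tfrac{1}{2}\,\egalOPT(\widehat{I})$. Fix such an $\widehat{I}$ and write $U_i \coloneqq u_i(V)$. Because each vertex is positively valued by at most one agent, simply giving each agent all of her valued vertices achieves egalitarian welfare $\min(U_1,U_2)$, so $\egalOPT(\widehat{I}) = \min(U_1,U_2)$. Call this quantity $\text{OPT}$.

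Next, I would exploit the bipolar ordering $v_1,\dots,v_m$ to generate a one-parameter family of connected bi-partitions. Let $S_k \coloneqq \{v_1,\dots,v_k\}$ for $k = 0,1,\dots,m$. A short argument from the bipolar property (every $v_i$ with $2\le i\le m-1$ has both a lower-indexed and a higher-indexed neighbor) shows that $S_k$ and $V\setminus S_k$ are each connected for every $k$: any vertex in $S_k$ can be traced by lower-indexed neighbors down to $v_1$, and any vertex in $V\setminus S_k$ can be traced by higher-indexed neighbors up to $v_m$. Define $a(k) \coloneqq u_1(S_k)$ and $b(k)\coloneqq u_2(V\setminus S_k)$, so $a$ is non-decreasing from $0$ to $U_1$ and $b$ is non-increasing from $U_2$ to $0$. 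The crucial structural fact (stemming from $\widehat{I}\in \mathcal{I}_G$) is that in each step $k\mapsto k+1$, the newly added vertex $v_{k+1}$ changes \emph{at most one} of $a$ and $b$, since it is positively valued by at most one agent.

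Now I would pinpoint the right threshold. Assuming $\text{OPT}>0$ (otherwise the bound is trivial), let $k^*$ be the smallest index with $a(k^*)\ge \text{OPT}/2$. Then $a(k^*-1) < \text{OPT}/2 \le a(k^*)$, which forces $u_1(v_{k^*})>0$; by the structural fact we get $u_2(v_{k^*})=0$ and hence $b(k^*)=b(k^*-1)$. I split into two cases based on whether $b(k^*)\ge \text{OPT}/2$:
\begin{itemize}
\item If $b(k^*)\ge \text{OPT}/2$, then assigning $S_{k^*}$ to agent~$1$ and $V\setminus S_{k^*}$ to agent~$2$ gives a connected allocation with egalitarian welfare at least $\text{OPT}/2$.
\item If $b(k^*)=b(k^*-1)<\text{OPT}/2$, then $u_2(S_{k^*-1}) = U_2 - b(k^*-1) > U_2 - \text{OPT}/2 \ge \text{OPT}/2$, and similarly $u_1(V\setminus S_{k^*-1}) = U_1 - a(k^*-1) > U_1 - \text{OPT}/2 \ge \text{OPT}/2$, using $\text{OPT}\le \min(U_1,U_2)$. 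So assigning $V\setminus S_{k^*-1}$ to agent~$1$ and $S_{k^*-1}$ to agent~$2$ yields a connected allocation with egalitarian welfare strictly greater than $\text{OPT}/2$.
\end{itemize}
In either case we produce a connected allocation of welfare at least $\text{OPT}/2$, so \Cref{prop:structured-instance} with $\beta = 1/2$ gives $\egalPrice(G,2)\le 2$.

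The only real subtlety is recognizing that the bipolar ordering combined with the $\mathcal{I}_G$ reduction guarantees that $a$ and $b$ never change simultaneously: this is what lets the same threshold $k^*$ serve both agents, since crossing the $\text{OPT}/2$ line for agent~$1$ leaves agent~$2$'s utility unchanged at that step, enabling the clean case split. The rest is routine arithmetic using $\text{OPT}=\min(U_1,U_2)$.
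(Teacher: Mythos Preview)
Your proof is correct and follows essentially the same approach as the paper: reduce via \Cref{prop:structured-instance}, use the bipolar ordering to produce connected prefix/suffix splits, and exploit the fact that each step changes at most one agent's prefix value. The only cosmetic difference is that the paper picks the least $k$ at which \emph{either} agent's prefix reaches $u_i(G)/2$ and assigns the prefix to that agent (avoiding a case split), whereas you anchor the threshold to agent~$1$ at level $\text{OPT}/2$ and then handle two cases; both work.
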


\begin{proof}
By \Cref{prop:structured-instance}, it suffices to consider instances in which each item is positively valued by at most one agent. 
Take a bipolar ordering, and consider the least $k$ such that for some agent~$i\in [2]$, the vertices $1,2,\dots,k$ together yield value at least $u_i(G)/2$.
Give these vertices to agent~$i$, and the remaining vertices to the other agent~$j$; by definition of bipolar ordering, the resulting allocation is connected.
Since each vertex is valued by at most one agent, agent $i$'s bundle must be worth less than $u_j(G)/2$ to agent $j$, so agent~$j$ receives utility at least $u_j(G)/2$.
Since both agents receive at least half of their utility for the entire set of vertices, the egalitarian PoC is at most~$2$.
\end{proof}

A graph is called \emph{$2$-linked} if for any disjoint pairs of vertices $(a,b)$ and $(c,d)$, there exist two vertex-disjoint paths, one from $a$ to $b$ and the other from $c$ to $d$.
Observe that every graph with connectivity~$2$ is not $2$-linked.
To see this, let $a$ and $b$ be vertices whose removal makes the graph disconnected, and let $c$ and $d$ be vertices belonging to different components of the resulting graph.
Then, any path between $c$ and $d$ must go through either $a$ or $b$.
Even a graph with connectivity~$3$ may not be $2$-linked, as demonstrated by the graph~$L_5$ in \Cref{fig:L_5} with $(a,b,c,d) = (v_{1,1},v_{1,2},v_{2,1},v_{2,2})$.
We remark that $2$-linked graphs and their generalizations have received significant interest from graph theory researchers \citep{Jung70,LarmanMa70,Thomassen80}.

\begin{lemma}
\label{lem:egal:2linked}
Suppose that a graph~$G$ is not $2$-linked. Then, $\egalPrice(G, 2) \geq 2$.
\end{lemma}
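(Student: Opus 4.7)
The plan is to exhibit a single witness instance achieving the ratio $2$. Since $G$ is not $2$-linked by hypothesis, I can fix two disjoint pairs of vertices $(a,b)$ and $(c,d)$ for which $G$ admits no pair of vertex-disjoint paths, one from $a$ to $b$ and the other from $c$ to $d$. I will build an instance in which agent~$1$'s entire utility lives on $\{a,b\}$ and agent~$2$'s entire utility lives on $\{c,d\}$, so that the unconstrained egalitarian optimum is $1$ while any connected allocation must shortchange at least one agent.

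Concretely, I would set $u_1(a)=u_1(b)=1/2$ and $u_1(v)=0$ for every other vertex, and symmetrically $u_2(c)=u_2(d)=1/2$ and $u_2(v)=0$ elsewhere; this gives a normalized instance $I$ with $u_1(G)=u_2(G)=1$. To show $\egalOPT(I)=1$, I would simply point to the (unconstrained) allocation that hands $\{a,b\}$ to agent~$1$, $\{c,d\}$ to agent~$2$, and distributes the remaining zero-valued vertices arbitrarily; both agents then attain utility~$1$.

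The main step is bounding the connected optimum by $1/2$. I would argue by contradiction: suppose some connected allocation $\mathcal{M}=(M_1,M_2)$ satisfies $\egalSW(\mathcal{M})>1/2$. Then agent~$1$ must receive both $a$ and $b$, so $\{a,b\}\subseteq M_1$, and similarly $\{c,d\}\subseteq M_2$. Because $M_1$ induces a connected subgraph of $G$, there is an $a$-$b$ path $P_1$ using only vertices of $M_1$; analogously, $M_2$ contains a $c$-$d$ path $P_2$. Since $M_1\cap M_2=\emptyset$, the paths $P_1$ and $P_2$ are vertex-disjoint paths in $G$, contradicting the choice of the pairs $(a,b)$ and $(c,d)$. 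Therefore $\max_{\mathcal{M}\in\connectedAllocsOf{I}}\egalSW(\mathcal{M})\le 1/2$, and dividing gives $\egalPrice(G,2)\ge 2$.

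The only point requiring a moment's care is the step in which connected bundles yield the two disjoint paths in $G$; this is immediate from the definition of a connected bundle together with the fact that bundles in an allocation are pairwise disjoint, so I do not foresee any real obstacle.
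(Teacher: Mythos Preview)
Your proposal is correct and follows essentially the same approach as the paper's own proof: both construct the instance where agent~$1$ values $a,b$ at $1/2$ each and agent~$2$ values $c,d$ at $1/2$ each, and both argue that a connected allocation with egalitarian welfare exceeding $1/2$ would force vertex-disjoint $a$--$b$ and $c$--$d$ paths, contradicting the failure of $2$-linkedness. Your write-up is slightly more explicit than the paper's in spelling out the contradiction via the disjointness of the bundles $M_1$ and $M_2$, but the underlying argument is identical.
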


\begin{proof}
Let $(a,b)$ and $(c,d)$ be disjoint pairs of vertices such that there do not exist two disjoint paths, one from $a$ to $b$ and the other from $c$ to $d$.
Let agent~$1$ value~$a$ and~$b$ at~$1/2$ each, and let agent~$2$ value~$c$ and~$d$ at~$1/2$ each.
If agent~$1$ receives utility higher than~$1/2$, her bundle must contain a path that connects $a$ to $b$.
However, this means that agent~$2$ cannot receive both $c$ and $d$ at the same time.
It follows that the optimal egalitarian welfare among connected allocations is at most $1/2$, and the egalitarian PoC is at least $2$.
\end{proof}

We are ready to establish the egalitarian PoC for graphs with connectivity~$2$.

\begin{theorem}
\label{thm:egal:2-agent:2connected}
Let $G$ be a graph with connectivity~$2$. Then, $\egalPrice(G, 2) = 2$.
\end{theorem}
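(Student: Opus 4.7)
The plan is to combine the two preparatory lemmas already established: \Cref{lem:egal:bipolar} yields the upper bound as soon as we exhibit a bipolar ordering of $G$, and \Cref{lem:egal:2linked} yields the lower bound as soon as we observe that $G$ is not $2$-linked.

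For the lower bound, I would invoke the observation made immediately before \Cref{lem:egal:2linked}: because $G$ has connectivity $2$, there exist two vertices $a, b$ whose removal disconnects $G$. Choosing any $c, d$ from two different components of $G - \{a, b\}$, any path from $c$ to $d$ must pass through $a$ or $b$, so no pair of vertex-disjoint paths can simultaneously connect $a$ to $b$ and $c$ to $d$. Thus $G$ is not $2$-linked, and \Cref{lem:egal:2linked} gives $\egalPrice(G, 2) \ge 2$.

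For the upper bound, the goal is to produce a bipolar ordering and then appeal to \Cref{lem:egal:bipolar}. The main step is a classical result of Lempel, Even, and Cederbaum: a graph on at least three vertices is $2$-connected if and only if, for every edge $\{s, t\}$, its vertices can be numbered $1, 2, \dots, m$ with $s$ receiving number $1$, $t$ receiving number $m$, and every intermediate vertex having both a lower-numbered and a higher-numbered neighbor. A graph with connectivity $2$ is, by definition, $2$-connected, so I would fix an arbitrary edge $\{s, t\}$ of $G$ and apply this theorem to obtain exactly a bipolar ordering in the sense of the paper. \Cref{lem:egal:bipolar} then yields $\egalPrice(G, 2) \le 2$, completing the proof.

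The main obstacle is the invocation of the st-numbering theorem: I expect it either to be cited directly from the graph-theory literature or to be proved concisely by an ear-decomposition argument, since every $2$-connected graph admits an open ear decomposition starting from the edge $\{s, t\}$, and one can number the vertices consistently by processing the ears in order (a vertex interior to an ear inherits a lower-numbered neighbor on one side of its ear and a higher-numbered neighbor on the other). No further case analysis beyond these two ingredients is needed, since the lemmas already reduce the problem to a purely structural statement about $G$.
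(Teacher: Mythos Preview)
Your proposal is correct and follows essentially the same route as the paper: the paper invokes \Cref{lem:egal:2linked} via the same ``not $2$-linked'' observation for the lower bound, and cites the classical existence of a bipolar ordering for $2$-connected graphs (attributed there to Maon, Schieber, and Vishkin rather than Lempel--Even--Cederbaum) to feed into \Cref{lem:egal:bipolar} for the upper bound. No additional case analysis or ear-decomposition argument appears in the paper---the bipolar-ordering fact is simply cited.
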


\begin{proof}
Any graph with connectivity~$2$ has a bipolar ordering \citep{MaonScVi86}, so the upper bound is a direct consequence of \Cref{lem:egal:bipolar}.
For the lower bound, recall from the paragraph before \Cref{lem:egal:2linked} that a graph with connectivity~$2$ is not $2$-linked.
Therefore, the lower bound follows from \Cref{lem:egal:2linked}. 
\end{proof}

Next, we consider graphs with connectivity~$1$, which include trees. 
We find that the PoC depends on the maximum number of connected components when a single vertex is removed and, if this number is~$2$, on whether the ``block decomposition'' of the graph is a path.
(In particular, if the graph is a tree, this number is simply the maximum degree among the vertices.)
A \emph{block} is a maximal subgraph with connectivity at least~$2$, and a \emph{cut vertex} is a vertex whose removal disconnects a graph.
The \emph{block decomposition} of a graph~$G$ is a bipartite graph with all blocks of~$G$ on one side and all cut vertices of~$G$ on the other side; there is an edge between a block and a cut vertex in the bipartite graph if and only if the cut vertex belongs to the block in~$G$.
See \Cref{fig:bd} for an illustration of a block decomposition.
For any connected graph~$G$, the block decomposition of~$G$ is a tree~\citep[p.~121]{BondyMu08}.

\begin{theorem}
\label{thm:egal:2-agent-1connected}
Let $G$ be a graph with connectivity~$1$.
Denote by $d$ the maximum number of connected components when a vertex is removed from~$G$.
If $d=2$, then $\egalPrice(G, 2)=2$ when the block decomposition of~$G$ is a path, and $\egalPrice(G, 2)=3$ otherwise.
If $d\geq 3$, then $\egalPrice(G, 2)=d$.
\end{theorem}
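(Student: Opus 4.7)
My plan is to split into the three sub-cases of the theorem ($d\geq 3$; $d=2$ with a path block decomposition; $d=2$ with a non-path block decomposition), matching lower bounds by explicit constructions with upper bounds by producing suitable connected allocations. Throughout the upper bound arguments, I would invoke \Cref{prop:structured-instance} to restrict to instances in which every vertex is positively valued by at most one agent.

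For the lower bounds, in the $d\geq 3$ case I would pick a vertex $v$ whose deletion yields $d$ components $C_1,\ldots,C_d$, have agent~$1$ place mass $1/d$ on one vertex inside each $C_i$, and have agent~$2$ place mass~$1$ on $v$; any connected allocation either traps agent~$1$ inside a single $C_i$ (utility at most $1/d$) or gives agent~$2$ no utility, while the unconstrained egalitarian optimum is~$1$. For the non-path $d=2$ case, the hypothesis $d=2$ forces every cut vertex to touch exactly two blocks, so a non-path block decomposition must contain a block $B$ incident to at least three cut vertices $c_1,c_2,c_3$ cutting off distinct branches $H_1,H_2,H_3$. Choosing $p_i\in H_i\setminus\{c_i\}$ and assigning agent~$1$ mass $1/3$ to each $p_i$ and agent~$2$ mass $1/3$ to each $c_i$, a short case analysis on how many $p_i$'s agent~$1$ receives (using that any $p_i$-to-$p_j$ path in $G$ traverses both $c_i$ and $c_j$) caps the egalitarian welfare at $1/3$. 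In the remaining path-block-decomposition case, $G$ has connectivity~$1$ and therefore is not $2$-linked, so \Cref{lem:egal:2linked} yields PoC at least~$2$.

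For the upper bounds, the $d=2$ path-decomposition case is cleanest: every block is either a single edge or $2$-connected, and every $2$-connected graph admits a bipolar $st$-ordering between any two prescribed vertices. Concatenating block-wise bipolar orderings along the path of blocks (using the cut vertex shared by consecutive blocks as a common pole) produces a bipolar ordering of $G$, so \Cref{lem:egal:bipolar} gives PoC at most~$2$. In the $d\geq 3$ case, I would fix a cut vertex $v$ with components $C_1,\ldots,C_k$ ($k\leq d$) of $G\setminus v$ and restrict attention to connected bipartitions of the form $(C_j,\{v\}\cup\bigcup_{i\neq j} C_i)$ or its swap; a pigeonhole argument over the $2k\leq 2d$ resulting allocations, combined with a recursive descent into any single component that concentrates most of both agents' valuations, produces a connected allocation with egalitarian welfare at least $\egalOPT(I)/d$. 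The non-path $d=2$ upper bound proceeds analogously, but the pivot is now the branching block $B$ rather than a single cut vertex: one plays the at-most-three branches $H_i$ against each other, exploiting the $2$-connectivity of $B$ to gain the extra slack needed to bring the factor from the naive $d$ down to $3$.

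The main obstacle is the upper bound in the non-path $d=2$ case, where the sharp bound~$3$ is strictly smaller than what a naive argument treating $B$ as a cut vertex of degree at least~$3$ would yield. Exploiting the rigidity of $d=2$ (which forces each cut vertex to touch exactly two blocks) to beat the branching degree of $B$, while inducting through nested branches $H_i$ that may themselves carry non-trivial block structure, requires careful bookkeeping to ensure that the recursion stays within the $d=2$ regime and that the structural invariants distinguishing $d=2$ from $d\geq 3$ are preserved at each step.
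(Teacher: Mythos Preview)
Your lower-bound constructions and the $d=2$ path-decomposition upper bound essentially match the paper (the paper just cites \citet{BiloCaFl22} for the bipolar ordering rather than rebuilding it block by block). The real divergence is in the other two upper bounds. The paper does \emph{not} treat $d\geq 3$ and $d=2$-non-path separately: it runs a single algorithm with parameter $\gamma=\max\{d,3\}$ on a spanning tree of $G$. It merges top-level subtrees joined by an edge of $G$ (the definition of $d$ guarantees such an edge whenever more than $d$ subtrees remain), descends to a minimal subtree $T'$ with $u_i(T')\geq u_i(G)/\gamma$ for some $i$, and if $G\setminus T'$ does not suffice for the other agent, \emph{re-roots} at the root of $T'$ and merges once more. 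After re-rooting, every top-level subtree is worth $<u_\ell(G)/\gamma$ to both agents, so the first merged subtree to cross the $1/\gamma$ threshold is worth $<2u_j(G)/\gamma$ to the other agent $j$; the inequality $1-2/\gamma\geq 1/\gamma$, i.e.\ $\gamma\geq 3$, then finishes. That single inequality is precisely why the non-path $d=2$ case lands at $3$.

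Your cut-vertex recursion, by contrast, has two concrete gaps. For $d=2$ non-path, the branching block $B$ can have \emph{arbitrarily many} cut vertices, not ``at-most-three'' branches; your sketch does not say how the $2$-connectivity of $B$ collapses an unbounded number of branches down to three. For $d\geq 3$, the pigeonhole over the $2k$ bipartitions does isolate a single component $C_j$ carrying more than $(1-1/d)$ of both agents' mass whenever no bipartition succeeds, but a naive recursion into $C_j$ loses a multiplicative factor at every level (you would guarantee only $u_i(C_j)/d>(1-1/d)\,u_i(G)/d$, not $u_i(G)/d$), and $C_j$ may itself be $2$-connected, leaving no cut vertex to pivot on next. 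The paper's spanning-tree search avoids both problems because it never re-solves a smaller instance---it looks for one threshold subtree in the fixed tree, and the merging step (which may use arbitrary $G$-edges, not just tree edges) is what absorbs the $2$-connected pieces.
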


\begin{proof}
We establish the lower bound and the upper bound in turn.

\paragraph{Lower Bound:}
For any $d\ge 2$, we can obtain a lower bound of $d$ as follows.
Consider a vertex whose removal results in $d$ connected components, let agent~$1$ value this vertex at~$1$, and let agent~$2$ value an arbitrary vertex in each of the $d$ components at $1/d$.
The optimal egalitarian welfare among connected allocations is $1/d$, so the egalitarian PoC is at least $d$.

\begin{figure}[t]
\centering
\begin{subfigure}{.45\linewidth}
\centering
\begin{tikzpicture}
\draw (-1.8,0) -- (1.8,0);
\draw (0,0.8) -- (0,2);
\draw[fill=white] (0,1.732/2) node (v1) {$v_1$} circle [radius=.3];
\draw[fill=white] (-1/2,0) node (v2) {$v_2$} circle [radius=.3];
\draw[fill=white] (1/2,0) node (v3) {$v_3$} circle [radius=.3];
\draw[fill=white] (0,2.166) node (v4) {} circle [radius=.3];
\draw[fill=white] (-1.8,0) node (v5) {} circle [radius=.3];
\draw[fill=white] (1.8,0) node (v6) {} circle [radius=.3];

\draw (v1) -- (v2) -- (v3) -- (v1);
\end{tikzpicture}
\caption{A graph with connectivity~$1$.}
\label{fig:bd:graph}
\end{subfigure}
\hfil
\begin{subfigure}{.45\linewidth}
\centering
\tikzstyle{vertex} = [circle,minimum size=5pt,inner sep=0pt,draw]
\begin{tikzpicture}
\node[vertex,fill] (cblock) {};
\node[vertex,fill=white,right=of cblock] (v3) {};
\node[vertex,fill,right=of v3] (v6) {};
\node[vertex,fill=white,left=of cblock] (v2) {};
\node[vertex,fill,left=of v2] (v5) {};
\node[vertex,fill=white,above=of cblock] (v1) {};
\node[vertex,fill,above=of v1] (v4) {};

\draw (v5) -- (v2) -- (cblock) -- (v3) -- (v6);
\draw (cblock) -- (v1) -- (v4);
\end{tikzpicture}
\caption{A corresponding block decomposition.}
\label{fig:bd:decomposition}
\end{subfigure}
\caption{An illustration for the proof of \Cref{thm:egal:2-agent-1connected}.
\Cref{fig:bd:graph} displays a graph such that the maximum number of connected components after removing a vertex is~$2$, i.e., $d = 2$; vertices~$v_1, v_2, v_3$ are cut vertices.
\Cref{fig:bd:decomposition} demonstrates its block decomposition, which is not a path; black vertices correspond to blocks and white vertices correspond to cut vertices.}
\label{fig:bd}
\end{figure}

It remains to show that the egalitarian PoC is at least~$3$ if $d=2$ and the block decomposition of~$G$ is not a path (see \Cref{fig:bd} for an example).
Since $d = 2$, the block decomposition consists of at least two blocks, and each cut vertex is adjacent to at most two blocks.
Recall that the block decomposition is a tree.
If all blocks have degree $1$ or $2$ in the block decomposition, then the block decomposition is a path.
Hence, there exists a block $B$ with degree at least $3$.
Let agent~$1$ value three cut vertices of~$B$ at $1/3$ each.
When these three cut vertices are removed, the resulting graph contains connected components $G_1,G_2,G_3$ disjoint from $B$.
Let agent~$2$ value a single vertex in each of $G_1,G_2,G_3$ at $1/3$.
Observe that agent~$2$ can obtain utility more than $1/3$ from a connected bundle only if the bundle includes at least two of agent~$1$'s cut vertices.
Hence, the optimal egalitarian welfare among connected allocations is $1/3$, and the egalitarian PoC is at least~$3$.

\begin{algorithm}[t]
\caption{For a graph with connectivity~$1$ and $n = 2$ agents}
\label{alg:egal:1-connected}
\DontPrintSemicolon

\KwIn{An instance~$I = \langle [2], G, \utilityProfile \rangle$.
(Each vertex in~$G$ is positively valued by at most one agent.)}

$\gamma \gets \max\{d, 3\}$, where $d$ is the maximum number of connected components resulting from the removal of a vertex.\;
Take any spanning tree~$T$ of~$G$, and root the tree at an arbitrary vertex. \;

\While{every top-level subtree of~$T$ is worth less than~$\frac{u_i(G)}{\gamma}$ for all~$i \in [2]$}{ \label{alg:egal:1-connected:firstmerge}
	Take two top-level subtrees of~$T$ that are connected by an edge in~$G$, merge them into one connected subtree with that edge, and remove the edge connecting one of the original subtrees to the root.\;
}

Take a top-level subtree~$T'$ of~$T$ such that $u_i(T') \geq \frac{u_i(G)}{\gamma}$ for some agent~$i$.\; \label{alg:egal:1-connected:first-subtree}

\While{there exists a top-level subtree $T''$ of~$T'$ worth at least $\frac{u_i(G)}{\gamma}$ to some agent~$i$}{\label{alg:egal:1-connected:further-subtree}
	$T'\leftarrow T''$\;
}

\eIf{there exists $i\in [2]$ such that $u_i(T') \geq \frac{u_i(G)}{\gamma}$ and $u_j(G \setminus T') \geq \frac{u_j(G)}{\gamma}$ for the other agent~$j$}{\Return{$T'$ for agent $i$ and $G\setminus T'$ for agent $j$} \label{alg:egal:1-connected:first-terminate}
}{
	Re-root the entire tree at the root of $T'$. \label{alg:egal:1-connected:root}
}

\While{every top-level subtree of $T$ is worth less than $\frac{u_i(G)}{\gamma}$ for all $i\in [2]$}{\label{alg:egal:1-connected:merge}
	Take two top-level subtrees of~$T$ that are connected by an edge in~$G$, merge them into one connected subtree with that edge, and remove the edge connecting one of the original subtrees to the root.\;
}

Take a top-level subtree $T'$ of $T$ such that $u_i(T')\geq \frac{u_i(G)}{\gamma}$ for some agent $i$.\;\label{alg:egal:1-connected:final}

\Return{$T'$ for agent $i$ and $G\setminus T'$ for the other agent $j$}
\end{algorithm}

\paragraph{Upper Bound:}
If $d=2$ and the block decomposition of~$G$ is a path, then~$G$ admits a bipolar ordering~\citep[Thm.~3.10]{BiloCaFl22}, so by \Cref{lem:egal:bipolar}, the egalitarian PoC is at most~$2$.
We handle the remaining cases by showing that $\egalPrice(G, 2) \leq \gamma$, where $\gamma \coloneqq \max\{d,3\}$.
By \Cref{prop:structured-instance}, it suffices to consider instances in which each vertex is positively valued by at most one agent.
We give an algorithm that produces a connected subgraph~$T'$ of~$G$ such that $T'$ is worth at least $\frac{u_i(G)}{\gamma}$ to some agent~$i$, and $G \setminus T'$ is also connected and worth at least $\frac{u_j(G)}{\gamma}$ to the other agent~$j$.
The algorithm is shown as \Cref{alg:egal:1-connected}.

We now demonstrate the correctness of the algorithm.
First, we show that the while-loop in \cref{alg:egal:1-connected:firstmerge} eventually terminates.
By the definition of~$d$, after all top-level subtrees are merged whenever possible, there will be at most~$d \le \gamma$ top-level subtrees remaining.
Since the root is positively valued by at most one agent, a top-level subtree that satisfies the condition in \cref{alg:egal:1-connected:first-subtree} exists, and the while-loop terminates.
For the same reason, the while-loop in \cref{alg:egal:1-connected:merge} eventually terminates as well.
If the algorithm terminates in \cref{alg:egal:1-connected:first-terminate}, then $T'$ and $G \setminus T'$ satisfy the utility requirements, and both of them are connected because $T$ is a spanning tree.

Suppose that the algorithm does not terminate in \cref{alg:egal:1-connected:first-terminate}.
If $u_i(T')\geq \frac{u_i(G)}{\gamma}$ for some agent~$i$ and $u_j(T')<\frac{u_j(G)}{\gamma}$ for the other agent~$j$, then $u_j(G \setminus T')>u_j(G)(1 - \frac{1}{\gamma}) \geq \frac{u_j(G)}{\gamma}$, and the algorithm would terminate in \cref{alg:egal:1-connected:first-terminate}, a contradiction.
Hence, $u_1(T')\geq \frac{u_1(G)}{\gamma}$ and $u_2(T')\geq \frac{u_2(G)}{\gamma}$.
This means that $u_1(G \setminus T')<\frac{u_1(G)}{\gamma}$ and $u_2(G \setminus T')<\frac{u_2(G)}{\gamma}$, as otherwise we would be able to allocate $G \setminus T'$ to one agent and $T'$ to the other agent in \cref{alg:egal:1-connected:first-terminate}.
Furthermore, by \cref{alg:egal:1-connected:further-subtree}, every top-level subtree of~$T'$ is worth less than $\frac{u_1(G)}{\gamma}$ to agent~$1$ and less than $\frac{u_2(G)}{\gamma}$ to agent~$2$.

When the tree is re-rooted in \cref{alg:egal:1-connected:root}, the previous $G \setminus T'$ becomes a top-level subtree of~$T$, and the other top-level subtrees of~$T$ are the same as the top-level subtrees of the previous $T'$; these subtrees are also worth less than $\frac{u_1(G)}{\gamma}$ to agent~$1$ and less than $\frac{u_2(G)}{\gamma}$ to agent~$2$.
As explained earlier, the while-loop in \cref{alg:egal:1-connected:merge} eventually terminates, so in \cref{alg:egal:1-connected:final}, $u_i(T')\geq \frac{u_i(G)}{\gamma}$ for some agent~$i$ and $u_j(T')<\frac{2u_j(G)}{\gamma}$ for the other agent~$j$.
Indeed, this is because the two top-level subtrees forming~$T'$ were worth less than $\frac{u_1(G)}{\gamma}$ to agent~$1$ and less than $\frac{u_2(G)}{\gamma}$ to agent~$2$ before merging.
It follows that $u_j(G \setminus T')> u_j(G)(1-\frac{2}{\gamma})\geq \frac{u_j(G)}{\gamma}$, so $T'$ and $G\setminus T'$ satisfy the utility requirements.
\end{proof}

While we have determined the egalitarian PoC for several classes of graphs, it remains an intriguing question to establish the PoC for every possible graph in the two-agent case.
An observation is that the PoC cannot be characterized solely in terms of the vertex connectivity.
Indeed, there exist graphs with connectivity~$3$ (\Cref{fig:L_5}) or even connectivity~$5$ \citep{Meszaros15} that are not $2$-linked---by \Cref{lem:egal:2linked}, the PoC of such graphs is at least $2$.
Moreover, since any such graph admits a bipolar ordering \citep{MaonScVi86}, combined with \Cref{lem:egal:bipolar}, the PoC of these graphs must be exactly~$2$.
On the other hand, the complete graph $K_m$ with a single edge removed has connectivity $m-2$ for each $m\ge 3$.
This means that $K_5$ has connectivity~$3$ and PoC $3/2$, whereas $K_7$ has connectivity~$5$ and PoC only $5/4$.
Hence, there exist graphs that have the same connectivity (e.g., $3$ or $5$) but different PoC.
Moreover, a graph with connectivity~$5$ can have the same PoC of~$2$ as all graphs with connectivity~$2$ due to \Cref{thm:egal:2-agent:2connected}.
In fact, even with identical valuations, the egalitarian PoC for two agents remains unknown---\citet[Conjecture~3.10]{BeiIgLu22} conjectured that the PoC can be characterized by linkedness in that case.
Settling their conjecture could be an important step toward a characterization in our setting where agents can have heterogeneous valuations.

\subsection{Three Agents}
\label{sec:egal:3agent}

Next, we consider the case of three agents.
For this case, we show that the egalitarian PoC of a tree is equal to the maximum number of connected components resulting from the removal of \emph{two} vertices; we denote this number by~$\delta$.

\begin{theorem}
\label{thm:egal:tree}
Let $G$ be a tree, and let $\delta$ be the maximum number of connected components when two vertices are removed from~$G$.
Then, $\egalPrice(G, 3) = \delta$.
\end{theorem}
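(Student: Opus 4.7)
The plan is to establish matching lower and upper bounds of $\delta$ on $\egalPrice(G, 3)$.

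For the lower bound, I would exhibit a single instance witnessing ratio $\delta$. Choose distinct vertices $u, v \in V(G)$ such that $G - \{u, v\}$ has $\delta$ connected components $C_1, \ldots, C_\delta$, which exist by definition of $\delta$. Define utilities $u_1(u) = 1$, $u_2(v) = 1$, and $u_3(w_j) = 1/\delta$ for an arbitrarily chosen $w_j \in C_j$ for each $j \in [\delta]$, with all other values set to zero. The unconstrained allocation giving each agent her positively-valued vertices yields egalitarian welfare $1$. In any connected allocation, agent~$3$'s bundle is a connected subgraph of $G$: if it avoids both $u$ and $v$, then it lies entirely within a single $C_j$ (every $C_j$-to-$C_k$ path must traverse $u$ or $v$) and so has value at most $1/\delta$; if it contains $u$ (resp.~$v$), then agent~$1$ (resp.~$2$) receives zero. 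In either case the connected egalitarian optimum is at most $1/\delta$, giving welfare ratio $\delta$.

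For the upper bound, I would first apply \Cref{prop:structured-instance} to reduce to structured instances $\widehat{I} \in \mathcal{I}_G$ in which each vertex is positively valued by at most one agent. Writing $v_0 = \egalOPT(\widehat I)$---which equals $\min_i u_i(G)$ under the structured assumption---and $t = v_0 / \delta$, the goal is to construct a connected partition $(B_1, B_2, B_3)$ with $u_i(B_i) \geq t$ for every $i$. My plan is to design a constructive algorithm in the spirit of \Cref{alg:egal:1-connected}: root the tree at an arbitrary vertex, iteratively merge top-level subtrees until a connected subtree $B$ is worth at least $t$ to some agent $i$, commit $B$ to agent $i$, and then invoke the two-agent construction underlying \Cref{thm:egal:2-agent-1connected} on the residual tree for the other two agents; if the first cut yields a residual where the two-agent step fails, re-root at the boundary and retry as in \Cref{alg:egal:1-connected}.

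The hard part will be guaranteeing that the residual two-agent subproblem meets the threshold $t$. The two-agent egalitarian PoC of a tree equals its maximum degree (when that degree is at least $3$) by \Cref{thm:egal:2-agent-1connected}, and this can strictly exceed $\delta$---e.g., for a star, $\max \deg = \delta + 1$---so \Cref{thm:egal:2-agent-1connected} cannot be invoked blindly on the residual. The key technical work will be to choose the first cut judiciously by exploiting the structure of pairs $(p, q)$ realizing $\delta$: when $(p, q)$ is a non-adjacent interior pair, I would route the cut through the $p$--$q$ path so that the residual tree has maximum degree at most $\delta$; for star-like configurations where $\delta < \max \deg$, I would instead separate out a single high-value leaf for one agent and partition the star's remaining leaves between the other two. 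Carrying out this case analysis, proving termination, and verifying the threshold $t$ in every residual subproblem is the most intricate aspect of the proof.
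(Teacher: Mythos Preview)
Your lower bound is exactly the paper's construction, and your high-level upper bound plan---peel off a connected subtree for one agent, then invoke the two-agent tree result on the remainder---is also the paper's route. The correct identification of the obstacle (the residual tree's maximum degree can exceed $\delta$, so \Cref{thm:egal:2-agent-1connected} cannot be applied na\"{\i}vely) is good.

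However, the concrete mechanics you propose do not work and miss the paper's actual technical insight. First, the ``merge top-level subtrees'' step borrowed from \Cref{alg:egal:1-connected} is vacuous for trees: that merging relies on non-tree edges of $G$ joining distinct top-level subtrees, and a tree has none. Second, and more importantly, you have not pinned down \emph{what} threshold the residual tree must meet for the two-agent step to succeed with target $t = u_i(G)/\delta$. The paper's answer is sharp: the residual must be worth at least $\tfrac{\Delta_1(G)}{\delta}\cdot u_j(G)$ to each remaining agent $j$ (this is \Cref{lem:egal:case1}, which packages the two-agent bound with the subtree's maximum degree). To guarantee this, the paper roots the tree at a maximum-degree vertex $v_1$ and \emph{descends}: it repeatedly replaces $T'$ by a top-level subtree of $T'$ as long as some such subtree is still worth $\ge u_i(G)/\delta$ to some agent. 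When descent stops, the root of $T'$ is not $v_1$, so it has at most $\Delta_2(G)-1$ children; hence $T'$ is worth at most $\tfrac{\Delta_2(G)-1}{\delta}\cdot u_j(G)$ to the other agents, and the complement meets the $\tfrac{\Delta_1(G)}{\delta}$ threshold by Observation~\ref{lem:egal:delta}. A second algorithm handles the adjacent case $\delta = \Delta_1(G)+\Delta_2(G)-2$ with one additional branching step. Your vaguer proposals (``route the cut through the $p$--$q$ path'', ``separate a high-value leaf for star-like configurations'') do not supply this control, and the star case you worry about is simply excluded up front (stars are covered by \Cref{thm:egal:n-agent-star}; once the tree is not a star, $\Delta_2(G)\ge 2$ and $\delta \ge \Delta_1(G)$).
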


We first show that~$\delta$ depends on the degrees of the two highest-degree vertices---we denote these degrees by~$\Delta_1(G)$ and~$\Delta_2(G)$, respectively---as well as whether these two vertices are adjacent to each other (it is possible that $\Delta_1(G) = \Delta_2(G)$).
In case of ties, we favor highest-degree vertices that are non-adjacent.

\begin{observation}
\label{lem:egal:delta}
Let $G$ be a tree.
Then, $\delta = \Delta_1(G) + \Delta_2(G) - 1$ if there exist two non-adjacent highest-degree vertices, and $\delta = \Delta_1(G) + \Delta_2(G) - 2$ otherwise.
\end{observation}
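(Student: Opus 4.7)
The plan is to reduce the observation to a standard counting identity and then maximize over pairs, handling the tie-breaking carefully in two cases.

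First I would establish the key identity: for any tree $T$ and any two distinct vertices $v_1, v_2$,
\[
c(v_1, v_2) \coloneqq |\text{components of } T \setminus \{v_1, v_2\}| = \deg(v_1) + \deg(v_2) - 1 - \mathbbm{1}[v_1 v_2 \in E(T)].
\]
This follows by iterating the standard fact that removing a single vertex of degree $d$ from a tree produces exactly $d$ components. Remove $v_1$ first to obtain $\deg(v_1)$ subtrees; the vertex $v_2$ lies in exactly one of them, say $C$. The degree of $v_2$ inside $C$ equals $\deg(v_2)$ if $v_1 v_2 \notin E(T)$ and $\deg(v_2) - 1$ if $v_1 v_2 \in E(T)$. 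Removing $v_2$ replaces $C$ by that many components, yielding the formula.

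Next I would maximize $c(v_1, v_2)$ over all pairs. In the case where two non-adjacent vertices realize degrees $\Delta_1$ and $\Delta_2$, taking them gives $c = \Delta_1 + \Delta_2 - 1$. Any other pair satisfies $c \le \deg(v_1) + \deg(v_2) - 1 \le \Delta_1 + \Delta_2 - 1$ if non-adjacent, and $c \le \Delta_1 + \Delta_2 - 2$ if adjacent, so the maximum is $\Delta_1 + \Delta_2 - 1$.

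The harder case is when no such non-adjacent pair exists: here every witness of $(\Delta_1, \Delta_2)$ is forced to be adjacent, and a single such pair yields $c = \Delta_1 + \Delta_2 - 2$. To show this is still optimal I would argue that no non-adjacent pair $(u, w)$ can reach $\deg(u) + \deg(w) \ge \Delta_1 + \Delta_2$. If $\Delta_1 > \Delta_2$, the unique max-degree vertex $v^*$ must be one of $u, w$ (otherwise both have degree $\le \Delta_2$, giving $c \le 2\Delta_2 - 1 \le \Delta_1 + \Delta_2 - 2$), and the other must have degree exactly $\Delta_2$ and be non-adjacent to $v^*$---contradicting the case assumption. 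If $\Delta_1 = \Delta_2 = \Delta$, then all max-degree vertices are pairwise adjacent, and since a tree has no triangle there are at most two of them; hence every vertex outside this pair has degree $\le \Delta - 1$, and a non-adjacent pair $(u, w)$ satisfies $\deg(u) + \deg(w) \le 2\Delta - 1$, giving $c \le 2\Delta - 2 = \Delta_1 + \Delta_2 - 2$.

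The main obstacle is the second case, where one must use the triangle-freeness of trees together with the precise tie-breaking convention for $\Delta_1, \Delta_2$ to rule out a non-adjacent pair beating the forced-adjacent witness. Once this is done, both halves of the claimed value of $\delta$ follow directly from the identity above.
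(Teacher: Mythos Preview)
Your proposal is correct and follows the same approach as the paper: first establish the counting identity $c(v_1,v_2)=\deg(v_1)+\deg(v_2)-1-\mathbbm{1}[v_1v_2\in E]$, then maximize over pairs. The paper compresses the maximization into the single sentence ``it is always optimal to choose two highest-degree vertices, favoring non-adjacent ones if possible,'' whereas you spell out the case analysis (in particular invoking triangle-freeness when $\Delta_1=\Delta_2$) to justify that no non-adjacent pair can beat the forced-adjacent witness in the second case---this is exactly the detail the paper leaves implicit.
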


\begin{proof}
Observe that removing two vertices with degree $d_1$ and $d_2$ results in $d_1+d_2-1$ connected components if the two vertices are non-adjacent, and $d_1+d_2-2$ connected components otherwise.
Hence, to obtain the largest number of connected components, it is always optimal to choose two highest-degree vertices, favoring non-adjacent ones if possible.
\end{proof}

We next prove a lemma which shows that if a subtree is sufficiently valued by two agents, it can be divided between these agents so that each of them receives at least $1/\delta$ times her utility for the whole graph.

\begin{lemma}
\label{lem:egal:case1}
Let~$T^*$ be a subtree of~$G$ such that two agents~$i$ and~$j$ value~$T^*$ at least~$\frac{\Delta_1(G)}{\delta} \cdot u_i(G)$ and $\frac{\Delta_1(G)}{\delta} \cdot u_j(G)$, respectively.
Then, $T^*$ can be divided between the two agents so that agent~$i$ (resp., agent~$j$) receives a connected bundle worth at least~$u_i(G) / \delta$ (resp., $u_j(G) / \delta$).
\end{lemma}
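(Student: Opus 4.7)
The plan is to leverage \Cref{prop:structured-instance} to reduce to structured instances where each vertex of $G$ is positively valued by at most one agent, so that $u_i$ and $u_j$ have disjoint supports on $T^*$. Writing $\alpha \coloneqq u_i(G)/\delta$ and $\beta \coloneqq u_j(G)/\delta$, the hypothesis becomes $u_i(T^*) \geq \Delta_1(G)\alpha$ and $u_j(T^*) \geq \Delta_1(G)\beta$, and because $T^*$ is a subtree of a tree, it is itself a tree whose maximum degree is at most $\Delta_1(G)$. I would root $T^*$ at a leaf $r$ of $T^*$, write $T_v$ for the subtree rooted at $v$ under this rooting, and pick a \emph{minimal} vertex $v^*$ with $u_i(T_{v^*}) \geq \alpha$, so that every proper descendant $w$ of $v^*$ satisfies $u_i(T_w) < \alpha$. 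Such a $v^*$ exists because $u_i(T_r) = u_i(T^*) \geq \alpha$, and the goal is to produce a single edge-cut of $T^*$ yielding two connected bundles that meet both thresholds.

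Suppose first that $v^* \neq r$. I would first attempt the direct cut $B_i = T_{v^*}$ and $B_j = T^* \setminus T_{v^*}$, which succeeds whenever $u_j(T^* \setminus T_{v^*}) \geq \beta$. Otherwise $u_j(T_{v^*}) > u_j(T^*) - \beta \geq (\Delta_1(G) - 1)\beta$, and I would branch on $u_i(v^*)$. When $u_i(v^*) = 0$, disjoint support yields $u_i(T_{v^*}) = \sum_w u_i(T_w)$; since $v^*$ has at most $\Delta_1(G) - 1$ children and each child's subtree is worth less than $\alpha$ to agent~$i$ by minimality, this sum is strictly less than $(\Delta_1(G) - 1)\alpha$, so swapping the roles gives $u_i(T^* \setminus T_{v^*}) > \alpha$ while $u_j(T_{v^*}) \geq \beta$, as required. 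When $u_i(v^*) > 0$, disjoint support instead forces $u_j(v^*) = 0$, so $u_j(T_{v^*}) = \sum_w u_j(T_w) > (\Delta_1(G) - 1)\beta$; pigeonhole over the children then produces some $w$ with $u_j(T_w) \geq \beta$, and cutting the edge $(v^*, w)$ yields $B_j = T_w$ and $B_i = T^* \setminus T_w$ with $u_i(B_i) > u_i(T^*) - \alpha \geq (\Delta_1(G) - 1)\alpha \geq \alpha$.

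The remaining case $v^* = r$ occurs exactly when the unique child $w$ of the leaf-root $r$ satisfies $u_i(T_w) < \alpha$; then $u_i(r) > (\Delta_1(G) - 1)\alpha$, so disjoint support gives $u_j(r) = 0$ and $u_j(T_w) = u_j(T^*) \geq \Delta_1(G)\beta$, and cutting the edge $(r, w)$ produces $B_i = \{r\}$ and $B_j = T_w$ with both thresholds met. The main obstacle is that a single edge-cut cannot in general balance two distinct thresholds on a branching tree; the proof overcomes this by exploiting the surplus of $(\Delta_1(G) - 1)\alpha$ and $(\Delta_1(G) - 1)\beta$ that the hypothesis leaves above the targets, together with pigeonhole over the at most $\Delta_1(G) - 1$ child subtrees of $v^*$, which is precisely what allows us either to swap the assignment or to re-cut at a child edge whenever the direct cut fails.
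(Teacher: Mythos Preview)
Your combinatorial argument is sound once disjoint supports are in place, but the opening move—citing \Cref{prop:structured-instance} to obtain that assumption—is not a valid application of that lemma as stated. \Cref{prop:structured-instance} lets you pass to structured instances when the goal is an upper bound on $\egalPrice(G,n)$; it does not directly license the assumption inside a standalone division statement about a subtree~$T^*$ with two fixed thresholds. The reduction you want \emph{can} be made rigorous, but only by first recasting the claim as the bound $\egalPrice(T^*,2)\le\Delta_1(G)$ and then applying \Cref{prop:structured-instance} to that two-agent problem on~$T^*$; alternatively, since \Cref{lem:egal:case1} is only invoked inside the proof of \Cref{thm:egal:tree}, which has already passed to structured instances, you could simply say that disjoint supports are inherited from the ambient setting. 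Without one of these fixes, the steps that infer $u_j(v^*)=0$ from $u_i(v^*)>0$ (and $u_j(r)=0$ from $u_i(r)>0$) are unjustified.

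The paper's proof takes a much shorter route: it observes that what is needed is precisely that each of the two agents can be guaranteed a $\tfrac{1}{\Delta_1(T^*)}$-fraction of her value for~$T^*$, which is exactly \Cref{thm:egal:2-agent-1connected} applied to the tree~$T^*$ (after normalizing utilities on~$T^*$); since $\Delta_1(T^*)\le\Delta_1(G)$, the desired bound $u_k(G)/\delta$ follows in two lines. Your minimal-rooted-subtree-plus-pigeonhole argument effectively re-derives the relevant content of \Cref{thm:egal:2-agent-1connected} from scratch, which is self-contained but duplicates work already packaged in the earlier theorem.
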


\begin{proof}
Consider a modified instance where the agents' utilities are scaled so that each agent has value~$1$ for $T^*$.
Note that the maximum number of components when a vertex is removed from $T^*$ is $\Delta_1(T^*)$, and this number is~$2$ only when $T^*$ is a path.
Hence, by \Cref{thm:egal:2-agent-1connected}, $T^*$ can be divided between the two agents so that each agent receives a connected bundle worth at least $\frac{1}{\Delta_1(T^*)}$.
In the original instance, the same division gives agent~$i$ and $j$ a connected bundle worth at least $\frac{u_i(T^*)}{\Delta_1(T^*)}$ and $\frac{u_j(T^*)}{\Delta_1(T^*)}$, respectively.
Since $u_i(T^*) \geq \frac{\Delta_1(G)}{\delta} \cdot u_i(G)$ and $u_j(T^*) \geq \frac{\Delta_1(G)}{\delta} \cdot u_j(G)$ by assumption, and $\Delta_1(G) \geq \Delta_1(T^*)$ because $T^*$ is a subtree of~$G$, we can divide $T^*$ between agents~$i$ and~$j$ so that the agents receive utility at least $\frac{\Delta_1(G)}{\delta \cdot \Delta_1(T^*)} \cdot u_i(G) \geq \frac{u_i(G)}{\delta}$ and $\frac{\Delta_1(G)}{\delta \cdot \Delta_1(T^*)} \cdot u_j(G) \geq \frac{u_j(G)}{\delta}$, respectively.
\end{proof}

We now establish the main result for three agents.

\begin{proof}[Proof of \Cref{thm:egal:tree}]
A tree with~$3$ or $4$ vertices must be a path or a star, which will be covered in \Cref{thm:egal:n-agent-star} (for stars) and \Cref{thm:egal:n-agent-path} (for paths).
We therefore assume that the tree has at least~$5$ vertices and is not a star, so $\delta \geq 3$ and $\Delta_2(G) \geq 2$.
Let~$v_1$ and~$v_2$ denote the vertices with degree~$\Delta_1(G)$ and~$\Delta_2(G)$, respectively, whose removal results in~$\delta$ connected components.

\paragraph{Lower Bound:}
Let agents~$1$ and~$2$ have utility~$1$ for~$v_1$ and~$v_2$, respectively, and let agent~$3$ have utility~$1/\delta$ for an arbitrary vertex in each of the $\delta$ connected components resulting from the removal of~$v_1$ and~$v_2$.
The optimal egalitarian welfare overall is~$1$, while the optimal egalitarian welfare under connectivity is~$1/\delta$.
Hence, $\egalPrice(G, 3) \geq \delta$.

\paragraph{Upper Bound:}
By \Cref{prop:structured-instance}, it suffices to consider instances where each vertex is positively valued by at most one agent.
If some agent does not positively value any vertex, we may ignore that agent and apply \Cref{thm:egal:2-agent-1connected} to the remaining two agents.
Hence, assume that $u_i(G) > 0$ for every agent~$i$.
We prove the upper bound by devising \Cref{alg:egal:treecase1,alg:egal:treecase2}, which find a subtree~$T'$ of~$G$ such that $u_i(T') \geq u_i(G) / \delta$ for some agent~$i$ and the remaining subtree~$G \setminus T'$ can be satisfactorily divided between the other two agents.
We split the algorithm and proof into two cases, depending on whether $\delta = \Delta_1(G) + \Delta_2(G) - 1$ or $\delta = \Delta_1(G) + \Delta_2(G) - 2$.

\subparagraph{Case~1: $\delta = \Delta_1(G) + \Delta_2(G) - 1$.}
By Lemma~\ref{lem:egal:case1}, it suffices to show that there exists a subtree~$T'$ of~$G$ such that:
\begin{itemize}
\item $T'$ is worth at least $u_i(G) / \delta$ to some agent~$i$, and
\item $T'$ is worth at most $\frac{\Delta_2(G) - 1}{\delta} \cdot u_j(G)$ to each agent $j \neq i$ (that is, the complement $G\setminus T'$ is worth at least $\frac{\Delta_1(G)}{\delta} \cdot u_j(G)$).
\end{itemize}
After finding $T'$, we can allocate it to agent~$i$ and divide $G \setminus T'$ between the remaining agents to produce a connected allocation.

\begin{algorithm}[t]
\caption{Subroutine for a tree and $n = 3$ agents (Case~1)}
\label{alg:egal:treecase1}
\DontPrintSemicolon

\KwIn{An instance~$I = \langle [3], G, \utilityProfile \rangle$.
(Each vertex in~$G$ is positively valued by at most one agent.)}

Root the tree at~$v_1$, the vertex with degree $\Delta_1(G)$.\;
$\delta \gets \Delta_1(G)+\Delta_2(G)-1$\;
Take a top-level subtree~$T'$ of~$T$ such that $u_i(T') \geq \frac{u_i(G)}{\delta}$ for some agent~$i$.\; \label{alg:egal:tree:firstcase1}

\While{there exists a top-level subtree~$T''$ of~$T'$ worth at least $\frac{u_i(G)}{\delta}$ to some agent~$i$}{
	$T' \gets T''$\;
}

\Return{$T'$}
\end{algorithm}

We claim that \Cref{alg:egal:treecase1} finds such a subtree~$T'$.
Firstly, the subtree in \cref{alg:egal:tree:firstcase1} always exists because~$v_1$ can only be positively valued by at most one agent, and $\delta \geq \Delta_1(G)$.
When the while-loop terminates and the algorithm returns~$T'$, it must hold that $u_i(T') \geq \frac{u_i(G)}{\delta}$ for some~$i \in [3]$.
It remains to show that $u_j(T') \leq \frac{\Delta_2(G) - 1}{\delta} \cdot u_j(G)$ for all~$j \in [3] \setminus \{i\}$.
Since the while-condition is evaluated as false for~$T'$, we have for any top-level subtree~$T''$ of~$T'$ and any~$\ell \in [3]$ that $u_\ell(T'') < \frac{u_\ell(G)}{\delta}$.
Let $v$ denote the root vertex of~$T'$.
As~$T'$ has at most $\Delta_2(G) - 1$ top-level subtrees, it holds for all~$\ell \in [3]$ that $u_\ell(T' \setminus \{v\}) < \frac{\Delta_2(G) - 1}{\delta} \cdot u_\ell(G)$.

If $u_\ell(v) = 0$ for all~$\ell \in [3]$, then $u_\ell(T') = u_\ell(T' \setminus \{v\}) < \frac{\Delta_2(G) - 1}{\delta} \cdot u_\ell(G)$ for all~$\ell \in [3]$, so $T'$ satisfies the desired condition.
Otherwise, denote by~$i^*$ the agent with $u_{i^*}(v) > 0$; recall that the other two agents value vertex~$v$ at~$0$.
If $u_{i^*}(T') \geq \frac{u_{i^*}(G)}{\delta}$, then $u_j(T') = u_j(T' \setminus \{v\}) < \frac{\Delta_2(G) - 1}{\delta} \cdot u_i(G)$ for all~$j \in [3] \setminus \{i^*\}$, as desired.
Else, $u_{i^*}(T') < \frac{u_{i^*}(G)}{\delta} \leq \frac{\Delta_2(G) - 1}{\delta} \cdot u_{i^*}(G)$, where the latter inequality holds because $\Delta_2(G) \geq 2$.
Thus, some agent~$\ell \in [3] \setminus \{i^*\}$ values~$T'$ at least~$\frac{u_\ell(G)}{\delta}$, and for the agent (denoted as~$k$) other than~$i^*$ and~$\ell$, we have $u_k(T') = u_k(T' \setminus \{v\}) < \frac{\Delta_2(G) - 1}{\delta} \cdot u_k(G)$, as desired.

\subparagraph{Case~2: $\delta = \Delta_1(G) + \Delta_2(G) - 2$.}
\begin{algorithm}[t]
\caption{Subroutine for a tree and $n = 3$ agents (Case~2)}
\label{alg:egal:treecase2}
\DontPrintSemicolon

\KwIn{An instance~$I = \langle [3], G, \utilityProfile \rangle$.
(Each vertex in~$G$ is positively valued by at most one agent.)}

Root the tree at~$v_1$, the vertex with degree $\Delta_1(G)$.\;
$\delta \gets \Delta_1(G) + \Delta_2(G) - 2$\;
Take a top-level subtree~$T'$ of~$T$ such that $u_i(T') \geq \frac{u_i(G)}{\delta}$ for some agent~$i$.\; \label{alg:egal:tree:firstcase2}

\If{the root of~$T'$ has degree~$\Delta_2(G)$}{ \label{alg:egal:tree:specialcase}
	\eIf{there exists~$i\in [3]$ such that $u_i(T') \geq \frac{u_i(G)}{\delta}$ and $u_j(T') \leq \frac{\Delta_2(G) - 1}{\delta} \cdot u_j(G)$ for all~$j \in [3]\setminus\{i\}$}{
		\Return{$T'$} \label{alg:egal:tree:case2specialend}
	}{
		Set~$T''$ to be a top-level subtree of~$T'$ such that $u_i(T'') \geq \frac{u_i(G)}{\delta}$ for some agent~$i$.\; \label{alg:egal:tree:case2specialsubtree}
		$T' \gets T''$\;
	}
}

\While{there exists a top-level subtree~$T''$ of~$T'$ such that $u_i(T'') \geq \frac{u_i(G)}{\delta}$ for some agent~$i$}{
	$T' \gets T''$\;
}

\Return{$T'$}
\end{algorithm}

Similarly to Case~1, we find a subtree~$T'$ of~$G$ that is worth at least~$u_i(G) / \delta$ to some agent~$i$, such that $G \setminus T'$ is sufficiently valued by both remaining agents.
Root the tree~$G$ at~$v_1$, the vertex with degree~$\Delta_1(G)$.
Recall that in this case, the two highest-degree vertices are adjacent to each other.
This means that any vertex of distance at least two from~$v_1$ has degree at most~$\Delta_2(G) - 1$, since otherwise we would have two non-adjacent highest-degree vertices.

We claim that \Cref{alg:egal:treecase2} finds a desired subtree~$T'$.
To begin with, the (first) subtree~$T'$ in \cref{alg:egal:tree:firstcase2} always exists because~$v_1$ can only be positively valued by at most one agent, and $\delta \geq \Delta_1(G)$.
Let the root vertex of~$T'$ be denoted as $\widehat{v}$. 
We next distinguish cases based on whether $\widehat{v}$ has degree~$\Delta_2(G)$ or not.

We first address the special case in which \cref{alg:egal:tree:specialcase} is evaluated as true (i.e., $\deg(\widehat{v}) = \Delta_2(G)$) and the inner if-condition is met.
We allocate the subtree~$T'$ returned in \cref{alg:egal:tree:case2specialend} to the agent~$i$ identified in the inner if-condition; this agent receives utility at least~$u_i(G) / \delta$.
We now show that the remaining subtree~$T^* \coloneqq G \setminus T'$ can be divided between the remaining agents so that each of them receives a sufficiently valued connected bundle.
By the inner if-condition, for each agent $j \in [3]\setminus\{i\}$, we have $u_j(T') \leq \frac{\Delta_2(G) - 1}{\delta} \cdot u_j(G)$, which means that
\[
u_j(T^*) \geq \left( 1 - \frac{\Delta_2(G) - 1}{\delta} \right) \cdot u_j(G) = \frac{\Delta_1(G) - 1}{\delta} \cdot u_j(G).
\]
Since no two highest-degree vertices are non-adjacent, we have $\Delta_1(T^*) \leq \Delta_1(G) - 1$.
Combining this with (the proof of) \Cref{thm:egal:2-agent-1connected}, we deduce that $T^*$ can be divided between the agents in~$[3]\setminus \{i\}$ so that each agent $j\in [3]\setminus \{i\}$ receives a connected bundle worth at least $\frac{\Delta_1(G) - 1}{\delta \cdot \Delta_1(T^*)} \cdot u_j(G) \geq \frac{u_j(G)}{\delta}$.

Next, suppose that the inner if-condition is evaluated as false.
We claim that a top-level subtree~$T''$ of $T'$ such that $u_i(T'') \geq u_i(G) / \delta$ for some agent~$i$ always exists.
Suppose for contradiction that every subtree of~$T'$ is worth strictly less than $u_i(G) / \delta$ to each agent~$i$, and thus $u_i(T' \setminus \{\widehat{v}\}) < \frac{\Delta_2(G) - 1}{\delta} \cdot u_i(G)$.
If no agent values vertex $\widehat{v}$ positively, then $u_i(T') = u_i(T'\setminus\{\widehat{v}\}) < \frac{\Delta_2(G) - 1}{\delta} \cdot u_i(G)$ for all $i\in [3]$.
Since some agent $i\in [3]$ values $T'$ at least $u_i(G)/\delta$, the inner if-condition would have been evaluated as true for $T'$, a contradiction.
Hence, some agent $j$ positively values~$\widehat{v}$, and we may assume that $T'$ is worth more than $\frac{\Delta_2(G) - 1}{\delta} \cdot u_j(G) \ge u_j(G)$ to agent~$j$ (otherwise, we obtain the same contradiction as before).
Since each vertex can only be positively valued by one agent, and~$\widehat{v}$ is valued by agent~$j$, we know that~$T'$ is worth at most $\frac{\Delta_2(G) - 1}{\delta} \cdot u_k(G)$ and $\frac{\Delta_2(G) - 1}{\delta} \cdot u_\ell(G)$ to the other agents~$k$ and~$\ell$, respectively.
This means that~$T'$ satisfies the inner if-condition, again a contradiction.
Therefore, a top-level subtree~$T''$ of $T'$ such that $u_i(T'') \geq u_i(G) / \delta$ for some agent~$i$ exists.

We now address the general case in the while-loop.
Note that the root of each subtree considered from now on (as $T''$) has degree at most~$\Delta_2(G) - 2$; otherwise there would be two non-adjacent highest-degree vertices.
We show that the algorithm terminates when the following conditions on~$T'$ are met:
\begin{itemize}
\item $T'$ is worth at least $u_i(G) / \delta$ to some agent~$i$, and
\item $T'$ is worth at most $\frac{\Delta_2(G) - 2}{\delta} \cdot u_j(G)$ to each agent~$j \neq i$ (that is, the complement $G\setminus T'$ is worth at least $\frac{\Delta_1(G)}{\delta} \cdot u_j(G)$).
\end{itemize}
Under these conditions, we can allocate~$T'$ to agent~$i$, and by \Cref{lem:egal:case1}, divide $G\setminus T'$ between the remaining agents so that each agent $j \in [3] \setminus \{i\}$ receives a connected bundle worth at least $u_j(G)/\delta$.

When the algorithm returns~$T'$, it holds that $u_\ell(T') \geq \frac{u_\ell(G)}{\delta}$ for some~$\ell \in [3]$.
Since the while-condition is evaluated as false for~$T'$, for any top-level subtree~$T''$ of~$T'$ and any~$i \in [3]$, we have $u_i(T'') < \frac{u_i(G)}{\delta}$.
Denote the root of~$T'$ by~$v$.
As~$T'$ has at most $\Delta_2(G) - 2$ top-level subtrees, for each~$i \in [3]$, it holds that $u_i(T' \setminus \{v\}) \leq \frac{\Delta_2(G) - 2}{\delta} \cdot u_i(G)$.
If $u_i(v) = 0$ for all~$i \in [3]$, meaning that $u_i(T') = u_i(T' \setminus \{v\}) \leq \frac{\Delta_2(G) - 2}{\delta} \cdot u_i(G)$ for all~$i \in [3]$, then $T'$ satisfies the desired conditions.
Otherwise, denote by~$i^*$ the agent with $u_{i^*}(v) > 0$; note that the other two agents value vertex~$v$ at~$0$.
If $u_{i^*}(T') \geq \frac{u_{i^*}(G)}{\delta}$, then $u_j(T') = u_j(T' \setminus \{v\}) \leq \frac{\Delta_2(G) - 2}{\delta} \cdot u_j(G)$ for all~$j \in [3] \setminus \{i^*\}$, and $T'$ satisfies the desired conditions.
Else, $u_{i^*}(T') < \frac{u_{i^*}(G)}{\delta}$, so $i^* \neq \ell$.
Since agent~$\ell$ values~$T'$ at least~$\frac{u_\ell(G)}{\delta} > 0$ and $u_\ell(v) = 0$, it must hold that $\Delta_2(G) - 2 > 0$, that is, $\Delta_2(G) \ge 3$.
This implies that $u_{i^*}(T') < \frac{u_{i^*}(G)}{\delta} \leq \frac{\Delta_2(G) - 2}{\delta} \cdot u_{i^*}(G)$.
Also, for the agent (denoted as~$k$) other than~$i^*$ and~$\ell$, it holds that $u_k(T') = u_k(T' \setminus \{v\}) \leq \frac{\Delta_2(G) - 2}{\delta} \cdot u_k(G)$, so $T'$ again satisfies the desired conditions.
\end{proof}

\subsection{Any Number of Agents}

Next, we address the general case where the number of agents can be arbitrary.
We begin with stars and paths.

\begin{theorem}
\label{thm:egal:n-agent-star}
Let $G$ be a star and $n\ge 2$.
Then, $\egalPrice(G, n) = m - n + 1$.
\end{theorem}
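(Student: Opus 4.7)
The upper bound is immediate: since \Cref{lem:egal:n-agent-any-upper} already gives $\egalPrice(G,n)\le m-n+1$ for every graph, I only need to exhibit an instance whose egalitarian welfare ratio matches this bound.

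My plan is to construct a single, very concrete instance on the star and verify the ratio directly. Let the star have center $c$ and leaves $\ell_1,\ldots,\ell_{m-1}$. Define the utility profile as follows: agent $1$ values each of $\ell_1,\ldots,\ell_{m-n+1}$ at $\frac{1}{m-n+1}$ (and everything else at $0$); agent $2$ values $c$ at $1$ (and everything else at $0$); and for $i=3,\ldots,n$, agent $i$ values leaf $\ell_{m-n+i-1}$ at $1$ (and everything else at $0$). The ranges are consistent because $n\le m$ guarantees $m-n+1\ge 1$ and the $n-2$ leaves $\ell_{m-n+2},\ldots,\ell_{m-1}$ exist and are disjoint from agent~$1$'s valued set. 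Each agent's utilities sum to $1$, so the normalization is satisfied.

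Next, I would check that $\egalOPT(I)=1$ by exhibiting the (unconstrained) allocation that gives agent~$1$ all the leaves in $\{\ell_1,\ldots,\ell_{m-n+1}\}$, agent~$2$ the center, and agent~$i$ (for $i\ge 3$) her unique valued leaf. Each agent then gets utility $1$.

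The key observation for the connected case is structural: any connected subgraph of a star either equals $\{c\}\cup S$ for some $S\subseteq\{\ell_1,\ldots,\ell_{m-1}\}$, or equals a single leaf. Consequently, in any connected allocation, exactly one agent receives a bundle containing $c$, and the remaining $n-1$ agents each receive a single leaf. Since agent~$2$ values \emph{only} $c$, if any agent other than agent~$2$ receives $c$ then agent~$2$'s bundle is a single leaf of value $0$, forcing the egalitarian welfare to $0$. Hence to achieve positive egalitarian welfare we must give $c$ to agent~$2$; then agent~$1$ receives a single leaf, worth at most $\frac{1}{m-n+1}$. Thus the optimal connected egalitarian welfare is at most $\frac{1}{m-n+1}$, and it is achieved by the allocation that gives agent~$2$ only the center, agent~$1$ any one of $\ell_1,\ldots,\ell_{m-n+1}$, and agents $3,\ldots,n$ their respective valued leaves (the leftover $\ell_j$ values can be assigned to agent~$2$ without affecting anyone's utility). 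The ratio is therefore $1 \big/ \frac{1}{m-n+1} = m-n+1$, matching the upper bound.

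There is no substantial obstacle here; the only mild subtlety is ensuring the role of agent~$2$ is handled carefully so that the argument ``the agent who takes $c$ must be agent~$2$'' is clean, and that the counts of valued versus unvalued leaves line up for all admissible $n$ and $m$ (in particular the edge case $n=2$, where agents $3,\ldots,n$ simply do not exist and the construction collapses to agent~$1$ valuing all $m-1$ leaves uniformly together with agent~$2$ valuing only $c$, which still yields ratio $m-1=m-n+1$).
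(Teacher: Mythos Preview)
Your proposal is correct and follows essentially the same approach as the paper: both invoke \Cref{lem:egal:n-agent-any-upper} for the upper bound and, for the lower bound, use the instance where one agent spreads her value equally over $m-n+1$ leaves while each remaining agent concentrates value~$1$ on a distinct remaining vertex (one of which is the center). Your write-up is more explicit in naming which agent takes the center and in justifying why the optimal connected egalitarian welfare is exactly $\frac{1}{m-n+1}$, but the construction and logic are the same.
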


\begin{proof}
By \Cref{lem:egal:n-agent-any-upper}, the egalitarian PoC is at most $m-n+1$.
To establish tightness, we consider an instance where one agent values $m-n+1$ of the leaf vertices at $\frac{1}{m-n+1}$ each, and each of the $n-1$ remaining agents values a distinct vertex at~$1$.
The optimal egalitarian welfare overall is~$1$, while the optimal egalitarian welfare under connectivity is $\frac{1}{m-n+1}$.
\end{proof}

\begin{theorem}
\label{thm:egal:n-agent-path}
Let $G$ be a path and $n\ge 2$.
Then,
\[
\egalPrice(G, n) = \begin{cases}
m - n + 1 & \text{ if } n \leq m < 2n - 1; \\
n & \text{ if } 2n - 1 \leq m.
\end{cases}
\]
\end{theorem}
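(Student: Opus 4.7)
The plan is to match the claimed piecewise value with a lower-bound construction and an upper-bound argument in each regime.

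\textbf{Lower bound.} In both regimes I use a single template. Set $k = m - n + 1$ when $n \leq m < 2n - 1$ and $k = n$ when $m \geq 2n - 1$; note $k \leq n$ and $2k - 1 \leq m$ in both cases. Let agent~$1$ value the alternating vertices $v_1, v_3, \dots, v_{2k-1}$ at $1/k$ each, and let each of agents $2, \dots, n$ value a single distinct vertex among the remaining $n - 1$ positions at~$1$ (when $m \geq 2n - 1$, the trailing vertices $v_{2n}, \dots, v_m$ are left unvalued). The unconstrained egalitarian welfare equals~$1$. In any connected (contiguous) partition with positive egalitarian welfare, every single-vertex agent must receive her unique valued vertex in her own interval, so agent~$1$'s interval cannot contain any valued vertex of another agent. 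Since consecutive valued vertices $v_{2i-1}$ and $v_{2i+1}$ of agent~$1$ are separated by the vertex $v_{2i}$, which another agent values, agent~$1$'s interval can hold at most one of her $k$ valued items. Hence the optimal connected egalitarian welfare is $1/k$, giving a ratio of~$k$.

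\textbf{Upper bound.} For $n \leq m < 2n - 1$, Proposition~\ref{lem:egal:n-agent-any-upper} immediately yields $\egalPrice(G, n) \leq m - n + 1$. For $m \geq 2n - 1$, I need the sharper bound $\egalPrice(G, n) \leq n$. By Proposition~\ref{prop:structured-instance}, it suffices to exhibit, for every structured instance~$I$ (each vertex positively valued by at most one agent), a connected allocation with egalitarian welfare at least $c/n$, where $c := \egalOPT(I)$; note that every agent~$j$ satisfies $u_j(M) \geq c$, since the egalitarian-optimal allocation assigns her a subset worth at least~$c$. I construct a contiguous left-to-right partition $P_1, \dots, P_n$ with an agent assignment~$\pi$ such that $u_{\pi(k)}(P_k) \geq c/n$ for every~$k$. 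The construction is a greedy sweep: set $L_0 := 0$ and, at step~$k$, let $L_k$ be the smallest index beyond $L_{k-1}$ such that some not-yet-assigned agent~$j$ satisfies $u_j((L_{k-1}, L_k]) \geq c/n$; assign $P_k := (L_{k-1}, L_k]$ to this agent~$j$, and finally give the suffix $P_n := (L_{n-1}, m]$ to the last remaining agent.

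The main obstacle is showing that this sweep is always well-defined and that the final agent is also satisfied; the structured-instance hypothesis is crucial here. By the minimality of~$L_k$, the triggering vertex $v_{L_k}$ must be positively valued by the chosen agent~$j$ (it lifts her from below to above $c/n$), and the structured hypothesis then forces $u_{j'}(v_{L_k}) = 0$ for every other still-unassigned agent~$j'$, whence $u_{j'}(P_k) = u_{j'}((L_{k-1}, L_k - 1]) < c/n$. Summing across the previous steps, any agent still unassigned before step~$k$ satisfies
\[
u_{j'}((L_{k-1}, m]) \;\geq\; u_{j'}(M) - (k-1)\cdot \frac{c}{n} \;\geq\; c - (k-1)\cdot \frac{c}{n} \;=\; \frac{c(n-k+1)}{n}.
\]
For $k \leq n-1$ this is at least $2c/n > c/n$, so the required $L_k$ exists; for $k = n$ it equals $c/n$, so the last agent is also satisfied. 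Combining the two upper bounds with the lower-bound constructions yields the piecewise formula.
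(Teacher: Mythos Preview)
Your proof is correct and follows essentially the same approach as the paper: the lower-bound instances are the same alternating constructions (you merely unify the two regimes into one template and swap which agent plays the ``alternating'' role), and your greedy left-to-right sweep is exactly the discretized moving-knife procedure the paper uses, invoked through the same structured-instance reduction (\Cref{prop:structured-instance}) and the same key observation that the triggering vertex is valued by only the chosen agent.
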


\begin{proof}
We establish the lower bound and the upper bound in turn.

\paragraph{Lower Bound:}
Consider a path with vertices~$1, 2, \dots, m$, ordered from left to right.
For the case where $n \leq m < 2n - 1$, we let the agents have the following utilities:
\begin{itemize}
\item Agent~$1$ values~$m -n + 1$ vertices equally at~$\frac{1}{m - n + 1}$ each, and moreover, none of these vertices are adjacent to each other.
Note that this is possible since $m \ge 2(m-n+1)-1$, which is implied by $m < 2n-1$.

\item The remaining~$n-1$ agents value the~$n-1$ vertices that are not valued by agent~$1$.
In particular, each of these agents values a distinct vertex among these vertices at~$1$.
\end{itemize}
The optimal egalitarian welfare overall is~$1$, while the optimal egalitarian welfare under connectivity is $\frac{1}{m-n+1}$, so  $\egalPrice(G, n) \geq m - n + 1$.

For the case where $m \geq 2n-1$, let the agents have the following utilities:
\begin{itemize}
\item For each agent~$i \in [n-1]$, let $u_i(2i) = 1$ and $u_i(j) = 0$ for all $j \in [m] \setminus \{2i\}$.

\item For agent~$n$, let $u_n(2i - 1) = \frac{1}{n}$ for each~$i \in [n]$, and let the agent value all other vertices at~$0$.
\end{itemize}
The optimal egalitarian welfare overall is~$1$, while the optimal egalitarian welfare under connectivity is $\frac{1}{n}$, so $\egalPrice(G, n) \geq n$.

\begin{algorithm}[t]
\caption{For a path and any number of agents $n$}
\label{alg:egal:moving-knife}
\DontPrintSemicolon

\KwIn{An instance~$I = \langle N, G, \utilityProfile \rangle$.
(Each vertex in~$G$ is valued by at most one agent.)}
\KwOut{A connected allocation~$\alloc$ with $\egalSW(\alloc) \geq \frac{1}{n} \cdot \egalOPT(I)$.}

$\alloc = (M_1, \dots, M_n) \gets (\emptyset, \dots, \emptyset)$\;

\While{$|N| \geq 2$}{
	$B \gets \emptyset$\;
	Process the vertices along the path from left to right and add them one at a time to bundle~$B$ until $u_i(B) \geq \frac{1}{n} \cdot \egalOPT(I)$ for some $i\in N$.\; \label{alg:egal:moving-knife:add-vertex}
	$M_i \gets B$\;
	$N \gets N \setminus \{i\}$\;
        $G \gets G \setminus B$\;
}

Give all remaining vertices to the last agent in~$N$, and update~$\alloc$ accordingly.

\Return{$\alloc$}
\end{algorithm}

\paragraph{Upper Bound:}
For $n\leq m< 2n-1$, the upper bound follows from~\Cref{lem:egal:n-agent-any-upper}, so we focus on the case where $m\geq 2n-1$.
By \Cref{prop:structured-instance}, it suffices to show that given any instance~$I = \langle N, G, \utilityProfile \rangle$ in which each vertex of the path~$G$ is positively valued by at most one agent, there exists a connected allocation such that each agent $i\in N$ receives utility at least $\frac{1}{n} \cdot \egalOPT(I)$.
Our algorithm is a discretized version of the well-known \emph{moving-knife procedure} \citep{DubinsSp61}; the pseudocode can be found as \Cref{alg:egal:moving-knife}.
Since each vertex is positively valued by at most one agent, whenever a vertex is added to bundle~$B$ in \cref{alg:egal:moving-knife:add-vertex}, at most one agent's utility for~$B$ strictly increases.
As a result, at any point during the execution of the algorithm, at most one agent finds $B$ to be worth at least $\frac{1}{n} \cdot \egalOPT(I)$.
Each agent removed from the instance in the while-loop receives a contiguous bundle of vertices and gets utility at least $\frac{1}{n} \cdot \egalOPT(I)$.
Finally, the last agent values all remaining vertices at least $u_i(G) - \frac{n-1}{n}\cdot\egalOPT(I) \ge \egalOPT(I) - \frac{n-1}{n}\cdot\egalOPT(I) = \frac{1}{n}\cdot\egalOPT(I)$.
\end{proof}

We end this section by establishing the egalitarian PoC for cycles.

\begin{theorem}
\label{thm:egal:n-agent-cycle}
Let $G$ be a cycle and $n\ge 2$.
Then,
\[
\egalPrice(G, n) =
\begin{cases}
m - n + 1 & \text{ if } n \leq m < 2n-2; \\
n - 1 & \text{ if } 2n-2 \leq m < n^2; \\
n & \text{ if } n^2 \leq m.
\end{cases}
\]
\end{theorem}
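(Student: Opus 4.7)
The plan is to handle the three regimes separately, pairing direct instance constructions for the lower bounds with reductions to Theorem~\ref{thm:egal:n-agent-path} (the path case) for the upper bounds.

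For Case~1 ($n \le m < 2n-2$), the upper bound is immediate from Proposition~\ref{lem:egal:n-agent-any-upper}. For the matching lower bound I would mimic the path construction: let agent~$1$ value $m-n+1$ vertices at $\frac{1}{m-n+1}$ each and have each of agents $2,\ldots,n$ value one distinct remaining vertex at~$1$. Since $m<2n-2$ forces $n-1 > m-n+1$, there are enough ``singleton'' vertices to arrange the cycle so that no two of agent~$1$'s valued vertices are adjacent (alternate them with singletons and place leftover singletons consecutively). In any connected allocation with positive egalitarian welfare each singleton-valuing agent must receive her unique vertex, so agent~$1$'s connected bundle cannot include any singleton and consequently contains at most one valued vertex, yielding welfare $\frac{1}{m-n+1}$.

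For Case~2 ($2n-2 \le m < n^2$), the lower bound places $v_1,w_1,v_2,w_2,\ldots,v_{n-1},w_{n-1},z_1,\ldots,z_{m-2n+2}$ around the cycle, with agent~$i \in [n-1]$ valuing $v_i$ at $1$ and agent~$n$ valuing each $w_j$ at $\frac{1}{n-1}$. Between any two $w$'s going either direction lies at least one $v$, so agent~$n$ can collect at most one $w_j$ in a connected bundle without displacing some agent $i \in [n-1]$, capping her egalitarian-feasible utility at $\frac{1}{n-1}$. For the upper bound I first apply Proposition~\ref{prop:structured-instance} to restrict to instances where each vertex has at most one positive valuer. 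Since $\sum_i |V_i| \le m < n^2$, some agent $i^*$ has $|V_{i^*}| \le n-1$ by pigeonhole (the case $|V_{i^*}|=0$ is trivial), and her most-valued vertex $v^*$ carries utility at least $\frac{\egalOPT(I)}{n-1}$. Allocating $\{v^*\}$ to $i^*$ leaves $G\setminus\{v^*\}$, a path on $m-1 \ge 2(n-1)-1$ vertices, on which Theorem~\ref{thm:egal:n-agent-path} produces a connected partition among the remaining $n-1$ agents with egalitarian welfare at least $\frac{\egalOPT(I')}{n-1}$ for the sub-instance $I'$. The key verification is $\egalOPT(I') \ge \egalOPT(I)$, which holds because $v^*$ has no other valuers: starting from an optimal unconstrained allocation of $I$ with $v^* \in M_{i^*}$, redistributing $M_{i^*}\setminus\{v^*\}$ arbitrarily among the remaining agents preserves their utilities.

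For Case~3 ($m \ge n^2$), the upper bound follows by cutting the cycle at any edge to obtain a path on $m \ge 2n-1$ vertices and invoking Theorem~\ref{thm:egal:n-agent-path} (the unconstrained optimum depends only on the item set, so the $n$-fold guarantee carries over). For the lower bound, I would place the repeating pattern $1,2,\ldots,n,1,2,\ldots,n$ at positions $1,\ldots,n^2$---so agent~$i$'s $n$ valued vertices land at $i, i+n,\ldots, i+(n-1)n$, each worth $\frac{1}{n}$---and fill positions $n^2+1,\ldots,m$ with non-valued vertices. The goal is to show that no connected partition into $n$ arcs gives every agent utility exceeding $\frac{1}{n}$: such an allocation would require each arc to contain at least two of its agent's valued positions, and since consecutive valued positions of any agent are exactly $n$ apart in $[1,n^2]$, each such arc must have length at least $n+1$ within the valued region. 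The main obstacle, and the technically most delicate step, is the case analysis on how the non-valued block $[n^2+1,m]$ is covered by arcs: any arc lying entirely in the non-valued block gives its agent zero utility (ruled out); if a single arc wraps around the block, the remaining $n-1$ arcs partition a subset of $[1,n^2]$ with total size at least $(n-1)(n+1)=n^2-1$, forcing the wrapping arc to hold at most one valued position---a contradiction; and if two arcs straddle the boundaries of the block, analogous bookkeeping on the valued portions of those two arcs plus the $n-2$ interior arcs yields the same contradiction.
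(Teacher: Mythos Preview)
Your plan is correct and matches the paper's proof in all three regimes---the same lower-bound constructions, the same pigeonhole-then-path reduction for the Case~2 upper bound, and the same edge-cut reduction for Case~3. Two small notes: in Case~2 the sub-instance $I'$ is structured but not normalized, so you should invoke the moving-knife algorithm from the \emph{proof} of Theorem~\ref{thm:egal:n-agent-path} (as the paper explicitly does) rather than the theorem statement itself; and your Case~3 lower-bound case analysis, while correct, can be collapsed into the single observation that each agent's arc intersected with $[1,n^2]$ is again an arc on the ``valued'' $n^2$-cycle and hence contains at least $n+1$ vertices, so the total $n(n+1)>n^2$ yields the contradiction in one stroke.
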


\begin{proof}
We distinguish three cases as follows.

\paragraph{Case~1: $n \leq m < 2n - 2$.}
The upper bound of $m-n+1$ follows from \Cref{lem:egal:n-agent-any-upper}.
For the lower bound, consider a cycle such that agent~$1$ values $m-n+1$ vertices equally at $\frac{1}{m-n+1}$ each, and moreover, none of these vertices are adjacent to each other---note that this is possible since $m\ge 2(m-n+1)$, which is implied by $m < 2n-2$.
The optimal egalitarian welfare overall is~$1$, while the optimal egalitarian welfare under connectivity is $\frac{1}{m-n+1}$, so $\egalPrice(G, n) \geq m - n + 1$.

\paragraph{Case~2: $2n - 2 \leq m < n^2$.}
\begin{figure}[t]
\centering
\tikzstyle{vertex}=[circle,draw]
\begin{tikzpicture}[on grid]
\node[vertex,label=above:{$a_1$}] (a1-1) {};
\node[vertex,label=above:{$a_2$}] (a2) [right=of a1-1] {};
\node[vertex,label=above:{$a_1$}] (a1-2) [right=of a2] {};
\node[vertex,label=above:{$a_3$}] (a3) [right=of a1-2] {};
\node (dots) [right=of a3] {$\cdots$};
\node[vertex,label=above:{$a_1$}] (a1-n-2) [right=of dots] {};
\node[vertex,label=above:{$a_{n-1}$}] (an-1) [right=of a1-n-2] {};
\node[vertex,label=above:{$a_1$}] (a1-n-1) [right=of an-1] {};
\node[vertex,label=above:{$a_n$}] (an) [right=of a1-n-1] {};
\node (dots2) [right=of an] {$\cdots$};
\node[vertex] (vm) [right=of dots2] {};

\draw (a1-1) -- (a2) -- (a1-2) -- (a3) -- (dots) -- (a1-n-2) -- (an-1) -- (a1-n-1) -- (an) -- (dots2) -- (vm);
\path (vm) edge [bend left=20] (a1-1);
\end{tikzpicture}
\caption{A cycle with $2n - 2 \leq m < n^2$ vertices.
Agent~$1$ values~$n-1$ non-adjacent vertices (labeled $a_1$) at~$\frac{1}{n-1}$ each, while each remaining agent~$i\in\{2,\dots,n\}$ values her designated vertex $a_i$ at~$1$.}
\label{fig:cycle}
\end{figure}

We first prove the lower bound.
Consider a cycle where the first $2n-2$ vertices alternate between an item valued by agent~$1$ and an item valued by each agent in the set $\{2,\dots,n\}$ (see \Cref{fig:cycle}).
Agent~$1$ values each of her items at $\frac{1}{n-1}$, and each of agents~$2,\dots,n$ values her item at $1$.
The optimal egalitarian welfare overall is~$1$, while the optimal egalitarian welfare under connectivity is $\frac{1}{n-1}$, so $\egalPrice(G, n) \geq n-1$.

We now turn to the upper bound.
By \Cref{prop:structured-instance}, it suffices to show that given any instance~$I = \langle N, G, \utilityProfile \rangle$ in which each vertex of the cycle~$G$ is positively valued by at most one agent, there exists a connected allocation such that each agent receives utility at least~$\frac{\egalOPT(I)}{n-1}$.
Since the total number of items is less than $n^2$, there exists an agent who positively values at most~$n-1$ items.
By giving one such agent her most preferred item, she receives utility at least~$\frac{\egalOPT(I)}{n-1}$.
Ignoring that item and that agent, the remaining graph is a path with at least~$2n-3$ vertices, to be divided among~$n-1$ agents.
Since each of these $n-1$ agents does not value the allocated item, the optimal egalitarian welfare in the reduced instance (with $n-1$ agents) is at least that in the original instance (with $n$ agents). 
From the proof of \Cref{thm:egal:n-agent-path}, we know that when there are~$n-1$ agents and $m \geq 2(n-1) - 1 = 2n-3$ vertices on a path such that each vertex is positively valued by at most one agent, there exists a connected allocation where each agent receives at least $\frac{1}{n-1}$ of the optimal egalitarian welfare overall.
This completes the proof for this case.

\paragraph{Case~3: $m \geq n^2$.}
The upper bound follows directly from \Cref{thm:egal:n-agent-path}, as we can remove an arbitrary edge of the cycle and apply the corresponding algorithm in the proof of \Cref{thm:egal:n-agent-path} to the resulting path.

For the lower bound, consider first the case $m = n^2$, and assume that the vertices of the cycle are $1,2,\dots,n^2$ in this order.
For each $i,j\in [n]$, let agent~$i$ value vertex $(j-1)n + i$ at $1/n$.
The optimal egalitarian welfare overall is~$1$.
On the other hand, for an agent to receive at least two items that she values positively in a connected allocation, she must receive at least $n+1$ items in total.
Hence, for every agent to receive at least two items that she values positively, the total number of items must be at least $(n+1)n \ge n^2$, which is impossible.
This means that in any connected allocation, some agent receives at most one item that she values positively, and the egalitarian welfare under connectivity is at most $1/n$.
It follows that the egalitarian PoC is at least $n$.

To extend the example to $m > n^2$, simply add $m-n^2$ dummy items that no agent values positively, and apply the same argument ignoring the dummy items.
\end{proof}

\section{Utilitarian Price of Connectivity}
\label{sec:util}

In this section, we turn our attention to the utilitarian price of connectivity, which measures the worst-case loss of utilitarian welfare due to connectivity constraints.
Note that an optimal (unconstrained) utilitarian allocation simply assigns each item to an agent who values it most.

\subsection{Two Agents}

We begin by determining the utilitarian PoC for instances with two agents.
An important quantity to consider for this purpose is the difference between the two agents' utilities for each item, that is, $\Delta_v \coloneqq u_2(v) - u_1(v)$ for each~$v \in V$.

Similarly to the egalitarian PoC, the utilitarian PoC is strictly higher than $1$ when a single edge is removed from a complete graph, but does not increase further when additional disjoint edges are removed (with the exception of the $5$-item case).

\begin{theorem}
\label{thm:util:complete}
Let $G$ be a complete graph with a non-empty matching removed.
Then, $\utilPrice(G, 2) = \frac{4}{3}$ if~$G$ is
\begin{itemize}
\item $L_3$, which is $K_3$ with an edge removed, or
\item $L_5$, which is $K_5$ with two disjoint edges removed,
\end{itemize}
and $\utilPrice(G, 2) = \frac{2m-4}{2m-5}$ otherwise.
\end{theorem}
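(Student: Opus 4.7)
The plan is to handle the small cases $m \in \{3,4\}$ by reducing to the already-proved path and cycle theorems ($L_3 = P_3$ falls under \Cref{thm:util:2-agent:tree}; $K_4$ minus two disjoint edges is $C_4$ and falls under \Cref{thm:util:2-agent:cycle}; and $K_4$ minus a single edge is sandwiched between $K_4$ and $C_4$, using the fact that adding edges can only decrease the PoC since it enlarges the set of connected allocations). Then I would focus the main argument on $m \geq 5$.

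For the lower bounds, I would exhibit two construction families. For $L_5$, let each agent $i \in \{1,2\}$ value the endpoints of her matched pair $\{v_{i,1}, v_{i,2}\}$ at $1/2$ each (both agents value $v_5$ at $0$); a short case analysis over connected allocations shows that an agent who collects both valued items must also take a vertex from the other pair, giving a best connected welfare of $3/2$ versus $\utilOPT = 2$. For any other graph in the family, let agent~$1$ value the endpoints of one removed edge at $1/2$ each, and let agent~$2$ value the remaining $m - 2$ items uniformly at $1/(m-2)$; then $\utilOPT = 2$, while any connected allocation must assign at least one of agent~$2$'s valued items to agent~$1$ in order to bridge her non-adjacent pair, yielding a connected welfare of $(2m - 5)/(m - 2)$ and the ratio $(2m - 4)/(2m - 5)$.

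For the upper bound, the key structural observation is that in $K_m$ minus a matching, the \emph{only} disconnected subsets are the matched pairs themselves, since separating three or more vertices would require more missing edges at a single vertex than the matching contains. Consequently, the optimal unconstrained allocation $(V_1, V_2)$---which assigns each item to its higher-valuer---is already connected unless some $V_i$ equals a matched pair; in the connected case the ratio is~$1$. Otherwise, I may assume without loss of generality that $V_1 = \{v_{1,1}, v_{1,2}\}$ is the matched pair, and consider the $m$ single-item swaps (moving $v_{1,1}$ or $v_{1,2}$ to agent~$2$, or moving any $v \in V_2$ to agent~$1$). Combining the normalization $u_i(M) = 1$ with $\utilOPT = u_1(V_1) + u_2(V_2)$ gives the identity
\begin{equation*}
u_1(V_1) - u_2(V_1) \;=\; u_2(V_2) - u_1(V_2) \;=\; \utilOPT - 1,
\end{equation*}
so the average loss over the $m - 2$ swaps from $V_2$ is exactly $(\utilOPT - 1)/(m-2)$. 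Since $\utilOPT \leq 2$, the best of these swaps loses at most $\utilOPT/(2m - 4)$, delivering a connected allocation of value at least $\utilOPT \cdot (2m - 5)/(2m - 4)$ as required.

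The main obstacle is the $L_5$ case, where only four of the five candidate swaps remain admissible: moving $v_5 \in V_2$ to agent~$1$ would leave agent~$2$ with the disconnected matched pair $\{v_{2,1}, v_{2,2}\}$. I plan to handle this by showing that the total loss over the four admissible swaps is still bounded by $(u_1(V_1) - u_2(V_1)) + (u_2(V_2 \setminus \{v_5\}) - u_1(V_2 \setminus \{v_5\})) \leq 2(\utilOPT - 1)$, so the minimum is at most $(\utilOPT - 1)/2 \leq \utilOPT/4$, which gives the $4/3$ bound. For every other graph in the family, a short verification shows that all single-item swaps remain valid: for $m \geq 6$, each post-swap bundle has size at least three; and for $m = 5$ not $L_5$, the unique missing edge lies entirely in $V_1$, so any $V_2 \setminus \{v\}$ is a connected pair. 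The symmetric subcase where $V_2$ is the matched pair is handled identically by swapping the roles of the two agents.
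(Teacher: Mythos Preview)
Your proposal is correct and follows essentially the same route as the paper: both reduce to the case where one agent's optimal bundle is exactly a matched pair, then repair it by a single-item swap whose loss is bounded via the normalization identity $u_1(V_1)-u_2(V_1)=u_2(V_2)-u_1(V_2)=\utilOPT-1$ (the paper picks the best swap explicitly, you phrase it as an average; for $L_5$ the paper uses only the two swaps out of $V_1$, you use four, but the resulting bound is identical). The one small gap is your treatment of $K_4$ minus a single edge: sandwiching between $K_4$ and $C_4$ yields only $1\le\utilPrice\le 4/3$, so you still owe the lower bound---your general construction (agent~$1$ on the missing edge, agent~$2$ uniform on the rest) with $m=4$ supplies it immediately.
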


\begin{proof}
When $m = 3$, the utilitarian PoC of $4/3$ for $L_3$ follows from \Cref{thm:util:2-agent:tree} for trees.
When $m = 4$, the utilitarian PoC of $4/3$ for $K_4$ with two disjoint edges removed follows from \Cref{thm:util:2-agent:cycle} for cycles.
This implies that for $K_4$ with a single edge removed, the utilitarian PoC is at most $4/3$.
We now show that the utilitarian PoC is also at least $4/3$, and therefore exactly $4/3$.
Consider the instance~$I$ where agent~$1$ values the two vertices without an edge between them at~$1/2$ each, and agent~$2$ values the two remaining vertices at~$1/2$ each.
We have $\utilOPT(I) = 2$ and $\max_{\alloc \in \connectedAllocsOf{I}} \utilSW(\alloc) = 3/2$, so the utilitarian PoC is at least~$4/3$.

For the remainder of the proof, let $m \geq 5$.
We establish the lower bound and the upper bound in turn.

\paragraph{Lower Bound (for $m \geq 5$):}
First, consider $L_5$, and denote the missing edges by $\{v_{1,1}, v_{1,2}\}$ and $\{v_{2,1}, v_{2,2}\}$.
For~$i \in [2]$, let agent~$i$ value vertices $v_{i,1}$ and $v_{i,2}$ at~$1/2$ each, and let the final vertex~$v_5$ be non-valued.
Observe that either agent can obtain both of her valued vertices by taking~$v_5$, but if one agent does so, then the other agent can only obtain one of her valued vertices.
Hence, the optimal utilitarian welfare under connectivity is at most~$3/2$, while the optimal utilitarian welfare overall is~$2$, and so the utilitarian PoC is at least~$4/3$.

Consider now any graph that is not~$L_5$.
Let~$v_1$ and~$v_2$ be two vertices without an edge between them, let agent~$1$ value each of these vertices at $1/2$, and let agent~$2$ value each of the remaining $m-2$ vertices at $\frac{1}{m-2}$.
A utilitarian-optimal connected allocation assigns to agent~$1$ both of her valued vertices along with one of agent~$2$'s valued vertices, and assigns to agent~$2$ the remaining vertices.
This leads to a utilitarian welfare of $1 + \frac{m-3}{m-2} = \frac{2m-5}{m-2}$.
On the other hand, the optimal utilitarian welfare overall is~$2$, so the utilitarian PoC is at least $\frac{2m-4}{2m-5}$.

\paragraph{Upper Bound (for $m \geq 5$):}
To prove the upper bound, we show that among all possible instances, those described in the lower bound examples maximize the utilitarian welfare ratio $\frac{\utilOPT(I)}{\max_{\alloc \in \connectedAllocsOf{I}} \utilSW(\alloc)}$.
Observe that every bundle of size either $1$ or at least~$3$ is connected.
Hence, the only scenario in which there does not exist a connected allocation where each vertex is allocated to an agent who values it most is when there exist vertices $v_{1,1},v_{1,2}$ that are not connected by an edge, such that one agent (say, agent~$1$) values $v_{1,1},v_{1,2}$ more than agent~$2$, and agent~$2$ values every remaining vertex more than agent~$1$.
Let us therefore proceed with these assumptions, and consider an instance~$I$.

We first handle the special case where the graph is~$L_5$. 
Let $v_{2,1},v_{2,2}$ be the other pair of vertices not connected by an edge, and let $v_5$ be the remaining vertex.
We have $\utilOPT(I)=u_1(\{v_{1,1},v_{1,2}\}) + u_2(\{v_{2,1},v_{2,2},v_5\})$. 
Now, consider the connected allocation where agent $2$ is allocated $v_{2,1}$, $v_{2,2}$, $v_5$, along with a vertex $v\in \{v_{1,1},v_{1,2}\}$ that minimizes $u_1(v)-u_2(v)$; assume without loss of generality that $v = v_{1,1}$. 
In this allocation, agent $1$ receives only $v_{1,2}$, which means that $\max_{\alloc \in \connectedAllocsOf{I}} \utilSW(\alloc)\geq u_2(\{v_{2,1},v_{2,2},v_5, v_{1,1}\}) + u_1(v_{1,2})$.
Hence,
\begin{align*}
\frac{\utilOPT(I)}{\max_{\alloc \in \connectedAllocsOf{I}} \utilSW(\alloc)}&\leq \frac{u_2(\{v_{2,1},v_{2,2},v_5\}) + u_1(\{v_{1,1},v_{1,2}\})}{u_2(\{v_{2,1},v_{2,2},v_5, v_{1,1}\}) + u_1(v_{1,2})}\\
&=\frac{1-u_2(\{v_{1,1},v_{1,2}\})+u_1(\{v_{1,1},v_{1,2}\})}{1-u_2(v_{1,2})+u_1(v_{1,2})} \\
&=\frac{1 + (u_1(v_{1,1}) - u_2(v_{1,1})) + (u_1(v_{1,2}) - u_2(v_{1,2}))}{1 + (u_1(v_{1,2}) - u_2(v_{1,2}))}\\
&\le\frac{1 + (u_1(v_{1,1}) - u_2(v_{1,1})) + (u_1(v_{1,2}) - u_2(v_{1,2}))}{1 + \frac{1}{2}(u_1(v_{1,1}) - u_2(v_{1,1})) + \frac{1}{2}(u_1(v_{1,2}) - u_2(v_{1,2}))}\\
&= \frac{1 + 2x}{1 +x},
\end{align*}
where $x \coloneqq \frac{1}{2}(u_1(\{v_{1,1},v_{1,2}\}) - u_2(\{v_{1,1},v_{1,2}\}))$.
Since $x\in [-1/2, 1/2]$, it follows that $\frac{1+2x}{1+x}\le \frac{4}{3}$, and so the utilitarian PoC of the graph~$L_5$ is at most $4/3$.

It remains to consider graphs that are not~$L_5$.
For simplicity of notation, we rename $v_{1,1},v_{1,2}$ to $v_1,v_2$, respectively, so agent~$1$ values $v_1,v_2$ more than agent~$2$ while agent~$2$ values every remaining vertex more than agent~$1$.
Let $v' \in \argmax_{v \in V \setminus \{v_1,v_2\}} (u_1(v)-u_2(v))$, so 
\begin{align*}
u_1(v') - u_2(v') &\ge \frac{1}{m-2}\left(u_1(V\setminus\{v_1,v_2\}) - u_2(V\setminus\{v_1,v_2\})\right) \\
&= \frac{1}{m-2}((1-u_1(\{v_1,v_2\})) - (1-u_2(\{v_1,v_2\}))) \\
&= \frac{1}{m-2}(u_2(\{v_1,v_2\}) - u_1(\{v_1,v_2\})).
\end{align*}
Consider the allocation that gives agent~$1$ vertices~$v_1$, $v_2$, and $v'$, and agent~$2$ all of the remaining vertices.
Since the graph is not $L_5$, this allocation is connected.
Hence, we have
\begin{align*}
\frac{\utilOPT(I)}{\max_{\alloc \in \connectedAllocsOf{I}} \utilSW(\alloc)} &\leq \frac{u_1(\{v_1, v_2\}) + u_2(V \setminus \{v_1,v_2\})}{u_1(\{v_1,v_2,v'\}) + u_2(V \setminus \{v_1,v_2,v'\})} \\
&= \frac{1 + (u_1(\{v_1, v_2\}) - u_2(\{v_1, v_2\}))}{1 + (u_1(\{v_1, v_2\}) - u_2(\{v_1, v_2\})) + (u_1(v') - u_2(v'))} \\
&\leq \frac{1 + (u_1(\{v_1, v_2\}) - u_2(\{v_1, v_2\}))}{1 + (u_1(\{v_1, v_2\}) - u_2(\{v_1, v_2\})) + \frac{1}{m-2}(u_2(\{v_1, v_2\}) - u_1(\{v_1, v_2\}))} \\
&\leq \frac{1 + (u_1(\{v_1, v_2\}) - u_2(\{v_1, v_2\}))}{1 + \frac{m-3}{m-2}(u_1(\{v_1, v_2\}) - u_2(\{v_1, v_2\}))} \\
&= \frac{1+y}{1+\frac{m-3}{m-2}\cdot y},
\end{align*}
where $y \coloneqq u_1(\{v_1, v_2\}) - u_2(\{v_1, v_2\})$.
Since $y\in [0,1]$, it follows that $\frac{1+y}{1+\frac{m-3}{m-2}\cdot y} \le \frac{2m-4}{2m-5}$, so the utilitarian PoC is at most $\frac{2m-4}{2m-5}$.
\end{proof}

We next consider complete bipartite graphs.
Interestingly, if one side of the graph has two vertices, then the utilitarian PoC is constant, regardless of how many vertices there are on the other side.
Since the case where one side of the graph has a single vertex corresponds to a star, which will be covered by~\Cref{thm:util:2-agent:tree}, we assume that both sides have at least two vertices each. 

\begin{theorem}
\label{thm:util:2-agent:complete-bipartite}
Let $G$ be a complete bipartite graph with vertex sets $V_1$ and $V_2$, where $|V_1|,|V_2|\ge 2$.
Then,
\[
\utilPrice(G, 2) = \begin{cases}
\frac{4}{3} & \text{ if } |V_1| = 2 \text{ or } |V_2| = 2; \\
\frac{2}{2 - \frac{1}{|V_1|} - \frac{1}{|V_2|}} & \text{ otherwise}.
\end{cases}
\]
\end{theorem}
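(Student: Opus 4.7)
The plan is to reformulate using $\Delta_v \coloneqq u_2(v) - u_1(v)$ (as in the paper), so that for any allocation $(M_1, M_2)$, additivity and normalization give $u_1(M_1) + u_2(M_2) = 1 - \sum_{v \in M_1} \Delta_v$. Setting $A \coloneqq \{v : \Delta_v < 0\}$ (ties broken arbitrarily), we have $\utilOPT(I) = 1 + D$ with $D \coloneqq \sum_{v \in A}|\Delta_v| = \sum_{v \in V \setminus A}\Delta_v \le 1$. The structural observation for $K_{|V_1|,|V_2|}$ is that any singleton bundle is connected, and any bundle of size at least two is connected iff it contains vertices from both parts; thus the connected partitions are exactly those where both bundles meet both sides, together with those where one bundle is a single vertex.

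For the lower bound, I would use the instance $u_i(v) = 1/|V_i|$ for $v \in V_i$ and zero otherwise, giving $\utilOPT(I) = 2$. Enumerating the connected partitions, the best cross-side partition has the form $M_1 = (V_1 \setminus \{v_1\}) \cup \{v_2\}$ and achieves welfare $2 - 1/|V_1| - 1/|V_2|$, while the best singleton-bundle partition achieves $1 + 1/\min(|V_1|,|V_2|)$. For $|V_1|, |V_2| \ge 3$ the cross-side value dominates, giving ratio $\frac{2}{2 - 1/|V_1| - 1/|V_2|}$; for $|V_1| = 2$, the singleton partition giving agent~$1$ a single vertex of $V_1$ has welfare $3/2$ and dominates, giving ratio $4/3$.

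For the upper bound I would case-split on the intersection of $A$ with each side. If $\emptyset \subsetneq A \cap V_i \subsetneq V_i$ for both~$i$, then $(A, V \setminus A)$ is already a connected partition, so there is no loss. Otherwise, if $A \subsetneq V_1$ with $|A| \ge 2$ (or a symmetric configuration), I would take $M_1 \coloneqq A \cup \{v^*\}$ with $v^* \coloneqq \argmin_{v \in V_2}\Delta_v$; the loss $\Delta_{v^*}$ is at most $D/|V_2|$ by averaging (since $\sum_{V_2}\Delta_v \le D$). The hardest case is $A \in \{V_1, V_2\}$, where both bundles of $(A, V \setminus A)$ are disconnected; in this case, for $|V_1|, |V_2| \ge 3$, I would use the double swap $M_1 \coloneqq (V_1 \setminus \{v_1^*\}) \cup \{v_2^*\}$ with $v_1^* \coloneqq \argmax_{v \in V_1}\Delta_v$ and $v_2^* \coloneqq \argmin_{v \in V_2}\Delta_v$, incurring total loss $|\Delta_{v_1^*}| + \Delta_{v_2^*} \le D/|V_1| + D/|V_2|$ by averaging $|\Delta_v|$ over $V_1$ and $\Delta_v$ over $V_2$ (each sum equals $D$). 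In every case the welfare ratio is at most $\frac{1+D}{1+D-\mathrm{loss}}$, which is increasing in $D$ on $[0,1]$, so evaluating at $D = 1$ yields the bound $\frac{2}{2 - 1/|V_1| - 1/|V_2|}$ when both sides have at least three vertices.

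The main obstacle is the boundary case $|V_1| = 2$ (or symmetrically $|V_2| = 2$), where the double-swap bound degrades to $D/2 + D/|V_2|$ and gives a PoC bound of $\frac{2}{3/2 - 1/|V_2|}$, which exceeds $4/3$. To recover the tight $4/3$ bound, I would switch to a singleton-bundle strategy in the case $A = V_1$: give agent~$1$ the single vertex of $V_1$ with the larger $|\Delta_v|$ and agent~$2$ the rest (a connected partition, since agent~$2$'s bundle meets both sides); the loss is $D - \max_{V_1}|\Delta_v| \le D/2$ by averaging, yielding PoC $\le \frac{2}{2 - 1/2} = 4/3$. Analogous singleton moves handle the other configurations when a side has size~$2$ (for instance, when $A = V_2$, assign agent~$2$ the single vertex of $V_1$ with largest $\Delta_v$), uniformly producing the bound $4/3$.
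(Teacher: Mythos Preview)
Your proposal is correct and follows essentially the same approach as the paper: the same lower-bound instance, the same reduction to the case where one side lies entirely in $A$, the same single- and double-swap moves with averaging to bound the loss, and the same singleton-bundle trick to recover $4/3$ when a side has size~$2$. The only difference is organizational---you case-split on the shape of $A$ relative to $V_1,V_2$ and work throughout in the $\Delta_v$/$D$ notation, whereas the paper first fixes $V_1\subseteq A$ without loss of generality and then splits on the side sizes, carrying the bounds as expressions in $u_1(V_1)+u_2(V_2)$ before invoking $u_1(V_1)+u_2(V_2)\le 2$ (your $D\le 1$) at the end.
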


\begin{proof}
We establish the lower bound and the upper bound in turn.

\paragraph{Lower Bound:}
For the special case where $|V_1| = 2$ or $|V_2| = 2$, assume without loss of generality that $|V_1| = 2$.
Let agent~$1$ have utility $1/2$ for each item in~$V_1$, and agent~$2$ have utility $1/|V_2|$ for each item in~$V_2$.
The optimal utilitarian welfare overall is $2$, while the optimal utilitarian welfare under connectivity is $3/2$, achieved by giving agent~$1$ one item from $V_1$ and none from $V_2$.
Hence, the utilitarian PoC is at least $4/3$.

For the general case where $|V_1|, |V_2| \geq 3$, let agent~$1$ have utility $1/|V_1|$ for each item in $V_1$, and agent~$2$ have utility $1/|V_2|$ for each item in $V_2$.
The optimal utilitarian welfare overall is $2$, while the optimal utilitarian welfare under connectivity is $2 - 1/|V_1| - 1/|V_2|$, achieved by giving each agent all but one of her valued items along with one of the other agent's valued items.
Hence, the utilitarian PoC is at least $\frac{2}{2-1/|V_1|-1/|V_2|}$.

\paragraph{Upper Bound:}
Consider an instance~$I$.
If for each~$i \in [2]$, there exists a pair of distinct vertices~$v_i, v'_i \in V_i$ such that $u_1(v_i) \geq u_2(v_i)$ and $u_1(v'_i) \leq u_2(v'_i)$, then we can obtain a connected allocation that achieves the optimal (unconstrained) utilitarian welfare by giving vertices~$v_1, v_2$ (resp., $v'_1, v'_2$) to agent~$1$ (resp., agent~$2$) and each remaining vertex to an agent who values it most.
Therefore, we may assume that $u_1(v) > u_2(v)$ for every $v\in V_1$.

If $|V_1| = |V_2| = 2$, the graph is a cycle, so the utilitarian PoC of $4/3$ follows from \Cref{thm:util:2-agent:cycle}.
We now perform a case distinction based on $|V_2|$.

\subparagraph{Case~1: $|V_2| = 2$ (and $|V_1| \ge 3$).}
If there exists $v \in V_2$ such that $u_1(v) \geq u_2(v)$, then there is a connected allocation that assigns each vertex to an agent who values it most.
Therefore, assume that $u_1(v) < u_2(v)$ for every $v\in V_2$.
Let $v' \in \argmax_{v \in V_2} (u_1(v) - u_2(v))$.
Since the allocation that gives $V_1\cup\{v'\}$ to agent~$1$ and $V_2\setminus\{v'\}$ to agent~$2$ is connected, we have
\begin{align*}
\frac{\utilOPT(I)}{\max_{\alloc \in \connectedAllocsOf{I}} \utilSW(\alloc)} &\leq \frac{u_1(V_1) + u_2(V_2)}{u_1(V_1) + u_1(v') + u_2(V_2 \setminus \{v'\})} \\
&= \frac{u_1(V_1) + u_2(V_2)}{u_1(V_1) + u_2(V_2) + (u_1(v') - u_2(v'))} \\
&\leq \frac{u_1(V_1) + u_2(V_2)}{u_1(V_1) + u_2(V_2) + \frac{1}{2}(u_1(V_2) - u_2(V_2))} \\
&= \frac{u_1(V_1) + u_2(V_2)}{u_1(V_1) + \frac{1}{2}u_1(V_2) + \frac{1}{2}u_2(V_2)} \\
&= \frac{u_1(V_1) + u_2(V_2)}{\frac{1}{2} + \frac{1}{2}u_1(V_1) + \frac{1}{2}u_2(V_2)} \\
&= \frac{2u_1(V_1) + 2u_2(V_2)}{1 + u_1(V_1) + u_2(V_2)} \\
&= 2 - \frac{2}{1 + u_1(V_1) + u_2(V_2)} \\
&\le \frac{4}{3},
\end{align*}
where the last inequality holds because $u_1(V_1),u_2(V_2) \le 1$.
Hence, the utilitarian PoC is at most $4/3$ in this case.

\subparagraph{Case~2: $|V_2| \geq 3$ (and $|V_1| \geq 2$).}
Recall that $u_1(v) > u_2(v)$ for every~$v \in V_1$.
We first deal with the subcase where there also exists a non-empty subset $V' \subset V_2$ such that $u_1(v) \geq u_2(v)$ for each~$v \in V'$.
Let $v' \in \argmax_{v \in V_1} (u_2(v) - u_1(v))$.
Consider the connected allocation where agent~$2$ receives each vertex~$v \in V_2$ such that $u_2(v) > u_1(v)$ along with vertex~$v'$, and agent~$1$ receives every other vertex.
We have
\begin{align*}
\frac{\utilOPT(I)}{\max_{\alloc \in \connectedAllocsOf{I}} \utilSW(\alloc)}
&\leq \frac{u_1(V_1) + u_1(V') + u_2(V_2 \setminus V')}{u_1(V_1) - u_1(v') + u_1(V') + u_2(v') + u_2(V_2 \setminus V')} \\
&= \frac{u_1(V_1) + u_1(V') + u_2(V_2 \setminus V')}{u_1(V_1) + u_1(V') + u_2(V_2 \setminus V') + (u_2(v') - u_1(v'))} \\
&\le \frac{u_1(V_1) + u_1(V') + u_2(V_2 \setminus V')}{u_1(V_1) + u_1(V') + u_2(V_2 \setminus V') + \frac{1}{|V_1|}(u_2(V_1) - u_1(V_1))} \\
&= \frac{u_1(V_1) + u_1(V') + u_2(V_2 \setminus V')}{\frac{|V_1|-1}{|V_1|}\cdot u_1(V_1) + u_1(V') + u_2(V_2 \setminus V') + \frac{1}{|V_1|}\cdot u_2(V_1) } \\
&= \frac{u_1(V_1) + u_2(V_2) + (u_1(V') - u_2(V'))}{\frac{|V_1|-1}{|V_1|}\cdot u_1(V_1) + u_2(V_2) + \frac{1}{|V_1|}\cdot u_2(V_1) + (u_1(V') - u_2(V')) } \\
&\le \frac{u_1(V_1) + u_2(V_2)}{\frac{|V_1|-1}{|V_1|}\cdot u_1(V_1) + u_2(V_2) + \frac{1}{|V_1|}\cdot u_2(V_1) } \\
&= \frac{u_1(V_1) + u_2(V_2)}{\frac{|V_1|-1}{|V_1|}\cdot u_1(V_1) + \frac{|V_1|-1}{|V_1|}\cdot u_2(V_2) + \frac{1}{|V_1|}\cdot u_2(V_2) + \frac{1}{|V_1|}\cdot u_2(V_1) } \\
&= \frac{u_1(V_1) + u_2(V_2)}{\frac{|V_1|-1}{|V_1|}\cdot u_1(V_1) + \frac{|V_1|-1}{|V_1|}\cdot u_2(V_2) + \frac{1}{|V_1|} } \\
&= \frac{\frac{|V_1|}{|V_1|-1}(u_1(V_1) + u_2(V_2))}{ u_1(V_1) +  u_2(V_2) + \frac{1}{|V_1|-1} } \\
&= \frac{\frac{|V_1|}{|V_1|-1}}{ 1 + \frac{1}{|V_1|-1}\cdot\frac{1}{u_1(V_1) + u_2(V_2)} } \\
&\le \frac{\frac{|V_1|}{|V_1|-1}}{ 1 + \frac{1}{|V_1|-1}\cdot\frac{1}{2} } \\
&= \frac{2}{2 - \frac{1}{|V_1|}}.
\end{align*}
If $|V_1|=2$, the expression $\frac{2}{2-1/|V_1|}$ becomes $4/3$.
Else, the expression is strictly less than $\frac{2}{2 - 1/|V_1| - 1/|V_2|}$.

Next, we address the remaining subcase  where $u_2(v) > u_1(v)$ for all $v\in V_2$.
If $|V_1| = 2$, this is already covered by the analysis in Case~1, so we assume that $|V_1| \geq 3$.
Let $v'_1 \in \argmax_{v \in V_1} (u_2(v) - u_1(v))$ and $v'_2 \in \argmax_{v \in V_2} (u_1(v) - u_2(v))$.
Consider the connected allocation where agent~$1$ receives~$v'_2$ and $V_1 \setminus \{v'_1\}$, while agent~$2$ receives~$v'_1$ and $V_2 \setminus \{v'_2\}$.
We have
\begin{align*}
\frac{\utilOPT(I)}{\max_{\alloc \in \connectedAllocsOf{I}} \utilSW(\alloc)}
&\leq\frac{u_1(V_1) + u_2(V_2)}{u_1(v'_2) + u_1(V_1 \setminus \{v'_1\}) + u_2(v'_1) + u_2(V_2 \setminus \{v'_2\})} \\
&= \frac{u_1(V_1) + u_2(V_2)}{u_1(V_1) + u_2(V_2) + (u_1(v'_2) - u_2(v'_2)) + (u_2(v'_1) - u_1(v'_1))} \\
&\le \frac{u_1(V_1) + u_2(V_2)}{u_1(V_1) + u_2(V_2) + \frac{1}{|V_2|}(u_1(V_2) - u_2(V_2)) + \frac{1}{|V_1|}(u_2(V_1) - u_1(V_1))} \\
&= \frac{u_1(V_1) + u_2(V_2)}{(1-\frac{1}{|V_1|})\cdot u_1(V_1) + \frac{1}{|V_2|}\cdot u_1(V_2) + (1-\frac{1}{|V_2|})\cdot u_2(V_2) + \frac{1}{|V_1|}\cdot u_2(V_1) } \\
&= \frac{u_1(V_1) + u_2(V_2)}{(1 - \frac{1}{|V_1|} - \frac{1}{|V_2|})(u_1(V_1) + u_2(V_2)) + \frac{1}{|V_1|} + \frac{1}{|V_2|}} \\
&= \frac{1}{(1 - \frac{1}{|V_1|} - \frac{1}{|V_2|}) + (\frac{1}{|V_1|} + \frac{1}{|V_2|})\frac{1}{u_1(V_1) + u_2(V_2)}} \\
&\le \frac{1}{(1 - \frac{1}{|V_1|} - \frac{1}{|V_2|}) + (\frac{1}{|V_1|} + \frac{1}{|V_2|})\frac{1}{2}} \\
&= \frac{2}{2 - \frac{1}{|V_1|} - \frac{1}{|V_2|}}.
\end{align*}
This shows that the utilitarian PoC is at most $\frac{2}{2 - 1/|V_1| - 1/|V_2|}$, and completes the proof.
\end{proof}

For trees, we find that the utilitarian PoC depends only on the number of vertices.
This is in contrast to the egalitarian PoC, which depends on the maximum degree among the vertices in the tree (\Cref{thm:egal:2-agent-1connected}).

\begin{theorem}
\label{thm:util:2-agent:tree}
Let $G$ be a tree.
Then, $\utilPrice(G, 2) = \frac{2 (m-1)}{m}$.
\end{theorem}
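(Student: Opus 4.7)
My plan is to reformulate everything via the differences $\Delta_v := u_2(v) - u_1(v)$: by normalization, any allocation $(A_1, A_2)$ has utilitarian welfare $1 + S(A_2)$ with $S(A_2) := \sum_{v \in A_2} \Delta_v$, while $\utilOPT(I) = 1 + p$ for $p := \sum_v \Delta_v^+ \in [0, 1]$. In a tree, a connected allocation corresponds to removing some edge~$e$ and assigning the two resulting subtrees $T_1(e), T_2(e)$; the trivial options $A_2 \in \{\emptyset, V\}$ yield welfare~$1$ and are dominated once any edge gives a nontrivial split. Since $S(T_2(e)) = -S(T_1(e))$, setting $s_e := S(T_1(e))$, the best connected welfare equals $1 + \max_e |s_e|$.

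For the upper bound, the key step is proving $\max_e |s_e| \geq p/(m-1)$. I will root~$G$ at an arbitrary vertex~$r$ and, for each non-root~$v$, identify $s_e$ (for the edge~$e$ joining~$v$ to its parent) with the subtree sum at~$v$, which I denote $s_v$. Writing $C_v$ for the children of~$v$, the recursion $\Delta_v = s_v - \sum_{w \in C_v} s_w$ for non-root~$v$ and $\Delta_r = -\sum_{w \in C_r} s_w$, combined with the triangle inequality and the identity $\sum_v |C_v| = m-1$, gives
\[
2p = \sum_v |\Delta_v| \leq \alpha\,|C_r| + \sum_{v \neq r} \alpha\,(1+|C_v|) = 2\alpha\,(m-1),
\]
where $\alpha := \max_{v \neq r}|s_v|$. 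Hence the best connected welfare is at least $1 + p/(m-1)$, and the welfare ratio is bounded by $\frac{(m-1)(1+p)}{m-1+p}$; this expression is non-decreasing in $p \in [0, 1]$ (its derivative has numerator $m-2 \geq 0$) and attains its maximum $\frac{2(m-1)}{m}$ at $p = 1$.

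For the lower bound, I will construct an instance saturating both the inequality and the extremal value $p = 1$. Root~$G$ at any vertex~$r$, and let $(A, B)$ be the bipartition of~$V$ by depth parity with $r \in A$. Prescribe target subtree sums $s_v := 1/(m-1)$ for $v \in A \setminus \{r\}$, $s_v := -1/(m-1)$ for $v \in B$, and $s_r := 0$. Recover $\Delta_v$ via the same recursion, and define $u_1(v) := \max(0, -\Delta_v)$, $u_2(v) := \max(0, \Delta_v)$. Because children of an $A$-vertex lie in~$B$ and vice versa, each~$\Delta_v$ has a definite sign matching its bipartition side. The bipartite edge-count identities in the rooted tree---$A$-vertices have $|B|$ children in total while $B$-vertices have $|A| - 1$---then force $\sum_{v \in A} \Delta_v = 1 = -\sum_{v \in B} \Delta_v$, so $u_1$ and $u_2$ are non-negative and normalized with $p = 1$. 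Every non-root subtree sum has magnitude exactly $1/(m-1)$, yielding best connected welfare $m/(m-1)$ and welfare ratio $\frac{2(m-1)}{m}$.

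The main obstacle will be the bookkeeping in the lower-bound construction: verifying that the prescribed subtree sums produce a legitimate normalized utility profile. This reduces to the edge-counting identity in the rooted bipartition, which fortuitously matches the constant appearing in the upper bound, making both halves of the theorem fit together cleanly.
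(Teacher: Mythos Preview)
Your proof is correct and follows essentially the same route as the paper. Both arguments hinge on the identity $\sum_v |\Delta_v| \le 2(m-1)\cdot\max_e|s_e|$, obtained via $|\Delta_v|\le \deg(v)\cdot\alpha$ (your rooted recursion $\Delta_v=s_v-\sum_{w\in C_v}s_w$ gives exactly this, since $1+|C_v|=\deg(v)$ for non-root vertices and $|C_r|=\deg(r)$); and your lower-bound instance, once you unwind the prescribed subtree sums, is precisely the paper's two-coloring with weights $\deg(v)/(m-1)$. The only presentational differences are that you argue directly and then optimize over~$p$, whereas the paper sets up a contradiction with a more involved threshold~$S$, and that you \emph{derive} the extremal instance from the tightness conditions of the upper bound rather than stating it outright.
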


\begin{proof}
We establish the lower bound and the upper bound in turn.

\paragraph{Lower Bound:}
Color the vertices of the tree in two colors (say, red and blue) so that no two adjacent vertices have the same color.
Consider an instance~$I$ where agent~$1$ only values the red vertices while agent~$2$ only values the blue vertices, and the value of each vertex~$v$ is $\frac{\deg(v)}{m - 1}$.
Since there are $m-1$ edges, and each edge is adjacent to exactly one red vertex and one blue vertex, each agent's total value is~$1$ and $\utilOPT(I) = 2$.
We claim that $\utilSW(\widehat{\alloc}) \leq \frac{m}{m-1}$ for every connected allocation $\widehat{\alloc} \in C(I)$.
Observe that a connected allocation $\widehat{\alloc}$ must partition the tree into two connected components, with exactly one edge connecting the two components.
Each edge connects a red vertex and a blue vertex, so it adds exactly $\frac{1}{m-1}$ to the utilitarian welfare; the only exception is the edge connecting the two components, which adds at most $\frac{2}{m-1}$.
Hence, we have $\utilSW(\widehat{\alloc}) \leq \frac{m}{m-1}$.
It follows that the utilitarian PoC is at least $\frac{2 (m-1)}{m}$.

\paragraph{Upper Bound:}
Consider an instance~$I$, and let $\Delta_v \coloneqq u_2(v) - u_1(v)$ for each~$v \in V$.
We have
\begin{align*}
\utilOPT(I) = \sum_{v \in V} \max \{u_1(v), u_2(v)\} = \sum_{v \in V} \frac{u_1(v) + u_2(v) + |u_2(v) - u_1(v)|}{2} = 1 + \frac{\sum_{v \in V} |\Delta_v|}{2}.
\end{align*}
Let $\Delta \coloneqq \max_{T} | \sum_{i \in T} \Delta_i|$, where the maximum is taken over all subtrees $T$ such that both $T$ and $G\setminus T$ are connected.
We claim that $\max_{\alloc \in \connectedAllocsOf{I}} \utilSW(\alloc) = 1 + \Delta$.
To see this, observe that the maximum utilitarian welfare under connectivity can be obtained by first allocating all vertices to agent~$1$, and then giving a connected subtree $T$ to agent~$2$ such that $G\setminus T$ remains connected and the sum $\sum_{i \in T} \Delta_i$ is maximized.

In order to show that $\frac{\utilOPT(I)}{\max_{\alloc \in \connectedAllocsOf{I}} \utilSW(\alloc)} \le \frac{2 (m-1)}{m}$, it therefore suffices to show that
\begin{align*}
\frac{2 (m-1)}{m} &\geq \frac{1 + \frac{\sum_{v \in V} |\Delta_v|}{2}}{1 + \Delta}. 
\end{align*}
Upon rearranging, this becomes
\begin{align*}
(2m - 2)(1+\Delta) &\geq m + \frac{m}{2} \sum_{v \in V} |\Delta_v|,
\end{align*}
or equivalently,
\begin{align*}
(2m - 2) \Delta &\geq \frac{m}{2} \left( \sum_{v \in V} |\Delta_v| \right) - (m - 2),
\end{align*}
or
\begin{align*}
\Delta &\geq \frac{\frac{m}{2} \left( \sum_{v \in V} |\Delta_v| \right) - (m - 2)}{2m - 2}.
\end{align*}

It remains to prove the last inequality.
Let $S = \frac{\frac{m}{2} \left( \sum_{v \in V} |\Delta_v| \right) - (m - 2)}{2m - 2}$, and assume for contradiction that $\Delta < S$.
We claim that $|\Delta_v| < \deg(v) \cdot S$ for each vertex~$v \in V$.
Denote by $T_1,\dots,T_{\deg(v)}$ the $\deg(v)$ subtrees emanating from $v$ (where $v$ itself is not included in the subtrees).
By definition of $\Delta$, for each $i \in [\deg(v)]$, it holds that $|\sum_{j \in T_i} \Delta_j| \le \Delta < S$.
Therefore, we have
\begin{align*}
|\Delta_v|
&= \left|\Delta_v + \sum_{i\in[\deg(v)-1]}\sum_{j\in T_i}\Delta_j - \sum_{i\in[\deg(v)-1]}\sum_{j\in T_i}\Delta_j\right| \\
&\le \left|\Delta_v + \sum_{i\in[\deg(v)-1]}\sum_{j\in T_i}\Delta_j\right| + \sum_{i\in[\deg(v)-1]}
\left|\sum_{j\in T_i}\Delta_j\right| \\
&= \left|\sum_{j\in T_{\deg(v)}}\Delta_j\right| + \sum_{i\in[\deg(v)-1]}
\left|\sum_{j\in T_i}\Delta_j\right| \\
&< \deg(v)\cdot S,
\end{align*}
where the first inequality follows from the triangle inequality, and the second equality from the fact that $\sum_{v\in V}\Delta_v = 0$.
Summing this over all vertices~$v \in V$, we get
\begin{align*}
\sum_{v \in V} |\Delta_v| < S \cdot \sum_{v \in V} \deg(v) &= \frac{\frac{m}{2} \left( \sum_{v \in V} |\Delta_v| \right) - (m - 2)}{2m - 2} \cdot 2 (m-1) 
= \frac{m}{2} \left( \sum_{v \in V} |\Delta_v| \right) - (m - 2). 
\end{align*}
This is equivalent to
\begin{align*}
m - 2 &< \frac{m - 2}{2} \sum_{v \in V} |\Delta_v|,
\end{align*}
which implies that $\sum_{v \in V} |\Delta_v| > 2$.
On the other hand, by the triangle inequality, we have $\sum_{v \in V} |\Delta_v| \leq \sum_{v \in V} (u_1(v) + u_2(v)) = 2$.
This yields the desired contradiction.
\end{proof}

By using similar techniques, we can also establish the utilitarian PoC for cycles.

\begin{theorem}
\label{thm:util:2-agent:cycle}
Let $G$ be a cycle.
Then, $\utilPrice(G, 2) = \frac{2k}{k + 1}$, where $k = \floor*{\frac{m}{2}}$.
\end{theorem}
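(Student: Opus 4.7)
The plan is to recast the problem in terms of $\Delta_v := u_2(v) - u_1(v)$ and $X := \sum_{v \in V} |\Delta_v|$, in direct parallel with the proof of \Cref{thm:util:2-agent:tree}. On a cycle, every connected partition into two non-empty bundles arises by cutting exactly two edges, so each bundle is an arc of~$G$. Using $\sum_v \Delta_v = 0$ together with the normalization $u_1(V) = u_2(V) = 1$, one obtains $\utilOPT(I) = 1 + X/2$ and $\max_{\alloc \in \connectedAllocsOf{I}} \utilSW(\alloc) = 1 + \Delta$, where
\[
\Delta := \max_{A \text{ an arc of } G} \left|\sum_{v \in A} \Delta_v\right|.
\]
The theorem therefore reduces to proving $\Delta \ge \frac{(k+1)X - 2(k-1)}{4k}$, which is what forces $\frac{1+X/2}{1+\Delta} \le \frac{2k}{k+1}$.

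For the upper bound, the key step I would prove is the stronger inequality $\Delta \ge X/(2k)$; a short algebraic check shows that, combined with the triangle-inequality bound $X \le u_1(V) + u_2(V) = 2$, this implies $\Delta \ge \frac{(k+1)X - 2(k-1)}{4k}$. To establish $\Delta \ge X/(2k)$, I would introduce the cyclic prefix sums $P_0 = 0, P_1, \dots, P_m = 0$. Every arc sum is a difference $P_j - P_i$, so $\Delta = \max_i P_i - \min_i P_i$. Decomposing the closed walk $(P_i)$ into its alternating ascending and descending phases (equal numbers $p$ of each), each ascending phase contributes at most $\Delta$ to the total ascent $X/2$, giving $X \le 2p\Delta$. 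Peaks and valleys occupy $2p$ distinct cyclic positions, so $2p \le m$; since $2p$ is even, this yields $p \le \lfloor m/2 \rfloor = k$ regardless of parity, and hence $\Delta \ge X/(2k)$.

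For the lower bound, I would treat the two parities of~$m$ separately. When $m = 2k$, colour the cycle alternately with colour classes $R$ and $Y$ and give agent~$i$ uniform value $1/k$ on one of the two classes; then $\utilOPT = 2$, while any arc $A$ satisfies $\bigl||A\cap R|-|A\cap Y|\bigr| \le 1$, so every connected allocation has welfare at most $(k+1)/k$, giving the ratio $2k/(k+1)$. When $m = 2k+1$, designate one vertex as a dummy valued at~$0$ by both agents and apply the same alternating pattern to the remaining $2k$ vertices; arcs that avoid the dummy are sub-paths with alternating colours, and arcs containing it have complements that avoid it, so $|r-y|\le 1$ continues to hold and the same ratio is achieved.

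The main obstacle is the odd case, where the naive bound $|\Delta_v| \le \Delta$ only yields $\Delta \ge X/m = X/(2k+1)$, which is too weak. The peaks-and-valleys decomposition exploits the parity obstruction on an odd cycle: because $2p$ must be even, a cycle of length $2k+1$ supports at most $\lfloor m/2\rfloor = k$ peaks, one fewer than the $m/2$ peaks admissible when~$m$ is even. This extra saving is precisely what replaces $m$ by $2k$ in the bound and makes the argument match the tight dummy-vertex construction.
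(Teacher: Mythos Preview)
Your proposal is correct and follows the same overall reduction as the paper: both rewrite $\utilOPT(I) = 1 + X/2$ and $\max_{\alloc \in \connectedAllocsOf{I}} \utilSW(\alloc) = 1 + \Delta$ with $X = \sum_v |\Delta_v|$ and $\Delta$ the maximum arc sum in absolute value, and both lower-bound constructions are the alternating instance (with a dummy vertex when $m$ is odd).

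Where you diverge is in the proof of the key inequality $X \le 2k\Delta$. The paper argues element-wise: in the even case it simply uses $|\Delta_i| \le \Delta$ for all $m = 2k$ indices, and in the odd case it invokes the pigeonhole fact that an odd cycle must contain two consecutive indices $i,i+1$ with $\Delta_i,\Delta_{i+1}$ of the same sign, so $|\Delta_i| + |\Delta_{i+1}| = |\Delta_i + \Delta_{i+1}| \le \Delta$; summing over this pair and the remaining $2k-1$ singletons yields $X \le 2k\Delta$. Your route instead passes to the prefix sums $P_0,\dots,P_m$, identifies $\Delta = \max_i P_i - \min_i P_i$, and bounds the total ascent $X/2$ by $p\Delta$ where $p$ is the number of ascending phases; the parity observation $2p \le m \Rightarrow p \le \lfloor m/2\rfloor = k$ then gives the same bound uniformly for both parities. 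Both arguments exploit exactly the same obstruction (an odd cycle cannot be properly $2$-coloured by the signs of the $\Delta_i$), but the paper's version is slightly more elementary---it never introduces prefix sums---while yours is more structural and treats even and odd $m$ in one stroke. One small point worth making explicit in your write-up: when some $\Delta_i$ vanish, the phrase ``peaks and valleys occupy $2p$ distinct positions'' needs a word of justification (absorb zero steps into an adjacent phase, or count sign changes among the nonzero $\Delta_i$ only), but this does not affect correctness.
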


\begin{proof}
We distinguish between odd- and even-length cycles.

\paragraph{Case~1: $m$ is even.}
The cycle contains $m = 2k$ vertices, denoted by $1, 2, \dots, 2k$.
We first show that the utilitarian PoC is at least $\frac{2k}{k+1}$.
Consider an instance with the following valuations:
\begin{itemize}
\item agent~$1$ values odd vertices equally, i.e., $u_1(1) = u_1(3) = \cdots = u_1(2k-1) = 1/k$;
\item agent~$2$ values even vertices equally, i.e., $u_2(2) = u_2(4) = \cdots = u_2(2k) = 1/k$.
\end{itemize}
The optimal utilitarian welfare overall is~$2$, while the optimal utilitarian welfare under connectivity is $\frac{k+1}{k}$, achieved by giving vertex~$1$ to agent~$1$ and the remaining vertices to agent~$2$.
This implies that $\utilPrice(G, 2) \geq \frac{2k}{k + 1}$.

We now show that the PoC is at most~$\frac{2k}{k + 1}$.
Let $\Delta_i \coloneqq u_2(i) - u_1(i)$ for each~$i \in [2k]$.
We have
\begin{align*}
\utilOPT(I) = \sum_{i \in [2k]} \max\{u_1(i), u_2(i)\} = \sum_{i \in [2k]} \frac{u_1(i) + u_2(i) + |u_2(i) - u_1(i)|}{2} = 1 + \frac{\sum_{i \in [2k]} |\Delta_i|}{2}.
\end{align*}
Let $\Delta \coloneqq \max_{P} |\sum_{i \in P} \Delta_i|$, where the maximum is taken over all paths~$P$ in the cycle (including single vertices).
We claim that $\max_{\alloc \in \connectedAllocsOf{I}} \utilSW(\alloc) = 1 + \Delta$.
To see this, observe that the maximum utilitarian welfare under connectivity can be obtained by first allocating all vertices to agent~$1$, and then giving a path~$P$ to agent~$2$ such that the sum $\sum_{i\in P} \Delta_i$ is maximized.

In order to show that $\frac{\utilOPT(I)}{\max_{\alloc \in \connectedAllocsOf{I}} \utilSW(\alloc)} \le \frac{2k}{k+1}$, it therefore suffices to show that
\begin{align*}
\frac{2k}{k + 1} &\geq \frac{1 + \frac{\sum_{i \in [2k]} |\Delta_i|}{2}}{1 + \Delta}.
\end{align*}
Upon rearranging, this becomes
\begin{align*}
2k + 2k \cdot \Delta &\geq k + 1 + \frac{k+1}{2} \sum_{i \in [2k]} |\Delta_i|,
\end{align*}
or equivalently,
\begin{align*}
\Delta &\geq \frac{1 - k + \frac{k+1}{2} \sum_{i \in [2k]} |\Delta_i|}{2k}.
\end{align*}

It remains to prove the last inequality.
Let $S = \frac{1 - k + \frac{k+1}{2} \sum_{i \in [2k]} |\Delta_i|}{2k}$, and assume for contradiction that $\Delta < S$.
In particular, $|\Delta_i| < S$ for all $i \in [2k]$.
We therefore have
\begin{align*}
\sum_{i \in [2k]} |\Delta_i| &< 2k \cdot S = 2k \cdot \frac{1 - k + \frac{k+1}{2} \sum_{i \in [2k]} |\Delta_i|}{2k} = 1 - k + \frac{k+1}{2} \sum_{i \in [2k]} |\Delta_i|.
\end{align*}
This is equivalent to
\begin{align*}
k - 1 &< \frac{k-1}{2} \sum_{i \in [2k]} |\Delta_i|,
\end{align*}
which implies that $\sum_{i \in [2k]} |\Delta_i| > 2$.
On the other hand, by the triangle inequality, we have $\sum_{i \in [2k]} |\Delta_i| \leq \sum_{i \in [2k]} (u_1(i) + u_2(i)) = 2$.
This yields the desired contradiction.

\paragraph{Case~2: $m$ is odd.}
The cycle contains $m = 2k + 1$ vertices, denoted by $1,2,\dots,2k+1$.
The same instance as in Case~1, with the exception that vertex~$2k+1$ is of value~$0$ to both agents, shows that the utilitarian PoC is at least~$\frac{2k}{k+1}$.

We show that the PoC is at most $\frac{2k}{k + 1}$ using a similar argument as in Case~1.
Note that since $2k+1$ is odd, there must exist consecutive $\Delta_i$, $\Delta_{i+1}$ with the same sign (where $\Delta_{2k + 2} = \Delta_1$, and $0$ is considered to have either sign).
Since vertices $i$ and $i+1$ together form a path, we have $|\Delta_i| + |\Delta_{i+1}| = |\Delta_i + \Delta_{i+1}| < \Delta < S$.
Repeating the same argument, we end up with
\begin{align*}
\sum_{i \in [2k+1]} |\Delta_i| &< ((2k + 1) - 1) \cdot S = 1 - k + \frac{k + 1}{2} \sum_{i \in [2k+1]} |\Delta_i|.
\end{align*}
This is equivalent to
\begin{align*}
k - 1 &< \frac{k - 1}{2} \sum_{i \in [2k+1]} |\Delta_i|,
\end{align*}
which implies that $\sum_{i \in [2k+1]} |\Delta_i| > 2$ and leads to the same contradiction as in Case~1.
\end{proof}

\subsection{Any Number of Agents}
Next, we consider the general case where the number of agents can be arbitrary.
We will show that for stars, paths, and cycles, the utilitarian PoC is $\Theta(n)$ for large~$m$.

\begin{proposition}
\label{prop:uswupperbound}
Let $G$ be any graph and $n\ge 2$.
Then, $\utilPrice(G, n) \leq n$.
\end{proposition}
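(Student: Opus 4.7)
The plan is to combine two very elementary bounds: one upper bound on $\utilOPT(I)$ coming from normalization, and one lower bound on the optimal connected utilitarian welfare coming from the trivial allocation that gives every item to a single agent.

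First I would bound $\utilOPT(I)$ from above. For any instance $I$, the optimal unconstrained utilitarian welfare is $\utilOPT(I) = \sum_{v \in V} \max_{i \in N} u_i(v)$, obtained by assigning each item to an agent who values it most. Since $\max_{i \in N} u_i(v) \leq \sum_{i \in N} u_i(v)$ for each vertex $v$, I obtain
\[
\utilOPT(I) \;\leq\; \sum_{v \in V} \sum_{i \in N} u_i(v) \;=\; \sum_{i \in N} u_i(V) \;=\; n,
\]
where the last equality uses the normalization assumption $u_i(V) = 1$ for every agent.

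Next I would lower-bound $\max_{\alloc \in \connectedAllocsOf{I}} \utilSW(\alloc)$. Consider the allocation $\alloc^*$ that assigns all of $V$ to a single agent $i^*$ (and the empty bundle to each other agent). Since $G$ is connected, the bundle $V$ is connected; the empty bundle is trivially connected, so $\alloc^* \in \connectedAllocsOf{I}$. Choosing $i^*$ arbitrarily gives $\utilSW(\alloc^*) = u_{i^*}(V) = 1$, so $\max_{\alloc \in \connectedAllocsOf{I}} \utilSW(\alloc) \geq 1$.

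Combining the two bounds yields a utilitarian welfare ratio of at most $n/1 = n$ whenever $\utilOPT(I) > 0$; when $\utilOPT(I) = 0$ the ratio equals $1 \leq n$ by the convention stated in the definition. Taking the supremum over instances gives $\utilPrice(G, n) \leq n$. I expect no real obstacle here: the argument is just arithmetic on the normalization, and the main observation is that the ``give everything to one agent'' allocation is always connected and always achieves utilitarian welfare exactly $1$.
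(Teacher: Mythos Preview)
Your proof is correct and essentially identical to the paper's: both use the normalization bound $\utilOPT(I)\le n$ together with the observation that allocating all items to a single agent is a connected allocation with utilitarian welfare $1$. The paper's version is just terser, omitting the explicit sum manipulation and the $0/0$ convention you spell out.
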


\begin{proof}
Given any instance~$I$, the optimal utilitarian welfare~$\utilOPT(I)$ is at most~$n$, as each agent has a total utility of~$1$ for the entire set of items.
On the other hand, a connected allocation that gives all items to the same agent yields a utilitarian welfare of~$1$.
Hence, the utilitarian PoC is at most~$n$.
\end{proof}

While the trivial upper bound of $n$ holds for any graph, we can improve it for stars, paths, and cycles.

\begin{theorem}
\label{thm:utilitarian-upper-path}
Let $G$ be a path and $n\ge 3$.
Then, $\utilPrice(G, n) \leq n - \frac{1}{nm}$.
\end{theorem}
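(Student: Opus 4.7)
The plan is to prove the bound by a case analysis on $U := \utilOPT(I)$, writing $W := \max_{\alloc \in \connectedAllocsOf{I}}\utilSW(\alloc)$ for the best connected welfare. If $U \leq n - 1/(nm)$, then the connected allocation that assigns all items to a single agent achieves welfare~$1$ (as in \Cref{prop:uswupperbound}), so $U/W \leq U \leq n - 1/(nm)$ immediately.

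The harder case is $U > n - 1/(nm)$. Setting $S(v) := \sum_i u_i(v)$, the hypothesis forces the total overlap $\sum_v (S(v) - \max_i u_i(v)) = n - U$ to be strictly less than $1/(nm)$, so the utilitarian-optimal allocation is nearly conflict-free. I would select $v^* \in \arg\max_v \max_i u_i(v)$, which by averaging $U = \sum_v \max_i u_i(v)$ over the $m$ vertices satisfies $\max_i u_i(v^*) \geq U/m$. Let $i^* \in \arg\max_i u_i(v^*)$; then the overlap at $v^*$, namely $S(v^*) - u_{i^*}(v^*)$, is bounded by the total overlap $n - U$ since each individual term of that sum is non-negative.

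I would then consider the following family of connected allocations, using the assumption $n \geq 3$: write the path as $L \cup \{v^*\} \cup R$ with $L$ and $R$ the subpaths on either side of $v^*$ (possibly empty when $v^*$ is an endpoint), and for each pair $(j_L, j_R)$ of distinct agents, both different from $i^*$, assign $\{v^*\}$ to $i^*$, $L$ to $j_L$, $R$ to $j_R$, and empty bundles to the remaining $n - 3$ agents. Averaging $u_{j_L}(L) + u_{j_R}(R)$ over the $(n-1)(n-2)$ admissible pairs telescopes, using $S(L) + S(R) = n - S(v^*)$ and $u_{i^*}(L) + u_{i^*}(R) = 1 - u_{i^*}(v^*)$, to $(n - 1 - S(v^*) + u_{i^*}(v^*))/(n-1)$. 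Since the maximum is at least the average, some allocation in the family achieves welfare at least $1 + u_{i^*}(v^*) - (S(v^*) - u_{i^*}(v^*))/(n-1)$; substituting $u_{i^*}(v^*) \geq U/m$ and $S(v^*) - u_{i^*}(v^*) \leq n - U$ yields $W \geq 1 + U/m - (n - U)/(n - 1)$.

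The main obstacle will be verifying the algebraic inequality $1 + U/m - (n - U)/(n - 1) \geq U/(n - 1/(nm))$ for $U \in (n - 1/(nm), n]$. The difference, viewed as a function of $U$, is linear with slope $1/m + 1/(n-1) - 1/(n - 1/(nm))$, which is positive because $1/(n - 1/(nm)) < 1/(n-1)$ for $n \geq 2$. Hence the worst case is $U = n - 1/(nm)$, where a direct computation shows the difference equals $(n^2 - 1/m - 1/(n-1))/(nm)$; this is strictly positive for $n \geq 3$ and $m \geq 1$, completing the bound.
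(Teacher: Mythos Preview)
Your proof is correct and takes a genuinely different route from the paper's. The paper works with only \emph{two} fixed agents: it picks an item $g_j$ that agent~$1$ values at least $1/m$ (hence agent~$2$ values it below $1/(nm)$), and then uses a prefix-sum / triangle-inequality argument to locate a single cut of the path where the two agents' cumulative values differ by at least $r=\frac{n-1}{2nm}$, yielding a two-bundle connected allocation with welfare at least $1+r$; the final step is the explicit algebraic check $\frac{n}{1+r}\le n-\frac{1}{nm}$. Your argument instead uses three agents---the top agent $i^*$ at the globally best vertex $v^*$, plus two others for the left and right subpaths---and replaces the cut-finding step by an averaging over the $(n-1)(n-2)$ ordered pairs $(j_L,j_R)$, obtaining the cleaner bound $W\ge 1+U/m-(n-U)/(n-1)$ that depends explicitly on $U$. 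The paper's approach has the virtue that its construction already works for $n=2$ (only the final algebra needs $n\ge3$), whereas yours genuinely requires three distinct agents to populate the three pieces; on the other hand, your averaging argument is more symmetric, avoids the somewhat ad hoc prefix-sum case split, and gives a welfare lower bound that tracks $U$ rather than just the constant $1+r$.
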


\begin{proof}
Consider any instance~$I$ with a path~$G$ and $n\ge 3$, and let the items on the path be $g_1,g_2,\dots,g_m$ in this order.
If $\utilOPT(I) \le n - \frac{1}{nm}$, then since a connected allocation that gives all items to the same agent yields a utilitarian welfare of~$1$, we are done.
Thus, assume henceforth that $\utilOPT(I) > n - \frac{1}{nm}$.
In particular, this means that each item can yield utility at least $\frac{1}{nm}$ to at most one agent.

Since agent~$1$ has a total utility of $1$ for all items, there exists an item $g_j$ such that $u_1(g_j) \ge \frac{1}{m}$.
From the previous paragraph, it must hold that $u_2(g_j) < \frac{1}{nm}$.
Let 
\begin{align*}
x_1 = \sum_{z = 1}^{j-1}u_1(g_z),\quad y_1 = \sum_{z = 1}^j u_1(g_z),\quad x_2 = \sum_{z = 1}^{j-1}u_2(g_z), \quad y_2 = \sum_{z = 1}^j u_2(g_z).
\end{align*}
We have $y_1 - x_1 \ge \frac{1}{m}$ and $y_2 - x_2 < \frac{1}{nm}$.
Let $r = \frac{1}{2}\left(\frac{1}{m}-\frac{1}{nm}\right) = \frac{n-1}{2nm}$.
If $|x_2-x_1| < r$ and $|y_2-y_1| < r$, then by the triangle inequality,
\begin{align*}
y_1 - x_1 \le |y_1-y_2|+|y_2-x_2|+|x_2-x_1|
< r + \frac{1}{nm} + r
= \frac{1}{m},
\end{align*}
a contradiction.
Hence, $|x_2-x_1| \ge r$ or $|y_2-y_1| \ge r$.

We claim that in either case, $\max_{\alloc \in \connectedAllocsOf{I}} \utilSW(\alloc) \ge 1 + r$.
Suppose first that $|x_2-x_1| \ge r$.
If $x_2 - x_1 \ge r$, then the connected allocation that gives $g_1,\dots,g_{j-1}$ to agent~$2$ and $g_j,\dots,g_m$ to agent~$1$ yields utilitarian welfare at least $1+r$.
Analogously, if $x_1 - x_2 \ge r$, then the same allocation but with the two agents switched yields utilitarian welfare at least $1+r$.
The case where $|y_2 - y_1| \ge r$ can be handled in a similar manner, with the role of $g_{j-1}$ replaced by $g_j$.
Since $\utilOPT(I) \le n$, in order to show that $\utilPrice(G, n) \leq n - \frac{1}{nm}$, it suffices to show that $\frac{n}{1+r} \le n - \frac{1}{nm}$.

Observe that for any $n\ge 3$, it holds that $n^3 \ge n^2 + 3n - 1$.
This implies that $m(n^3 - n^2 - 2n) \ge n^3 - n^2 - 2n \ge n-1$, that is, $n^3m - n^2m - 2nm - n + 1 \ge 0$.
Therefore,
\begin{align*}
2n^3m^2 
&\le 2n^3m^2 + n^3m - n^2m - 2nm - n + 1 
= (n^2m-1)(2nm+n-1).
\end{align*}
It follows that
\begin{align*}
\frac{n}{1+r} = \frac{n}{1 + \frac{n-1}{2nm}} =
\frac{2n^2m}{2nm+n-1} \le \frac{n^2m-1}{nm} = n - \frac{1}{nm},
\end{align*}
completing the proof.
\end{proof}

Since a path can be obtained from a cycle by removing any edge, we immediately get the following corollary.

\begin{corollary}
Let $G$ be a cycle and $n\ge 3$.
Then, $\utilPrice(G, n) \leq n - \frac{1}{nm}$.  
\end{corollary}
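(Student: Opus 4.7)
The plan is to exploit the fact that a cycle strictly contains more edges than a spanning path, so every connected allocation on the path remains connected on the cycle. Concretely, given any instance $I = \langle N, G, \utilityProfile \rangle$ where $G$ is a cycle, I would pick an arbitrary edge $e$ of $G$ and let $G' \coloneqq G \setminus \{e\}$ be the resulting path on the same vertex set. Let $I' = \langle N, G', \utilityProfile \rangle$ be the instance obtained by replacing $G$ with $G'$ while keeping the utility profile unchanged.

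The first key observation is that $\utilOPT(I') = \utilOPT(I)$, since the optimal unconstrained utilitarian welfare depends only on the utilities (each item is simply given to an agent who values it most) and not on the graph structure. The second key observation is that $\connectedAllocsOf{I'} \subseteq \connectedAllocsOf{I}$: any bundle forming a connected subgraph of the path $G'$ also forms a connected subgraph of the cycle $G$, since $G$ contains all edges of $G'$. Consequently,
\[
\max_{\alloc \in \connectedAllocsOf{I}} \utilSW(\alloc) \;\geq\; \max_{\alloc \in \connectedAllocsOf{I'}} \utilSW(\alloc).
\]

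Applying \Cref{thm:utilitarian-upper-path} to $I'$, we get $\utilOPT(I') \leq \bigl(n - \tfrac{1}{nm}\bigr) \cdot \max_{\alloc \in \connectedAllocsOf{I'}} \utilSW(\alloc)$. Combining this with the two observations above yields
\[
\frac{\utilOPT(I)}{\max_{\alloc \in \connectedAllocsOf{I}} \utilSW(\alloc)} \;\leq\; \frac{\utilOPT(I')}{\max_{\alloc \in \connectedAllocsOf{I'}} \utilSW(\alloc)} \;\leq\; n - \frac{1}{nm}.
\]
Taking the supremum over all instances $I$ gives $\utilPrice(G, n) \leq n - \frac{1}{nm}$, as desired. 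There is no real obstacle here; the statement is an immediate corollary of the path result via the edge-removal reduction.
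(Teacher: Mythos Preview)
Your proof is correct and follows exactly the same approach as the paper, which simply notes that removing any edge from a cycle yields a path and invokes \Cref{thm:utilitarian-upper-path}. Your argument just spells out the details of this edge-removal reduction.
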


We also establish the same upper bound for stars.

\begin{theorem}
Let $G$ be a star and $n\ge 3$.
Then, $\utilPrice(G, n) \leq n - \frac{1}{nm}$.  
\end{theorem}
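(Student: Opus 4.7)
The proof will follow the same template as the path case (\Cref{thm:utilitarian-upper-path}). I case-split on $\utilOPT(I)$. In \emph{Case~1}, when $\utilOPT(I) \leq n - \frac{1}{nm}$, the connected allocation that gives every vertex to a single agent yields utilitarian welfare~$1$, so the ratio is already at most $n - \frac{1}{nm}$ and we are done. In \emph{Case~2}, $\utilOPT(I) > n - \frac{1}{nm}$, and the goal is to exhibit a connected allocation with utilitarian welfare at least $1 + r$ for $r := \frac{1}{n^2m-1}$; since $\utilOPT(I)\le n$, this would give ratio at most $\frac{n}{1+r} = n - \frac{1}{nm}$.

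Let $c$ denote the center of the star and fix $a_c \in \argmax_i u_i(c)$. Because $c$ is a cut vertex, any connected allocation assigns exactly one agent a bundle containing $c$, and every other agent receives at most a single leaf. I will focus on the \emph{single-swap} family of connected allocations: pick a leaf $\ell$ and an agent $i \neq a_c$, give $\ell$ to $i$, and give $c$ together with all remaining leaves to $a_c$. Each such allocation has utilitarian welfare $u_{a_c}(V)-u_{a_c}(\ell)+u_i(\ell) = 1 + (u_i(\ell) - u_{a_c}(\ell))$. Supposing for contradiction that no single swap reaches welfare $1 + r$, we get $u_i(\ell) - u_{a_c}(\ell) < r$ for every $i \neq a_c$ and every leaf $\ell$, hence $\max_i u_i(\ell) < u_{a_c}(\ell) + r$ on each leaf (the case $i=a_c$ is trivial). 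Summing over the $m-1$ leaves and adding $u_{a_c}(c) = \max_i u_i(c)$ gives $\utilOPT(I) < u_{a_c}(c) + (1 - u_{a_c}(c)) + (m-1)r = 1 + (m-1)r$, which combined with the Case~2 hypothesis forces $r > \frac{n - 1 - 1/(nm)}{m-1}$.

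The remaining step is the algebraic check that our choice $r = \frac{1}{n^2m-1}$ satisfies the reverse inequality $\frac{1}{n^2m - 1} \leq \frac{n - 1 - 1/(nm)}{m-1}$; expanding and simplifying reduces this to $m - 2 + 2n - \frac{1}{nm} \leq mn^2(n-1)$, which holds for all $n \geq 3$ and $m \geq 1$ because $m(n^3-n^2-1) - (m-2+2n-\tfrac{1}{nm}) \ge (n^3-n^2-2n+1) = n(n-2)(n+1)+1 \ge 13$. The main obstacle distinguishing this proof from the path case is the asymmetric role of the center: the only connected bundles not containing $c$ are single leaves, so one cannot vary bundle boundaries as freely as along a path, and constructions such as ``give $c$ alone to one agent and all leaves to another'' are simply not connected for $m \geq 3$. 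The key structural move is committing to $a_c = \argmax_i u_i(c)$ from the start, so that when one sums the pointwise bound $\max_i u_i(\ell) < u_{a_c}(\ell) + r$ across leaves, the term $u_{a_c}(c)$ absorbs exactly the center's contribution to $\utilOPT(I)$ and yields the clean bound $\utilOPT(I) < 1 + (m-1)r$.
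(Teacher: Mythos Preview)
Your proof is correct; the overall architecture (split on whether $\utilOPT(I)\le n-\tfrac{1}{nm}$, and in the hard case exhibit a connected allocation with welfare at least $1+r$) matches the paper, but the way you produce the good allocation differs from the paper's argument.

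The paper's construction is direct. From $\utilOPT(I)>n-\tfrac{1}{nm}$ it observes that at most one agent can value any fixed item by $\tfrac{1}{nm}$ or more; hence some agent (say agent~$1$) values the center below $\tfrac{1}{nm}$ and therefore values some leaf $g$ by at least $\tfrac{1}{m}$, while any other agent (say agent~$2$) values $g$ below $\tfrac{1}{nm}$. Giving $g$ to agent~$1$ and the rest to agent~$2$ already yields welfare at least $1+\tfrac{1}{m}-\tfrac{1}{nm}=1+2r$ for the same $r=\tfrac{n-1}{2nm}$ used in the path case, and the inequality $\tfrac{n}{1+r}\le n-\tfrac{1}{nm}$ is simply quoted from the path proof. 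Your route instead fixes the center-maximizer $a_c$ and argues by averaging: if no single leaf can be profitably handed to another agent, then $a_c$ is near-optimal on every leaf, forcing $\utilOPT(I)<1+(m-1)r$. You then pick $r=\tfrac{1}{n^2m-1}$ so that $\tfrac{n}{1+r}=n-\tfrac{1}{nm}$ exactly and do the algebra from scratch. Your version is self-contained (no reference back to the path inequality) and the averaging argument is pleasant, at the cost of a slightly heavier algebraic tail; the paper's version is shorter because it identifies one specific good swap rather than ruling out that all swaps are bad.

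One minor slip: in your final line you write $m(n^3-n^2-1)-(m-2+2n-\tfrac{1}{nm})$, but the $m$ has already been absorbed into the first factor via $mn^2(n-1)-m=m(n^3-n^2-1)$, so the subtrahend should be $(-2+2n-\tfrac{1}{nm})$. With that correction the expression indeed evaluates to at least $n^3-n^2-2n+1+\tfrac{1}{nm}>n(n-2)(n+1)+1\ge 13$, and your conclusion stands.
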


\begin{proof}
Consider an instance~$I$ with a star~$G$ and $n\ge 3$.
We will assume that $\utilOPT(I) > n - \frac{1}{nm}$ and show that $\max_{\alloc \in \connectedAllocsOf{I}} \utilSW(\alloc) \ge 1 + r$, where $r = \frac{1}{2}\left(\frac{1}{m}-\frac{1}{nm}\right)$.
As in the proof of \Cref{thm:utilitarian-upper-path}, this is sufficient to imply that $\utilPrice(G, n) \leq n - \frac{1}{nm}$.

Since $\utilOPT(I) > n - \frac{1}{nm}$, each item can yield utility at least $\frac{1}{nm}$ to at most one agent.
Assume without loss of generality that agent~$1$ values the center of the star at most $\frac{1}{nm} < \frac{1}{m}$, so the agent values some leaf~$g$ at least $\frac{1}{m}$.
Therefore, agent~$2$ values $g$ at most $\frac{1}{nm}$.
The connected allocation that gives $g$ to agent~$1$ and the remaining items to agent~$2$ yields utilitarian welfare at least $1 + \frac{1}{m} - \frac{1}{nm} = 1+2r > 1 + r$.
It follows that $\max_{\alloc \in \connectedAllocsOf{I}} \utilSW(\alloc) \ge 1 + r$, as desired.
\end{proof}

Moreover, when $m$ is small relative to $n$, we can obtain improved bounds.
First, for a path with $3$~vertices (which is also a star with $3$ vertices), we present a tight bound.

\begin{proposition}
Let $G$ be a path with $m = 3$~vertices and $n\ge 2$.
Then, $\utilPrice(G, n) = 4/3$.
\end{proposition}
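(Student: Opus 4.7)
The plan is to prove matching lower and upper bounds of $4/3$ by constructing a tight worst-case instance and then handling the general case via a small set of cleverly chosen connected allocations. Label the vertices of the path as $v_1,v_2,v_3$ in order, and write $a_{i,j} := u_i(\{v_j\})$. For the lower bound I will use the instance where agent~$1$ values $v_1$ and $v_3$ at $1/2$ each, while agents $2,3,\dots,n$ each place their entire unit mass on $v_2$. The unconstrained optimum is~$2$, attained by giving $\{v_1,v_3\}$ to agent~$1$ and $v_2$ to any other agent; but $\{v_1,v_3\}$ is disconnected in the $3$-vertex path, so in any connected allocation agent~$1$ receives at most one of $v_1,v_3$, and since no other agent values $v_1$ or $v_3$, a short case check shows the connected optimum equals $3/2$, yielding a welfare ratio of $4/3$.

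For the upper bound I will fix an arbitrary instance $I$. If no single agent is simultaneously a maximizer of both $a_{\cdot,1}$ and $a_{\cdot,3}$, then allocating each item to one of its maximizers already produces a connected allocation (the only disconnected bundle on $P_3$ is $\{v_1,v_3\}$), so the ratio is~$1$. Otherwise relabel so that agent~$1$ attains both these maxima, and set $\alpha := a_{1,1}$, $\gamma := a_{1,3}$, $\beta := \max_i a_{i,2}$, $\alpha' := \max_{i \neq 1} a_{i,1}$, and $\gamma' := \max_{i \neq 1} a_{i,3}$; let $k$ be a maximizer of $a_{\cdot,2}$. If $k=1$, allocating everything to agent~$1$ matches the unconstrained optimum. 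Otherwise I will exhibit three specific connected allocations whose welfares are $\alpha+\beta+\gamma'$ (give $v_1$ to agent~$1$, $v_2$ to $k$, and $v_3$ to a best runner-up for that item, possibly $k$ itself), $\alpha'+\beta+\gamma$ (the mirror image), and $\alpha+a_{1,2}+\gamma = 1$ (give everything to agent~$1$). Setting $U := \alpha+\beta+\gamma = \utilOPT(I)$, the target inequality $U \le (4/3)\max(W_1,W_2,W_3)$ reduces, upon rearranging, to showing $\min(\alpha-\alpha',\,\gamma-\gamma',\,\beta-a_{1,2}) \le U/4$.

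The crux of the proof, and the step I expect to be most delicate, is establishing this last inequality; the key point is that it requires exploiting the normalizations of \emph{two} different agents. I will argue by contradiction. Suppose all three gaps strictly exceed $U/4$. Agent~$1$'s normalization gives $a_{1,2} = 1-\alpha-\gamma$, so $\beta-a_{1,2} = U-1$ and hence $U > 4/3$. Summing the other two gap assumptions gives $(\alpha+\gamma)-(\alpha'+\gamma') > U/2$, and combining with $\alpha+\gamma \le 1$ (again agent~$1$'s normalization) yields $\alpha'+\gamma' < 1-U/2$. On the other hand, agent~$k$'s normalization gives $a_{k,1}+a_{k,3} = 1-\beta$, and since $k \neq 1$ we have $a_{k,1} \le \alpha'$ and $a_{k,3} \le \gamma'$, so $\alpha'+\gamma' \ge 1-\beta$. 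Chaining the two bounds on $\alpha'+\gamma'$ forces $\beta > U/2$, whereas $\alpha+\gamma > U/2$ (immediate from the first two gap assumptions) combined with $U = \alpha+\beta+\gamma$ forces $\beta < U/2$, the desired contradiction. The subtle point is that agent~$1$'s normalization alone only yields $U \le 2$ and is too weak to close the argument; it is the $v_2$-maximizer's normalization that provides the necessary lower bound on $\alpha'+\gamma'$.
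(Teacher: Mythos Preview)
Your proof is correct. Both the lower-bound instance and the upper-bound argument are sound; in particular, the contradiction step using the normalizations of agent~$1$ and agent~$k$ is carried out cleanly, and the three candidate connected allocations are all valid on $P_3$.

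Your approach, however, differs substantively from the paper's. The paper observes that any non-connected utilitarian-optimal allocation on the $3$-vertex path necessarily involves only two agents (one receiving $\{v_1,v_3\}$ and another receiving $\{v_2\}$), and then simply invokes the two-agent tree result (\Cref{thm:util:2-agent:tree}) as a black box to obtain a connected allocation between those two agents with welfare ratio at most $4/3$. Your argument is instead fully self-contained: you explicitly exhibit the three candidate connected allocations and prove the key inequality $\min(\alpha-\alpha',\gamma-\gamma',\beta-a_{1,2})\le U/4$ from scratch via the two-normalization contradiction. The paper's route is shorter and more modular but depends on the (considerably harder) tree theorem; your route is more elementary and stands on its own, at the cost of a small amount of additional casework.
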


\begin{proof}
The case $n = 2$ follows immediately from \Cref{thm:util:2-agent:tree}, so let $n\ge 3$.

For the lower bound, consider the instance~$I$ where agent~$1$ values the first and third items at $1/2$ each and the second item at~$0$, while agents~$2,\dots,n$ value the second item at~$1$ and the first and third items at~$0$.
It holds that $\utilOPT(I) = 2$ and $\max_{\alloc \in \connectedAllocsOf{I}} \utilSW(\alloc) = 3/2$, and so $\utilPrice(G, n) \ge 4/3$.

For the upper bound, consider any instance~$I$, and let $\alloc$ be an allocation yielding the optimal utilitarian welfare.
If $\alloc$ is connected, we are done; assume therefore that $\alloc$ is not connected.
Without loss of generality, suppose that $\alloc$ gives the first and third items to agent~$1$ and the second item to agent~$2$.
By \Cref{thm:util:2-agent:tree}, there exists a connected allocation $\alloc'$ between agents $1$ and $2$ such that $\frac{\utilSW(\alloc)}{\utilSW(\alloc')} = \frac{\utilOPT(I)}{\utilSW(\alloc')} \le \frac{4}{3}$.
Hence, $\utilPrice(G, n) \le 4/3$.
\end{proof}

If $4\le m\le n+1$, we provide an upper bound of $m-2$ for any graph.
Since $m-2 \le n-1$, this gives a stronger bound than $n - \frac{1}{nm}$ for this special case.

\begin{proposition}
Let $G$ be any graph and $4\le m\le n+1$.    
Then, $\utilPrice(G, n) \le m-2$.
\end{proposition}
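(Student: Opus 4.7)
The plan is to split into cases based on $\utilOPT(I)$. If $\utilOPT(I)\le m-2$, then the connected allocation that assigns every item to a single agent has utilitarian welfare $1$ by normalization, so the welfare ratio is at most $\utilOPT(I)\le m-2$ and we are done. The rest of the argument focuses on the case $\utilOPT(I)>m-2$.

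The first step is a structural observation about any utilitarian-optimal allocation $\alloc^{*}=(M_{1}^{*},\dots,M_{n}^{*})$. Since $u_{i}(M_{i}^{*})\le u_{i}(M)=1$, we have $\utilOPT(I)=\sum_{i}u_{i}(M_{i}^{*})\le |\{i\in N:M_{i}^{*}\ne\emptyset\}|$. Combining this with $\utilOPT(I)>m-2$ and the fact that the right-hand side is an integer forces at least $m-1$ bundles in $\alloc^{*}$ to be non-empty. Since the $m$ items are distributed among at least $m-1$ non-empty bundles, a pigeonhole argument then shows that every bundle in $\alloc^{*}$ has size at most~$2$, and at most one bundle has size exactly~$2$.

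The second step is to construct a connected allocation starting from $\alloc^{*}$. If every bundle in $\alloc^{*}$ is a singleton or empty, or if the unique size-$2$ bundle consists of two adjacent vertices, then $\alloc^{*}$ is already connected and the welfare ratio is $1$. Otherwise, let $M_{a}^{*}=\{g_{1},g_{2}\}$ be the unique size-$2$ bundle, with $g_{1}$ and $g_{2}$ non-adjacent in $G$. I would move $g_{2}$ out of $M_{a}^{*}$ in one of two ways, depending on whether empty bundles exist. If $n\ge m$, there is at least one empty bundle, so I reassign $g_{2}$ there (a singleton is trivially connected). If $n=m-1$, there are no empty bundles, but since $G$ is connected and $g_{1},g_{2}$ are non-adjacent, $g_{2}$ has a neighbor $g_{3}\in M\setminus\{g_{1},g_{2}\}$ (this set is non-empty because $m\ge 4$); the singleton bundle containing $g_{3}$ can then absorb $g_{2}$ to form the connected pair $\{g_{2},g_{3}\}$. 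In both sub-cases, the resulting allocation is connected, and the welfare decreases by at most $u_{a}(\{g_{2}\})=\max_{i}u_{i}(\{g_{2}\})\le 1$ (where the equality uses that $\alloc^{*}$ assigns each item to a best valuer).

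The last step is a routine inequality check: the new connected allocation has welfare at least $\utilOPT(I)-1$, so the welfare ratio is at most $\utilOPT(I)/(\utilOPT(I)-1)$. Requiring this to be at most $m-2$ is equivalent to $\utilOPT(I)\ge (m-2)/(m-3)$, which holds because $\utilOPT(I)>m-2\ge (m-2)/(m-3)$ for $m\ge 4$. The main subtlety I anticipate is in the structural pigeonhole step, together with verifying that the two reassignment rules always apply precisely when needed (empty-bundle case when $n\ge m$, neighboring-singleton case when $n=m-1$), so that connectivity is restored without any additional loss in welfare beyond the single term $u_{a}(\{g_{2}\})$.
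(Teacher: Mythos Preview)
Your proof is correct, but it takes a genuinely different route from the paper's.

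The paper splits on how many agents receive items in a utilitarian-optimal allocation~$\alloc$: if at most two, then $\utilOPT(I)\le 2\le m-2$ and the all-to-one allocation suffices; if at least three, then every non-empty bundle has size at most $m-2$, so retaining each agent's single favorite item from her bundle (and extending arbitrarily to a connected allocation) yields welfare at least $\utilOPT(I)/(m-2)$. This is a uniform ``shrink each bundle to a singleton'' argument that never looks at the graph structure beyond connectedness.

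Your argument instead thresholds on $\utilOPT(I)$ directly and, in the interesting case $\utilOPT(I)>m-2$, uses a pigeonhole observation to show that $\alloc^{*}$ has at most one non-singleton bundle, of size exactly~$2$. You then repair connectivity by relocating a single item, losing at most~$1$ in welfare, and finish with the bound $\utilOPT(I)/(\utilOPT(I)-1)\le m-2$. This exploits the graph (finding a neighbor of $g_{2}$ in the $n=m-1$ sub-case) and yields a much sharper ratio when $\utilOPT(I)$ is large, at the cost of extra case analysis. The paper's proof is shorter and more uniform; yours extracts more structure and would give better quantitative control if one cared about the dependence on $\utilOPT(I)$.
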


\begin{proof}
Consider any instance~$I$, and let $\alloc = (M_1,\dots,M_n)$ be an allocation yielding the optimal utilitarian welfare.
If $\alloc$ gives items to at most two agents, then $\utilSW(\alloc) \le 2$.
Since a connected allocation that gives all items to the same agent yields a utilitarian welfare of~$1$, we have $\utilPrice(G, n) \le 2 \le m-2$.

Next, assume that $\alloc$ gives items to a set $N'\subseteq N$ of at least three agents.
This means that each agent receives at most $m-2$ items in $\alloc$.
By giving each agent $i\in N'$ her most preferred item in~$M_i$, such a (partial) allocation gives the agent a utility of at least
$\frac{u_i(M_i)}{|M_i|} \ge \frac{u_i(M_i)}{m-2}$.
We then extend the partial allocation to a complete, connected allocation in an arbitrary way; this is possible since $G$ is connected.
The resulting allocation has utilitarian welfare at least 
\begin{align*}
\sum_{i\in N'}\frac{u_i(M_i)}{m-2} = \frac{1}{m-2}\sum_{i\in N'} u_i(M_i) = \frac{1}{m-2}\cdot \utilSW(\alloc) = \frac{1}{m-2}\cdot \utilOPT(I).
\end{align*}
Therefore, we again have $\utilPrice(G, n) \le m-2$.
\end{proof}

We now turn to lower bounds, beginning with stars.
In our construction, one agent only values the center vertex, while each remaining agent has a distinct set of leaf vertices that she values equally and the sizes of these sets differ by at most~$1$.

\begin{theorem}
\label{thm:utilstarn}
Let $G$ be a star, $n\ge 2$, and $m=c(n-1)+d+1$, where $c\in \mathbb{Z}_{\ge 1}$ and $d\in \{0,1,\dots,n-2\}$.
Then, $\utilPrice(G, n)\geq \frac{nc(c+1)}{c^2+2nc+n-c-m}$. 
This is $\Omega(n)$ for large $c$ (i.e., large $m$).
\end{theorem}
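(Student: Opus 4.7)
The plan is to construct an explicit valuation profile realizing the claimed ratio. Writing $m-1 = c(n-1)+d$ for the number of leaves, I would partition the leaves into pairwise disjoint sets $S_2,\dots,S_n$ with $|S_2| = \dots = |S_{d+1}| = c+1$ and $|S_{d+2}| = \dots = |S_n| = c$. Agent~$1$ values the center at $1$ and every leaf at $0$; for each $j\ge 2$, agent~$j$ values each leaf in $S_j$ at $1/|S_j|$ and every other vertex at $0$. Since the unconstrained optimum assigns each vertex to the unique agent who values it positively (and each such agent's bundle then totals $1$), we have $\utilOPT(I) = n$.

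The key step is to analyze the connected allocations. Every connected subgraph of a star is either a single leaf, or the center together with an arbitrary set of leaves, so in any connected allocation exactly one agent owns the bundle containing the center and each other agent owns at most one leaf. I would then split into two cases: (A)~agent~$1$ owns the center, in which case she contributes utility $1$ and each of the remaining $n-1$ agents optimally takes one leaf from her own valued set $S_j$ (feasible since the $S_j$ are disjoint), with any leftover leaves absorbed into agent~$1$'s bundle at no cost; (B)~some agent $j^\star\ne 1$ owns the center, in which case agent~$1$'s utility drops to $0$ while agent~$j^\star$ attains at most $1$ by collecting $S_{j^\star}$ together with the center. A direct comparison shows Case~A beats Case~B by exactly $1/|S_{j^\star}|$, so
\[
\max_{\alloc\in\connectedAllocsOf{I}}\utilSW(\alloc) \;=\; 1 + \frac{d}{c+1} + \frac{n-1-d}{c}.
\]

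Putting this fraction over the common denominator $c(c+1)$ and substituting $d = m - c(n-1) - 1$ reduces its numerator to $c^2+2cn+n-c-m$, yielding the required ratio $\frac{nc(c+1)}{c^2+2cn+n-c-m}$. For the asymptotic statement, $d\le n-2$ implies $m\le cn$, so the denominator is bounded by $c^2+2cn+n$; taking $c\ge n$ makes this $O(c^2)$, so the ratio is $\Omega(n)$.

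The only delicate part is the case analysis: one must verify that handing the center to an agent $j^\star \ne 1$ gains at most $1 - 1/|S_{j^\star}|$ for that agent but destroys agent~$1$'s full contribution of~$1$, a net loss regardless of which $j^\star$ is chosen. Once this domination of Case~A is established, the remainder is routine algebra.
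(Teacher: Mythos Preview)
Your construction and analysis are essentially identical to the paper's proof; in fact your case split (A) versus (B) is more thorough than the paper, which simply asserts the value of the optimal connected welfare without justification. One small slip: the claim that $d\le n-2$ implies $m\le cn$ is false in general (e.g., $c=1$, $d=n-2$ gives $m=2n-2>n=cn$), but this is harmless since the denominator bound $c^2+2cn+n$ follows trivially from $c+m\ge 0$ anyway, so your $\Omega(n)$ conclusion stands.
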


\begin{proof}
Assume that in the instance~$I$, each item is positively valued by at most one agent, which means that $\utilOPT(I)=n$. 
Let one agent value the center vertex at~$1$, let $d$ agents value $c+1$ of the leaf vertices at $\frac{1}{c+1}$ each, and let $n-1-d$ agents value $c$ of the leaf vertices at $\frac{1}{c}$ each.
This gives
\begin{align*}
\max_{\alloc \in \connectedAllocsOf{I}} \utilSW(\alloc)=1+\frac{d}{c+1}+\frac{n-d-1}{c}
=\frac{c^2+2nc+n-c-m}{c(c+1)},
\end{align*}
where for the last equality we apply the definition of~$m$ from the theorem statement.
Hence, the utilitarian PoC is at least
\[
\frac{nc(c+1)}{c^2+2nc+n-c-m},
\]
as desired.
\end{proof}

We next provide a lower bound for paths, by constructing an instance where the agents value items in cyclic order.

\begin{theorem}
\label{thm: pathuswlb}
Let $G$ be a path, $n\ge 2$, and $m=cn+d$, where $c\in \mathbb{Z}_{\ge 1}$ and $d\in \{0,1,\dots,n-1\}$.
Then, $\utilPrice(G, n)\geq \frac{nc}{n+c-1}$. This is $\Omega(n)$ for large $c$ (i.e., large $m$).
\end{theorem}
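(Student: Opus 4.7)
The plan is to exhibit an explicit instance attaining the claimed ratio through a cyclic valuation pattern on the path. Label the vertices $v_1, v_2, \ldots, v_m$ in order, and for each agent $i \in [n]$ set
\[
u_i(v_j) = \begin{cases} 1/c & \text{if } j \in \{i,\, i+n,\, i+2n,\, \ldots,\, i+(c-1)n\}, \\ 0 & \text{otherwise.} \end{cases}
\]
Each agent then positively values exactly $c$ vertices, equally spaced $n$ apart, and her utilities sum to $1$; every positively valued vertex has a unique valuer, and the last $d$ vertices $v_{cn+1}, \ldots, v_m$ (if any) are valued by nobody. Assigning each valued vertex to its unique valuer immediately gives $\utilOPT(I) = n$.

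The next step is to upper-bound the connected welfare. A connected allocation on the path partitions the vertices into contiguous subpaths of lengths $\ell_1, \ldots, \ell_n$ summing to $m = cn + d$. Since agent $i$'s valued vertices form an arithmetic progression with common difference $n$, any subpath of length $\ell$ contains at most $\lceil \ell/n \rceil$ of them, so $u_i(M_i) \le \lceil \ell_i/n \rceil / c$ when $\ell_i \ge 1$, and $0$ otherwise. Summing over non-empty bundles and using $\lceil x/n \rceil \le (x + n - 1)/n$ together with the bound of $n$ on the number of non-empty bundles yields
\[
\sum_{i\,:\,\ell_i \ge 1} \lceil \ell_i/n \rceil \;\le\; \frac{m + n(n-1)}{n} \;=\; c + \frac{d}{n} + (n-1).
\]
Because $d \le n-1$ and the left-hand side is an integer, this bound sharpens to $c + n - 1$, so $\utilSW(\alloc) \le (c + n - 1)/c$. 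Combining with $\utilOPT(I) = n$ gives $\utilPrice(G, n) \ge nc/(n + c - 1) = n / \bigl(1 + (n-1)/c\bigr)$, which tends to $n$ as $c \to \infty$, confirming the $\Omega(n)$ claim for large $m$.

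The step requiring the most care is the ceiling-sum estimate: the purely linear relaxation delivers only the fractional quantity $c + d/n + (n-1)$, and sharpening to the integer value $c + n - 1$ relies on combining $d \le n-1$ with the integrality of $\sum_{i : \ell_i \ge 1} \lceil \ell_i / n \rceil$. Everything else is a direct structural observation about subpaths of a path, or a routine calculation.
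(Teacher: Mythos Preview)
Your proof is correct and follows essentially the same approach as the paper: a cyclic valuation pattern on the path together with the observation that consecutive valued items of any agent are $n$ apart, so a contiguous bundle of length $\ell$ yields at most $\lceil \ell/n\rceil$ valued items. The only cosmetic difference is that the paper values all $m$ items (letting the first $d$ agents own $c+1$ items at value $\tfrac{1}{c+1}$) and phrases the count as ``at most $n+c-1$ items can go to agents who value them,'' whereas you leave the last $d$ vertices unvalued so every agent owns exactly $c$ items and you reach the same bound via the ceiling-sum plus integrality; the underlying arithmetic is identical.
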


\begin{proof}
Assume that in the instance~$I$, each item is positively valued by at most one agent, which means that $\utilOPT(I)=n$. 
Let the agents value the items on the path in the order $(1,2,\dots,n)^*$, where each number in the repeating sequence represents the agent who positively values the corresponding item, and each agent has the same value for all of her valued items.

Since $m = cn+d$, agents $1,2,\dots,d$ positively value $c+1$ items at $\frac{1}{c+1}$ each, while agents $d+1,\dots,n$ positively value $c$ items at $\frac{1}{c}$ each. 
Note that in a connected allocation, in order for an agent to receive each valued item beyond her first valued item, she must receive at least $n$ more items.
Hence, in any connected allocation, at most $n + (c-1)$ items can be allocated to agents who value them---indeed, if at least $n + c$ items are allocated to agents who value them, then by the previous sentence, the total number of items must be at least $n + nc > d + nc = m$, a contradiction.
Since each item has value at most $\frac{1}{c}$, the utilitarian welfare of any connected allocation is at most $\frac{1}{c}(n+c-1)$.
It follows that the utilitarian PoC is at least $\frac{nc}{n+c-1}$.
\end{proof}

Finally, we present a lower bound for cycles.

\begin{theorem}
\label{thm:uswanyncycles}
Let $G$ be a cycle, $n\ge 2$, and $m=cn+d$, where $c\in \mathbb{Z}_{\ge 1}$ and $d\in \{0,1,\dots,n-1\}$.
Then, $\utilPrice(G, n)\geq \frac{nc}{2n+c-2}$. 
This is $\Omega(n)$ for large $c$ (i.e., large $m$).
\end{theorem}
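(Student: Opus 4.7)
The plan is to reuse the cyclic construction from \Cref{thm: pathuswlb} on the cycle $G$: arrange the $m = cn + d$ items so that going around the cycle each item is positively valued by exactly one agent in the cyclic pattern $(1, 2, \dots, n)^*$, with each agent $i \in \{1, \dots, d\}$ positively valuing her $c+1$ items at $1/(c+1)$ each and each agent $i \in \{d+1, \dots, n\}$ positively valuing her $c$ items at $1/c$ each. This immediately gives $\utilOPT(I) = n$, so the task reduces to establishing $\max_{\alloc \in \connectedAllocsOf{I}} \utilSW(\alloc) \le (2n + c - 2)/c$, from which the stated bound on $\utilPrice(G, n)$ follows.

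For the upper bound, I would argue as follows. In any connected allocation on the cycle, each bundle is a (possibly empty) contiguous arc; let $\ell_i$ be the length of agent~$i$'s arc and $K_i$ the number of her positively valued items inside it, so $\sum_i \ell_i = m$ and the utilitarian welfare is at most $\sum_i K_i / c$ since each positively valued item is worth at most $1/c$. Around the cycle, consecutive valued items of agent~$i$ are $n$ positions apart for $i > d$, whereas for $i \le d$ they are separated by $n$ at $c$ of the gaps together with a single ``short'' gap of length $d$. Because any arc can traverse the short gap of a given agent at most once, this yields the piecewise lower bounds $\ell_i \ge (K_i - 1) n + 1$ for $i > d$ with $K_i \ge 2$, $\ell_i \ge (K_i - 2) n + d + 1$ for $i \le d$ with $K_i \ge 2$, and $\ell_i \ge 1$ when $K_i = 1$.

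Setting $a_1 = |\{i > d : K_i \ge 2\}|$, $a_2 = |\{i \le d : K_i \ge 2\}|$, and $b = |\{i : K_i = 1\}|$ (so that $a_1 \le n - d$, $a_2 \le d$, and $a_1 + a_2 + b \le n$) and substituting the lower bounds into $\sum_i \ell_i \le cn + d$ gives, after rearranging,
\[
n K \;\le\; cn + d + (n - 1)\, a_1 + (2n - d - 1)\, a_2 + (n - 1)\, b,
\]
where $K = \sum_i K_i$. Since $2n - d - 1 \ge n - 1$, the right-hand side is maximized at $a_2 = d$ and $a_1 + b = n - d$, which gives $K \le c + n - 1 + d(n + 1 - d)/n$. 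A routine check shows $d(n+1-d) \le n(n-1)$ for every integer $n \ge 2$ and $d \in \{0, 1, \dots, n - 1\}$ (the quadratic peaks near $d = (n+1)/2$), and hence $K \le 2n + c - 2$.

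Dividing by $c$ bounds the welfare of any connected allocation by $(2n + c - 2)/c$, so $\utilPrice(G, n) \ge nc / (2n + c - 2)$; sending $c \to \infty$ with $n$ fixed makes this ratio approach $n$, proving the $\Omega(n)$ claim. The main obstacle is deriving the uniform bound $K \le 2n + c - 2$: because each agent in $\{1, \dots, d\}$ has one spacing of size $d < n$ around the cycle (due to the wrap-around), the clean path-style length inequality from \Cref{thm: pathuswlb} no longer applies uniformly, and the argument must carefully track the single ``short'' spacing per such agent in order to obtain the correct constant rather than the smaller $n + c - 1$ from the path case.
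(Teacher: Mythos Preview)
Your proof is correct and uses the same construction and overall strategy as the paper: place the valued items in cyclic order $(1,2,\dots,n)^*$, observe that $\utilOPT(I)=n$, and bound the number $K$ of items allocated to agents who value them by showing $K\le 2n+c-2$, whence the welfare of any connected allocation is at most $(2n+c-2)/c$.

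The only difference is in how the bound on $K$ is obtained. You carry out a finer case split, tracking separately agents $i\le d$ (who have one ``short'' wrap-around gap of length $d$) and agents $i>d$ (all of whose gaps have length at least $n$), and then optimize over $a_1,a_2,b$; this leads to $K\le c+n-1+d(n+1-d)/n$ and the auxiliary inequality $d(n+1-d)\le n(n-1)$. The paper instead uses the uniform (and weaker) lower bound $\ell_i\ge (K_i-2)n+2$ for every agent with $K_i\ge 2$, regardless of whether a short gap exists; since $(K_i-1)n+1\ge (K_i-2)n+2$ and $(K_i-2)n+d+1\ge (K_i-2)n+2$ whenever $d\ge 1$, this subsumes both of your cases. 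Summing directly then gives $nK\le m+2n(n-1)$, so $K\le c+2(n-1)+d/n$ and hence $K\le 2n+c-2$ by integrality. Your route is perfectly valid but does more bookkeeping than necessary; the paper's coarser bound already suffices for the stated constant.
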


\begin{proof}
We use a similar construction and argument as in \Cref{thm: pathuswlb} for paths.
The difference is that, for cycles, we only argue that in order for an agent to receive each valued item beyond her first \emph{two} valued items, she must receive at least $n$ more items.
This weaker argument is due to the fact that an agent may receive two valued items without receiving $n$ more items if one of those two items is near the beginning of the repeating sequence $(1,2,\dots,n)^*$ while the other is near the end.
Hence, in any connected allocation, at most $2n+(c-2)$ items can be allocated to agents who value them---indeed, if at least $2n + (c-1)$ items are allocated to agents who value them, then by the earlier argument, the total number of items must be at least $2n + n(c-1) = n + nc > d + nc = m$, a contradiction.
Since each item has value at most $\frac{1}{c}$, the utilitarian welfare of any connected allocation is at most $\frac{1}{c}(2n+c-2)$.
It follows that the utilitarian PoC is at least $\frac{nc}{2n+c-2}$.
\end{proof}

\section{Conclusion and Future Work}

In this work, we have introduced the notions of egalitarian and utilitarian price of connectivity (PoC) for the allocation of indivisible items on a graph. 
These notions quantify the worst-case welfare loss due to connectivity constraints, which require each agent to receive a connected subgraph of items. 
We investigated the PoC for various classes of graphs---including graphs with vertex connectivity~$1$ and~$2$, complete bipartite graphs, trees, paths, stars, cycles, as well as complete graphs with a matching removed---and presented tight or asymptotically tight bounds.
Interestingly, in the case of two agents, while the egalitarian PoC and the utilitarian PoC are within a factor of~$2$ for paths, cycles, complete bipartite graphs, and complete graphs with a matching removed, they differ by a factor linear in the number of items for stars.

There is still room to extend the current results further. 
For the case of two agents, while we have derived bounds for classes of dense and sparse graphs, a complete characterization of the egalitarian PoC for arbitrary graphs is not yet known, and the utilitarian PoC for graphs with connectivity~$1$ and $2$ also remains elusive. 
Furthermore, it would be interesting to generalize the existing egalitarian PoC result for trees and three agents (\Cref{thm:egal:tree}) to all graphs with connectivity~$1$, as well as to complement this result with utilitarian PoC bounds.

Finally, as discussed in the survey by \citet{Suksompong21}, in addition to connectivity constraints, other types of constraints have been studied in the fair division literature. 
For instance, budget constraints require the total ``cost'' of items allocated to each agent to be within a given budget~\citep{WuLiGa25}, while matroid constraints restrict the possible bundles received by each agent to the ``independent sets'' of a matroid~\citep{Dror23}.
Investigating the price of such constraints with respect to both egalitarian and utilitarian welfare is a natural direction for future research.

\section*{Acknowledgments}

This work was partially supported by the ARC Laureate Project FL200100204 on ``Trustworthy AI'', by the Singapore Ministry of Education under grant number MOE-T2EP20221-0001, and by an NUS Start-up Grant.
We would like to thank the anonymous IJCAI 2024 and Discrete Applied Mathematics reviewers for their valuable comments.

\bibliographystyle{plainnat}
\bibliography{bibliography}

\end{document}